\newtheorem{theorem}{Theorem}
\newtheorem{lemma}{Lemma}
\newtheorem{proposition}{Proposition}
\newcommand{\rr}[1]{{\normalfont\textrm{#1}}}
\newcommand{\ud}{\mathrm{d}}
\newcommand{\cc}[1]{{\mathcal{#1}}}
\newcommand{\bb}[1]{{\mathbb{#1}}}
\newlength{\pecettawidth}
\begin{document}
\title{Linear Boltzmann dynamics in a strip with large reflective
obstacles: stationary state and residence time}

\author{Alessandro Ciallella and Emilio N.M.\ Cirillo}
\email{alessandro.ciallella@sbai.uniroma1.it, emilio.cirillo@uniroma1.it}
\affiliation{Dipartimento di Scienze di Base e Applicate per l'Ingegneria, 
             Sapienza Universit\`a di Roma, 
             via A.\ Scarpa 16, I--00161, Roma, Italy.}



\begin{abstract}
The presence of obstacles modifies the way in which 
particles diffuse. In cells, for instance, it is observed that, due to the 
presence of macromolecules playing the role of obstacles, 
the mean-square displacement of biomolecules scales as a power law 
with exponent smaller than one. On the other hand, different 
situations in grain and pedestrian dynamics in which the presence of
an obstacle accelerate the dynamics are known.
We focus on the time, called the residence time, needed by particles to 
cross a strip assuming that the dynamics inside the strip follows the 
linear Boltzmann dynamics. 
We find that the residence time is not monotonic with respect 
to the size and the location of the obstacles, since the obstacle 
can force those particles that eventually cross the strip to 
spend a smaller time in the strip itself.
We focus on the case of a rectangular strip with two open sides and two 
reflective sides and we consider reflective obstacles into the strip.
We prove that the stationary state of the linear Boltzmann dynamics,
in the diffusive regime,
converges to the solution of the Laplace equation with Dirichlet boundary 
conditions on the open sides and homogeneous Neumann boundary conditions on the other sides and on the obstacle boundaries.
\end{abstract}


\keywords{Residence time, linear Boltzmann, Kinetic Theory, Monte Carlo methods}


\maketitle

\section{Introduction} \label{ENMC:sec:introduction}
\par\noindent
Particle based studies of agent behavior can reproduce 
realistic scenarios. Many different real situations, 
such as grains discharging from a silo, 
traffic flow, and pedestrian dynamics, 
can be studied via particle based modelling. 
The focus, in this paper, is on the effect of obstacles 
on particle flows \cite{CPamm2017}. 

It is well known that the mean square distance 
traveled by particles undergoing Brownian motion 
is proportional to time. However, many experimental 
measures of molecular diffusion in cells show a sublinear 
behavior. This phenomenon, called \emph{anomalous diffusion},
is in some cases explained as an effect of the presence of macromolecules 
playing the role of obstacles for diffusing smaller molecules
\cite{Sbj1994,HFrpp2013,MHSbj2017,ESMCBjcp2014}

While in the case of anomalous diffusion the obstacles induce
a sort of slowing down of the dynamics, on the contrary
in many other different contexts it has been noticed 
and exploited the fact that the presence of an obstacle 
can surprisingly accelerate the dynamics 
improving, in particular, 
the flux rate of particles passing through a bottleneck.

In granular flows, for instance when grains discharge from a silo, 
it happens that the out-coming flow can be dramatically reduced
due to clogging at the exit. In \cite{TLPprl2001}
it was proposed that this phenomenon is caused by the formation of arches.
In the case of spherical grains it has been demonstrated that 
the presence of clogging in a three--dimensional silo 
critically depends on the ratio between 
the outlet size and the diameter of the particles \cite{ZGMPPpre2005}.  A solution that is implemented to improve the granular flow is 
to place an obstacle above the silo exit
\cite{AMAGTOKpre2012,ZJGLAMprl2011} 
which prevents arches to be formed or to become stable. 

A similar phenomenon is observed in pedestrian flows
(see, e.g., \cite{Hrmp2001,BDsiamr2011,HMFBepBpd2001,HFMV2011}
for reviews of models and related problems).
In the case of pedestrian exiting a room under panic
clogging at the door can be reduced by means of suitably 
positioned obstacles, see \cite[Section~6.3]{ABCKsjap2016}
and \cite{HFVn2000,HBJWts2005}. It has also been noticed that 
a correct positioning can reduce injuries under 
panicked escape from a room thanks to the so called ``waiting--room''
effect \cite{EDLR2003}: pedestrians slow down and accumulate close to 
the obstacle so that the exit is decongested. 
We mention, here, that also the possibility of clustering far from 
the exit due to individual cooperation has been object of study 
in \cite{CMpA2013}.

The a priori unexpected phenomena discussed above are  
a sort of inversion of the Braess' paradox \cite{BNWts2005,Harfm2003}
stating that adding a road link to a road network can cause 
cars to take longer to cross the network. Indeed, in the examples 
discussed above it seems that adding barriers results in a decrease 
of the time that particles need to cross a region of space. 

This is precisely the issue we address in this paper.
Inspired by \cite{CKMSpre2016, CKMSSpA2016}, 
we consider particles entering an horizontal strip through the 
left boundary and eventually exiting through the right 
one \cite{FPvSpre2017}. 
We compute the typical time needed to cross the 
strip, called the \emph{residence time}, and 
analyze its dependence on the size and position of a reflecting 
 obstacle positioned inside the strip. 
Surprisingly, we find not monotonic behaviors of the residence time 
as a function of the side lengths of the obstacle and the 
coordinate of its center. In particular, 
for suitably choices of the obstacles the residence time in presence 
of the barrier is smaller than the one measured for the empty 
strip. In other words, our results show that placing a suitable obstacle
in the strip allows to select those particles that cross the strip 
in a shorter time.  We also observe that the same obstacle, placed in different position, can either increase or decrease the residence time.

Inside the strip we consider particle moving according to the Markov process solving the linear Boltzmann equation.
We stress that this is the first 
study of  residence times by means of the methods of Kinetic Theory.
In this case we calculate the residence time by directly simulating the motion of single particles by a Monte Carlo method.
This dynamics should be consistent with the Lorentz process 
at appropriate regimes.  
The Lorentz gas model is a system of 
non--interacting particles moving in a region where static small disks are 
distributed according to a Poisson probability measure. 
This is a classical model for finite velocity random motions. 
Particles perform uniform linear motion up to the contact with a disk 
where they are elastically 
reflected. 

We study the system in a 
diffusive regime and we show that there exists a unique stationary solution 
which converges to the solution of a Laplace problem with mixed boundary 
conditions: Dirichlet boundary conditions on the vertical sides and homogeneous Neumann on the rest. 
We prove the convergence and check it numerically.
Moreover, by constructing numerically the stationary profiles we qualitatively verify that the overall flux in presence of obstacles is decreasing, as expected by physical intuition. This holds even in those cases in which the residence time is smaller with respect to the empty strip case.
 
In Section \ref{s:modello} we introduce the model under investigation 
and  we state our main theoretical results. 
In Section \ref{s:simul} we propose a Monte Carlo algorithm that we use to construct 
the stationary state of the linear Boltzmann model and we discuss 
the stationary profile in presence of large fixed obstacles.
In Section \ref{s:residence} we discuss our results on residence time. 
Finally, Section \ref{s:dimo} contains the proofs of the results we state 
in Section \ref{s:modello}.

\section{Model and results}
\label{s:modello}
We consider a system of light particles moving in the two--dimensional space. We choose as the domain  a subset $\Omega$ of the finite strip 
$(0,L_1)\times(0,L_2)\subset\bb{R}^2$. 
This strip has two open boundaries, that we think as the left side 
$\partial\Omega_L=\{0\}\times(0,L_2)$ and 
the right side $\partial\Omega_R=\{L_1\}\times(0,L_2)$. 
The strip is in contact on the left side and on the right side with two 
mass reservoirs at equilibrium with particle mass densities $\rho_L$ and 
$\rho_R$, respectively.
Particles traveling into $\Omega$ are instead  specularly reflected upon 
colliding with  the upper side $(0,L_1)\times\{L_2\}$ and lower 
side $(0,L_1)\times\{0\}$ of the strip.

We consider the case in which large fixed obstacles are placed in the strip so that the domain $\Omega$ is a connected set. These obstacles are convex
 sets with  smooth reflective
  boundaries.
We consider a generic   configuration of a finite number of obstacles with positive mutual distance and positive distance from the sides of the strip.
 In the sequel we will call $\partial\Omega_E$ the union of the obstacle boundaries and the upper and lower sides of the strip (see Figure \ref{fig:2.1_domain}).
Therefore, when a particle reaches $\partial\Omega_E$ it experiences a specular reflection.

\begin{figure}[ht!]
\centering
\definecolor{wqwqwq}{rgb}{0.3764705882352941,0.3764705882352941,0.3764705882352941}
\definecolor{qqqqff}{rgb}{0.,0.,1.}
\definecolor{ffqqqq}{rgb}{1.,0.,0.}
\definecolor{eqeqeq}{rgb}{0.8784313725490196,0.8784313725490196,0.8784313725490196}
\begin{tikzpicture}[line cap=round,line join=round,>=triangle 45,x=1.cm,y=1.cm]
\clip(-1.2,-2.5) rectangle (10.2,3.75);
\fill[line width=1.2pt,color=eqeqeq,fill=eqeqeq,fill opacity=0.10000000149011612] (0.,0.) -- (0.,3.) -- (9.,3.) -- (9.,0.) -- cycle;
\fill[fill=black,fill opacity=0.10000000149011612] (9.,3.) -- (10.,3.) -- (10.,0.) -- (9.,0.) -- cycle;
\fill[fill=black,fill opacity=0.10000000149011612] (0.,3.) -- (-1.,3.) -- (-1.,0.) -- (0.,0.) -- cycle;
\draw [line width=1.pt,dashed,color=gray] (0.,0.)-- (0.,3.);
\draw [line width=1.6pt,color=black] (0.,3.)-- (9.,3.);
\draw [line width=1.pt,dashed,color=gray] (9.,3.)-- (9.,0.);
\draw [line width=1.6pt,color=black] (9.,0.)-- (0.,0.);
\draw (9.,3.)-- (10.,3.);
\draw (10.,3.)-- (10.,0.);
\draw (10.,0.)-- (9.,0.);
\draw (0.,3.)-- (-1.,3.);
\draw (-1.,3.)-- (-1.,0.);
\draw (-1.,0.)-- (0.,0.);
\draw [rotate around={73.90918365114764:(7.541710094234184,1.8497761157344577)},line width=1.6pt,color=black,fill=qqqqff,fill opacity=0.10000000149011612] (7.541710094234184,1.8497761157344577) ellipse (0.5cm and 0.4cm);

\draw [rotate around={15:(3.3,1.2)},line width=1.6pt,color=black,fill=qqqqff,fill opacity=0.10000000149011612] (3.3,1.2) ellipse (0.3cm and 0.5cm);



\begin{scriptsize}
\draw[color=black] (0.40012375895827857,1.100793147323) node {$\partial \Omega_L$};
\draw[color=black] (8.6000375895827857,1.100698793147323) node {$\partial \Omega_R$};
\draw[color=black] (4.003540388846925,.358072978331086) node {$\partial \Omega_E$};
\draw[color=black] (5.0517845286029,2.07804558281278) node {$\Omega$};
\draw[color=black] (9.715404397846223,1.50089858490220897) node {$\rho_R$};
\draw[color=black] (-0.7009588291303488,1.5006705642026902) node {$\rho_L$};
\end{scriptsize}
\end{tikzpicture}
\vspace{-2.3cm}
\caption{Domain $\Omega$: strip with large fixed obstacles, where $\partial\Omega_L$ and $\partial\Omega_R$ are the vertical open boundaries and $\partial\Omega_E$ are reflective boundaries.}\label{fig:2.1_domain}
	\end{figure}
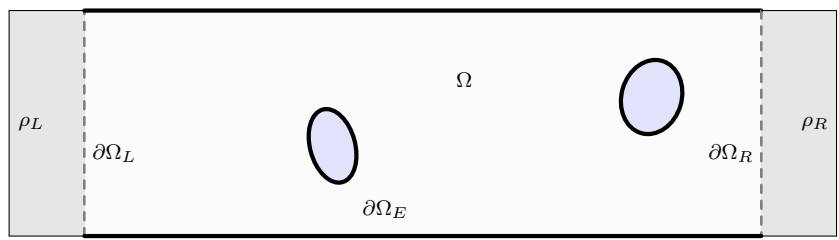

The linear
Boltzmann equation is a kinetic linear equation, 
 combining free transport and scattering off of a medium. 
This equation consists of two terms: a free transport term and a collision operator $\cc{L}$.

Let us consider  the phase space $\Omega\times S^1$, where  $S^1:=\{ v\in \bb{R}^2 : |v|=1 \}$. We will consider the  operator $\cc{L}$ with elastic collision kernel.
 The equation for  $(x,v)\in{\Omega}\times S^1$ and positive times $t$ reads
\begin{equation}\label{eq:boltz_lin}
(\partial_t + v\cdot \nabla_x) g(x,v,t) 
= \eta_\varepsilon \cc{L} g(x,v,t), \qquad x\in\Omega,\, v\in S^1,\,t\geq 0
\end{equation}
where, by the elastic collision rule $v'=v- 2(n\cdot v ) n$, the operator $\cc{L}$ is defined  for any $ f\in L^1(S^1)$ as
\begin{equation} \label{eq:op_L}
\cc{L} f(v) 
=\lambda \int_{-1}^1 [f(v') - f(v)] \,\mathrm{d} \delta.
\end{equation}
Here $n=n(\delta)$ is the outward pointing normal to a circular scatterer of radius $1$ at the point of collision among the particle with velocity $v$ and the scatterer. So $\delta=\sin \alpha$ if $\alpha$ is the angle of incidence between $v$ and $n$  that has $\delta$ as impact parameter (see Figure \ref{fig:bol_op}); $\lambda>0$ is a fixed parameter.
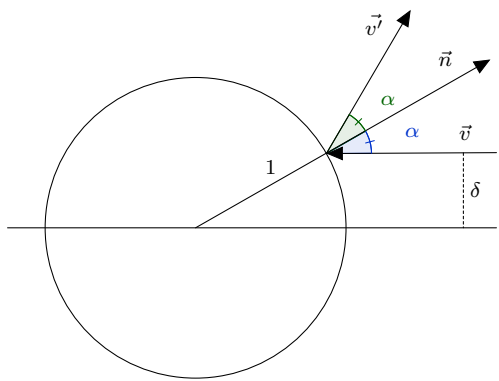
\begin{figure}[ht!]
\centering
\definecolor{xdxdff}{rgb}{0.49019607843137253,0.49019607843137253,1.}
\definecolor{qqwuqq}{rgb}{0.,0.39215686274509803,0.}
\definecolor{qqttcc}{rgb}{0.,0.2,0.8}
\begin{tikzpicture}[line cap=round,line join=round,>=triangle 45,x=1.cm,y=1.cm]
\clip(4.5,2.5) rectangle (11.,8.);
\draw [shift={(8.738913114785536,5.988018815219054)},color=qqttcc,fill=qqttcc,fill opacity=0.10000000149011612] (0,0) -- (0.30359952163475573:0.6) arc (0.30359952163475573:29.856561195427133:0.6) -- cycle;
\draw [shift={(8.738913114785536,5.988018815219054)},color=qqwuqq,fill=qqwuqq,fill opacity=0.10000000149011612] (0,0) -- (29.856561195427123:0.6) arc (29.856561195427123:59.6148453145519:0.6) -- cycle;
\draw(7.,5.) circle (2.cm);
\draw [->] (7.,5.) -- (10.92,7.24);
\draw [->] (11.,6.) -- (8.738913114785536,5.988018815219054);
\draw [shift={(8.738913114785536,5.988018815219054)},color=qqttcc] (0.30359952163475573:0.6) arc (0.30359952163475573:29.856561195427133:0.6);
\draw[color=qqttcc] (9.260317210598283,6.128509985296365) -- (9.376184787445561,6.159730245313545);
\draw [->] (8.738913114785536,5.988018815219054) -- (9.86,7.9);
\draw [shift={(8.738913114785536,5.988018815219054)},color=qqwuqq] (29.856561195427123:0.6) arc (29.856561195427123:59.6148453145519:0.6);
\draw[color=qqwuqq] (9.122508067193936,6.368091061623338) -- (9.207751389951358,6.452551560824289);
\draw [domain=4.5:11.] plot(\x,{(--25.-0.*\x)/5.});
\draw [dash pattern=on 1pt off 1pt] (10.560012353940715,5.997668566686561)-- (10.56,5.);
\begin{scriptsize}
\draw[color=black] (10.32,7.25) node {$\vec{n}$};
\draw[color=black] (10.58,6.27) node {$\vec{v}$};
\draw[color=qqttcc] (9.88,6.27) node {$\alpha$};
\draw[color=black] (9.38,7.71) node {$\vec{v'}$};
\draw[color=qqwuqq] (9.56,6.69) node {$\alpha$};
\draw[color=black] (8.,5.8) node {$1$};
\draw[color=black] (10.72,5.51) node {$\delta$};
\end{scriptsize}
\end{tikzpicture}
\caption{Elastic collision with a scatterers: impact parameter $\delta$ and angle of incidence $\alpha$.}
\label{fig:bol_op}
\end{figure}

We denote by $g_\varepsilon$ the solution of the equation   corresponding to the value of $\eta_\varepsilon$, that is a positive parameter that we let go to $+ \infty$ as $\varepsilon$ goes to $ 0^+$. 
The choice of the kernel and the related parameters 
will be discussed at the end of this Section.

 The equation describes the evolution of the density of particles
, moving of linear motion and having random collisions, against a circular scatterer, that preserve the energy. 
	The  time between two consecutive jumps in the velocities is distributed with exponential law with mean value 
$(\lambda \eta_\varepsilon )^{-1}$.
Since both random collisions and hits against the elastic boundaries preserve the energy, the modulus of the velocity of a particle moving in $\Omega$ is constant, so we consider it to  be equal to one. 

On the elastic boundary  $\partial \Omega_E$ 
we impose reflective boundary condition and on 
the open boundary $\partial\Omega_L \cup\partial\Omega_R $ 
we set  Dirichlet condition:
\begin{equation}
\label{eq:bound_cond_h_eps}
\begin{cases}
g(x,v',t)=g(x,v,t) \qquad\qquad& x\in \partial\Omega_E,\, v\cdot n<0,\,t\geq 0
\\ 
g(x,v,t)= f_B(x,v)	& x\in \partial \Omega_L \cup \partial \Omega_R,\, v\cdot n >0,\, t\geq 0,
\end{cases}
\end{equation}
where $f_B$ is defined in (\ref{eq:def_f_B}) below and $v'$ is given by the elastic collision rule $v'=v- 2(n\cdot v ) n$. Here we denote by $n=n(x)$  the inward pointing normal on the boundary $\partial \Omega$ of the domain.
We consider as
initial datum the function 
$f_0(x,v)\in L^\infty(\Omega\times S^1)$ and we define $f_B$
(not depending on $t$) as
\begin{equation}\label{eq:def_f_B}
f_B(x,v):=\begin{cases}
\rho_L /{2\pi}\,\qquad\quad x \in \partial\Omega_L , \, v\cdot n > 0\\
\rho_R/{2\pi}\,  \qquad\quad x \in \partial\Omega_R,\, v\cdot n > 0 ,
\end{cases}
\end{equation}
where $ {1}/{2\pi}$ is the  density of the uniform distribution on $S^1$. 

We are interested in the study of the stationary problem associated to (\ref{eq:boltz_lin})-(\ref{eq:bound_cond_h_eps}):
\begin{equation}
\label{eq:stationary problem}
\begin{cases}
v\cdot \nabla_x g^S(x,v)= \eta_\varepsilon \cc{L} g^S\qquad\quad& x\in \Omega,\,v\in S^1\\
g^S(x,v')=g^S(x,v)   &  x\in \partial\Omega_E,\, v\cdot n<0
\\ g^S(x,v)= f_B(x,v) & x\in\partial\Omega_L\cup \partial\Omega_R, \, v\cdot n>0.
\end{cases}
\end{equation}

We want to investigate the behavior of the solution $g^S_\varepsilon$ 
of \eqref{eq:stationary problem}
and 
prove its convergence to the stationary solution of the diffusion problem in ${\Omega}$ with mixed boundary conditions given 
by
\begin{equation}
\label{eq:Lapl_mixed_omega1}
\begin{cases}
 \Delta \rho(x) = 0 \qquad&\, x\in\Omega
\\ \rho(x)=\rho_L\qquad \quad& x\in \partial \Omega_L  
\\ \rho(x)=\rho_R\qquad& x\in \partial\Omega_R
\\ \partial_n \rho (x) =0 \qquad & x \in \partial {\Omega}_E.
\end{cases}
\end{equation}

\begin{theorem}\label{th:ex!g_stat}
If $\varepsilon>0$ is sufficiently small
there exists a unique stationary solution $g^S_\varepsilon \in L^\infty ({\Omega} \times S^1)$ of (\ref{eq:stationary problem}).
\end{theorem}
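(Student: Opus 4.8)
The plan is to establish existence and uniqueness of the stationary solution $g^S_\varepsilon$ by realizing it as the invariant measure density of the underlying stochastic process and exploiting the contraction properties induced by the absorbing (Dirichlet) boundaries. The linear Boltzmann equation \eqref{eq:stationary problem} generates a Markov process describing a single particle that moves ballistically with unit speed, undergoes velocity-randomizing collisions at exponential rate $\lambda\eta_\varepsilon$ according to the kernel $\cc{L}$, reflects specularly off $\partial\Omega_E$, and is \emph{killed} upon reaching $\partial\Omega_L\cup\partial\Omega_R$ where it is replaced by the prescribed boundary influx $f_B$. Because every trajectory eventually exits through the open sides in finite expected time (the strip has finite width $L_1$ and the particle always carries a nonzero horizontal velocity component with positive probability after each scattering), the associated substochastic transport semigroup is strictly contractive, and this is what forces both existence and uniqueness.

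First I would reformulate \eqref{eq:stationary problem} as a fixed-point problem for the trace of $g^S$ on the boundary. Integrating the transport equation along characteristics (the deterministic free-flight segments, broken by specular reflections on $\partial\Omega_E$) and summing over the Poisson-distributed collision times, one writes $g^S$ at any interior point $(x,v)$ as an integral over the backward trajectory: the value is a weighted average of $g^S$ at the previous collision and of the inflow datum $f_B$ carried from the point where the backward trajectory first meets $\partial\Omega_L\cup\partial\Omega_R$. This defines an affine operator $T_\varepsilon$ on $L^\infty(\Omega\times S^1)$ whose inhomogeneous part encodes the Dirichlet data and whose homogeneous part is a positive integral operator with $\|\cdot\|_\infty$-norm strictly less than one, the deficit coming precisely from the positive probability of absorption before returning. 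A unique fixed point of $T_\varepsilon=$ (contraction $+$ source) then exists by the Banach fixed-point theorem, and $L^\infty$ boundedness is inherited from $f_0,f_B\in L^\infty$ together with the fact that collisions and reflections preserve the $L^\infty$ norm while transport can only decrease it.

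Concretely, I would (i) fix the probabilistic interpretation and verify that the collision operator $\cc{L}$ in \eqref{eq:op_L} generates a conservative jump process on $S^1$ (each $v$ scatters to $v'=v-2(n\cdot v)n$ integrated over the impact parameter $\delta$, which redistributes velocity without changing its modulus); (ii) show the exit time from the strip is almost surely finite with uniformly bounded expectation, using that after each scattering there is probability bounded below by a positive constant of acquiring a horizontal velocity component bounded away from zero, so the horizontal coordinate performs a random walk with a nonzero drift component toward an absorbing boundary; (iii) assemble the Duhamel/mild formulation and read off the contraction constant $\theta_\varepsilon<1$; (iv) apply the fixed-point theorem and check that the resulting $g^S_\varepsilon$ satisfies the stationary equation in the appropriate weak (mild) sense together with the boundary conditions \eqref{eq:bound_cond_h_eps}.

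The main obstacle I anticipate is step (ii), controlling the geometry of the broken characteristics in the presence of the reflective obstacles and the reflective top and bottom sides. One must rule out pathological trajectories that stay trapped for an arbitrarily long time bouncing among $\partial\Omega_E$ without ever meeting the open sides, and show the contraction estimate is uniform enough to survive the limit $\varepsilon\to0^+$ regime of interest. Here the hypotheses that the obstacles are convex with smooth boundaries, have positive mutual distance, and have positive distance from the strip sides are essential: they guarantee that free-flight segments between collisions have length bounded below, that specular reflection off a convex obstacle cannot focus trajectories into a trap, and that the absorbing boundaries remain geometrically accessible from every point of $\Omega$ with uniformly positive probability. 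Once this accessibility is quantified, the strict contraction — and hence both existence and uniqueness for all sufficiently small $\varepsilon$ — follows.
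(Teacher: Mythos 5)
Your overall skeleton is the same as the paper's: write the solution in mild (Duhamel) form along broken characteristics, isolate the ``interior'' part as a substochastic semigroup $S_\varepsilon$, and obtain $g^S_\varepsilon$ as the sum of a Neumann series (equivalently, a Banach fixed point) once $S_\varepsilon$ is a strict $L^\infty$ contraction at some time. The genuine gap is in your step (ii), which is the heart of the matter, and the mechanism you invoke there is wrong. The scattering kernel \eqref{eq:op_L} is symmetric: there is no ``nonzero drift component toward an absorbing boundary,'' and in fact the mean post-collision velocity points \emph{backwards} ($\mathbb{E}[v'\cdot v]=-1/3$). Exit from the strip occurs purely by diffusion. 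Since the collision rate is $2\lambda\eta_\varepsilon$, the mean free path is $O(1/\eta_\varepsilon)$ and the effective diffusivity is $O(1/\eta_\varepsilon)$, so the expected exit time grows like $\eta_\varepsilon L_1^2$; it is \emph{not} uniformly bounded in the regime $\varepsilon\to 0^+$ that the theorem addresses. Consequently, at any fixed time $t_0$ the survival probability tends to $1$ as $\varepsilon\to 0$, i.e.\ your contraction constant $\theta_\varepsilon\to 1$, and the ``positive probability of a horizontal velocity after each scattering'' heuristic cannot rescue this, because that velocity is randomized again after a flight of length only $O(1/\eta_\varepsilon)$.

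This is exactly why the paper evaluates the semigroup at the diffusive time $t_0=\eta_\varepsilon$ and proves the contraction (Proposition~\ref{pr:est_op_S_eps}) by a completely different device: it extends $S_\varepsilon$ to the infinite strip $\Lambda$, dominates the indicator $\chi_\Omega$ by a smooth $\chi_\Omega^\delta$, recognizes the dominating series as the solution of the rescaled linear Boltzmann equation \eqref{eq:rescaled_bolt}, applies the diffusive limit of Proposition~\ref{pr:rescal_bolt_to_heat} (Hilbert expansion) to compare it in $L^\infty$ with the solution $\rho^\delta$ of the Neumann heat problem \eqref{eq:heat_rescal_lambda}, and then invokes the strong maximum principle to get $\|\rho^\delta(\cdot,1)\|_\infty<1$, whence $\alpha<1$ uniformly in $\varepsilon$ small. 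Your proposal contains no ingredient capable of producing this uniform quantitative estimate at the time scale $\eta_\varepsilon$. One could instead try to salvage a per-$\varepsilon$ contraction by a Doeblin-type accessibility argument (guided multi-segment trajectories using the full support of the deflection law, made uniform over $(x,v)$ by compactness), which would suffice for existence and uniqueness at each fixed $\varepsilon$; but that is not what you argue — your exit estimate rests on a drift that does not exist — and it would in any case forfeit the $\varepsilon$-uniform bound $\|g^S_\varepsilon\|_\infty\le (1-\alpha)^{-1}\max\{\rho_L,\rho_R\}$ that the paper's method provides and later exploits to control the remainder in the proof of Theorem~\ref{th:g_to_rho}.
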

\begin{theorem}
\label{th:g_to_rho}
The stationary solution $g_\varepsilon^S$ of (\ref{eq:stationary problem})
verifies
\begin{equation}
g^S_\varepsilon \to \rho
\end{equation}
as $\varepsilon \to 0$, where $\rho(x)$ is the  solution to the problem \eqref{eq:Lapl_mixed_omega1}. 
The convergence is in $L^\infty(\Omega\times S^1)$.
\end{theorem}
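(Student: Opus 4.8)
The plan is to establish the diffusive (hydrodynamic) limit through a Hilbert-type expansion of $g^S_\varepsilon$ in powers of the small parameter $\eta_\varepsilon^{-1}$, combined with a boundary-layer analysis to reconcile the kinetic boundary conditions of \eqref{eq:stationary problem} with the elliptic boundary conditions of \eqref{eq:Lapl_mixed_omega1}. The whole scheme rests on the spectral structure of the collision operator $\cc{L}$ in \eqref{eq:op_L}: acting on $L^2(S^1)$ it is self-adjoint and non-positive, its kernel consists exactly of the functions independent of $v$ (the constants on $S^1$), and on the orthogonal complement of this kernel it has a spectral gap, so that $(-\cc{L})^{-1}$ is bounded there. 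I would first record these facts, together with the resulting Fredholm alternative for $\cc{L}$, since every subsequent step uses them.

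First I would insert the ansatz $g^S_\varepsilon = g_0 + \eta_\varepsilon^{-1} g_1 + \eta_\varepsilon^{-2} g_2 + \cdots$ into $v\cdot\nabla_x g^S_\varepsilon = \eta_\varepsilon\cc{L}g^S_\varepsilon$ and match powers of $\eta_\varepsilon$. At order $\eta_\varepsilon$ one gets $\cc{L}g_0=0$, hence $g_0=g_0(x)=:\rho(x)$ is independent of $v$. At order $\eta_\varepsilon^0$ one finds $\cc{L}g_1 = v\cdot\nabla_x\rho$; since $\int_{S^1} v\cdot\nabla_x\rho\,\ud v =0$, the right-hand side is orthogonal to the kernel, the equation is solvable, and $g_1 = (-\cc{L})^{-1}\!\big(-v\cdot\nabla_x\rho\big) = b(v)\cdot\nabla_x\rho$ with $b(v):=(-\cc{L})^{-1}(-v)$. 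At order $\eta_\varepsilon^{-1}$ the equation $\cc{L}g_2 = v\cdot\nabla_x g_1$ is solvable only if its right-hand side is orthogonal to the constants, i.e. $\int_{S^1} v\cdot\nabla_x g_1\,\ud v=0$; carrying out this computation gives $\nabla_x\cdot(D\nabla_x\rho)=0$, where the diffusion matrix $D_{ij}=\int_{S^1} v_i\,(-\cc{L})^{-1}v_j\,\ud v$ reduces, by the rotational invariance of $\cc{L}$, to a positive scalar multiple of the identity. This is exactly the Laplace equation $\Delta\rho=0$ of \eqref{eq:Lapl_mixed_omega1}.

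The boundary conditions require more care and are where the main difficulty lies. The bulk expansion cannot satisfy the kinetic conditions of \eqref{eq:stationary problem}, which are imposed only on incoming velocities, so I would add half-space boundary-layer correctors, solutions of the associated Milne problem that decay exponentially into $\Omega$ on the kinetic scale $\eta_\varepsilon^{-1}$. On $\partial\Omega_L\cup\partial\Omega_R$ the isotropic incoming datum $f_B$ of \eqref{eq:def_f_B} fixes, through the solvability of the Milne problem, the trace of $\rho$ to be $\rho_L$ and $\rho_R$ respectively; on the specularly reflecting part $\partial\Omega_E$ the evenness of the reflection law forces the leading layer contribution to vanish and produces the no-flux condition $\partial_n\rho=0$. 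The delicate points are the curvature of the obstacle boundaries, which requires the Milne analysis on a locally flattened geometry with a geometric correction, and the four corners of the strip where Dirichlet meets specular reflection; since the obstacles have positive mutual distance and positive distance from the sides, no further corner interactions arise.

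Finally I would make the expansion rigorous. Setting $g^S_\varepsilon = \rho + \eta_\varepsilon^{-1} g_1 + (\text{boundary layers}) + R_\varepsilon$ and subtracting, the remainder $R_\varepsilon$ solves a linear Boltzmann equation with a source that is $O(\eta_\varepsilon^{-1})$ in the interior and with small boundary residuals. Using the uniform $L^\infty$ a priori bound and the uniqueness furnished by Theorem \ref{th:ex!g_stat} (via the maximum principle for the stationary kinetic equation), together with the spectral gap of $\cc{L}$, I expect to close the estimate $\|R_\varepsilon\|_{L^\infty(\Omega\times S^1)}\to 0$, whence $g^S_\varepsilon\to\rho$ uniformly. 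I anticipate that the genuine obstacle is the boundary-layer and corner analysis and, above all, upgrading the natural $L^2$ energy estimates to the $L^\infty$ convergence of the statement: this is what forces maximum-principle or probabilistic (stochastic representation) arguments rather than a purely variational treatment.
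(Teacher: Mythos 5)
Your bulk analysis coincides with the paper's: the same ansatz $g^S_\varepsilon=g^{(0)}(x)+\sum_k\eta_\varepsilon^{-k}g^{(k)}(x,v)$, the same use of the Fredholm alternative and of the spectral structure of $\cc{L}$ (the paper's Lemma \ref{le:L^-1}), the same solvability condition yielding $D\Delta g^{(0)}=0$ with $D_{ij}=\frac{1}{2\pi}\int_{S^1}v_i(-\cc{L}^{-1})v_j\,\ud v$ scalar by isotropy. Even the Neumann condition is obtained in the paper in the spirit you indicate, but more directly and with no layer: imposing the specular condition on $g^{(1)}=\cc{L}^{-1}(v\cdot\nabla_x g^{(0)})+\zeta^{(1)}$ forces $(n\cdot\nabla_x g^{(0)})\,\cc{L}^{-1}(v\cdot n)=0$, and since $\cc{L}^{-1}(v\cdot n)\neq 0$ this gives $\partial_n g^{(0)}=0$ on $\partial\Omega_E$. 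Where you genuinely diverge is the boundary treatment: you bring in Milne half-space correctors, curvature corrections near the obstacles, and a corner analysis. The paper proves the theorem without any of this machinery, and also never passes through $L^2$ energy estimates: it works in $L^\infty$ throughout, via the Dyson/Neumann series already built for Theorem \ref{th:ex!g_stat}.

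The simplification your plan misses is also exactly the point where your proposal is left open. The paper truncates as $g^S_\varepsilon=g^{(0)}+\eta_\varepsilon^{-1}g^{(1)}+\eta_\varepsilon^{-1}R_{\eta_\varepsilon}$, with the prefactor $\eta_\varepsilon^{-1}$ in front of the remainder: hence $R_{\eta_\varepsilon}$ need not vanish, it only needs to be bounded uniformly in $\varepsilon$. Because $g^{(0)}$ matches the isotropic datum $f_B$ of \eqref{eq:def_f_B} exactly on $\partial\Omega_L\cup\partial\Omega_R$ (this is where isotropy enters), and because $g^{(1)}$ satisfies the specular condition exactly (after fixing $\zeta^{(1)}\equiv 0$), the remainder solves a problem of precisely the form \eqref{eq:stationary problem}, with incoming data $-\cc{L}^{-1}(v\cdot\nabla_x g^{(0)})$ on the open sides and specular reflection on $\partial\Omega_E$. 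Theorem \ref{th:ex!g_stat} then applies verbatim, and its Neumann-series argument gives $\|R_{\eta_\varepsilon}\|_\infty\le\frac{1}{1-\alpha}\|\cc{L}^{-1}(v\cdot\nabla_x g^{(0)})\|_\infty$, finite uniformly in $\varepsilon$ by Lemma \ref{le:L^-1} and elliptic regularity of $g^{(0)}$. This one observation eliminates the Milne layers, the geometric corrections and the corners. In your formulation, by contrast, the remainder must itself tend to zero in $L^\infty$, and you only say you ``expect to close the estimate'': the maximum principle, uniqueness, and the spectral gap of $\cc{L}$ do not by themselves do this; the quantitative input actually needed is the strictly-contractive exit estimate $\|S_\varepsilon(\eta_\varepsilon)h\|_\infty\le\alpha\|h\|_\infty$, $\alpha<1$, of Proposition \ref{pr:est_op_S_eps}, which your plan never identifies. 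So, as written, the decisive step of your argument is missing; your route could plausibly be completed, but only at the price of the full boundary-layer and corner analysis that the paper's decomposition shows to be unnecessary in this setting.
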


The choice of the elastic collision kernel for the operator $\cc{L}$ defined in \eqref{eq:op_L} is due to the physical model we have in mind. 
We are considering a particle moving with initial velocity $v\in S^1$ and hitting an hard circular scatterer whose position is random.
 The random impact parameter $\delta$ chosen uniformly in $[-1,1]$ allows to individuate this collision.
In a similar way  we could let the particle move following the Lorentz process, that is moving freely  in a region where static small disks of radius $\varepsilon$ are 
distributed according to a Poisson probability measure and elastically colliding with those disks. 
In this case for suitable choices of the mean value of the Poisson distribution in terms of $\varepsilon$ and $\eta_\varepsilon$ it could be possible to prove that the diffusive limit for the linear Boltzmann equation and the Lorentz process in a (small disks) low density limit are asymptotically equivalent in the limit $\varepsilon \to 0$ (see \cite{BNPP} 
for the case of an 
infinite $2D$ slab with open boundary). 

\section{Numerical convergence of stationary linear Boltzmann}
\label{s:simul} 
\par\noindent
We investigate, here, 
the stationary solution of the linear Boltzmann equation from a numerical point of view.
Our algorithm directly simulates the motion of single particles following the linear Boltzmann equation.
In the simulations we exploit the interpretation of the linear Boltzmann equation as the equation describing a stochastic jump process in the velocities 
 and we directly simulate the motion of single particles.

We will show that the numerical stationary solution that we  construct is close to the solution of the associated Laplace problem \eqref{eq:Lapl_mixed_omega1} if the scale parameter $\varepsilon$ is small enough, that is to say if the average time $t_m$
between two consecutive hits is sufficiently small.
This time will be called in the sequel \emph{mean flight time}.

 We will construct the solution of the Laplace problem \eqref{eq:Lapl_mixed_omega1} in our geometry by using the \emph{COMSOL Multiphysics} software.

We proceed in the following way:
we consider particles entering in $\Omega$ from the reservoirs. A particle starts its trajectory from the left boundary $\partial{\Omega}_L$  or from the right boundary $\partial{\Omega}_R$, where the mass density is $\rho_L$ and $\rho_R$ respectively. 
Therefore the number of particles we let enter  from each side is chosen to be proportional to $\rho_L$ and $\rho_R$. 
In other words, we select the starting side of the particle with probability ${\rho_\rr{L}}/({\rho_\rr{L}+\rho_\rr{R}})$ (left side) and ${\rho_\rr{R}}/({\rho_\rr{L}+\rho_\rr{R}})$ (right side) respectively.
Then we draw uniformly the position $x$ in $\partial\Omega_\rr{L}$ or $\partial\Omega_\rr{R}$ and the velocity $v$ in $S^1$ with $v \cdot n(x)>0$, $n(x)$ inward-pointing normal.

Once the particle started, it moves with uniform linear motion until it hits a scatterers or the elastic boundary $\partial{\Omega}_E$. 
We pick $t_h$, the time until the hit with a scatterers, following the exponential law of mean $t_m$, with $t_m$ a fixed parameter. The particle travels with velocity $v$ for a time $t_h$. If during this time it hits the elastic boundary $\partial{\Omega}_E$, its velocity changes performing an elastic collision.
At time $t_h$ we simulate an hit with a scatterers by picking an impact parameter $\delta$ uniformly in $[-1,1]$ and changing the velocity from $v$ to $v'=v- 2 (v\cdot n) n $, where $\delta=\sin \alpha$ and $n=n(\delta)$  is the  outward pointing normal to the scatterers such that the angle of incidence between $n$ and $v$ is $\alpha$ (see Figure \ref{fig:bol_op}).

We proceed as before by letting the particle move 
 until it leaves ${\Omega}$ by reaching again the open boundary $\partial\Omega_L\cup\partial{\Omega}_R$. Then the particle exits from the system and we  are ready to simulate another particle.
We simulate a number $N$ of particles.

The random number generator we use in our simulation is the Mersenne Twister \cite{MN-Mer, MN-Test}. 

We want to construct the stationary solution of equation \eqref{eq:boltz_lin}-\eqref{eq:bound_cond_h_eps}.
Note that we can simulate particles one by one since in the considered model particles are not interacting.
Moreover, being the stationary state not dependent on the initial datum $f_0$, we consider in this algorithm only particles starting from the reservoirs.
We assume that in the stationary state the density of particles in a region is proportional to the total time spent by all particles in that region. Moreover, due to isotropy, there is no preferential direction for the velocity, so the stationary $g^S_\varepsilon$ is not dependent on $v$.

We divide the space ${\Omega}$ in equal small square cells.
 For every particle  we calculate the time  that it spends in every cell.
Then we calculate the total time that particles spend in each cell. 
For $t_m$ going to zero and $N$ very big, in every infinitesimal region of $\Omega$
 the stationary
density has to be proportional to the total time spent by particles in
that region.
Considering  sufficiently small cells, for $t_m$ small enough and a number of particles simulated $N$ big enough,
 the total time spent in the cell we calculate is proportional to our numerical stationary solution. 

We construct with our algorithm a grid of sojourn times in the cells. 
The last step we have to do is to normalize it. 
It is sufficient to multiply by a constant, obtained by imposing the correct value of $g_\varepsilon^S$ in a point (e.g., the boundary datum). 
We call our numerical solution $h_{t_m}(x)$.
So we fix a cell in contact with the reservoir where we calculated a sojourn time $t_c$ and consider the value $f_B$ of the stationary solution.
We choose as multiplication constant $c={ f_B(x) }/{ t_c }$. So  the simulated solution $h_{t_m}(x)$ is constructed by multiplying the sojourn time in each cell for this constant $c$. 

In the sequel we will show that for $t_m$ sufficiently small and $N$ big enough, the simulated solution $h_{t_m}(x)$ well approximates the solution $\rho(x)$ of the associated Laplace problem.

All the simulations we are going to discuss in this Section are performed with $N=5\cdot10^7$ particles.

Let us preliminary consider the case of $\Omega=(0,4)\times(0,1)$ in absence of obstacles. We fix mass densities at the reservoirs $\rho_L=1$ and $\rho_R=0.5$. In this first case there is no dependence on the vertical coordinate in the solution of the Laplace problem \eqref{eq:Lapl_mixed_omega1}. Indeed we know that the problem has analytic solution $\rho(x_1,x_2)=1-{x_1}/{8}$, where we are denoting by $(x_1,x_2)\in\bb{R}\times\bb{R}$ the spatial coordinates $x$ in $\Omega$.
We divide the domain in $200\times 50$ equal square cells and we consider simulations with different values of $t_m$, to understand which values of the mean time $t_m$  provide a good approximation of the solution we are looking for.

In  {Figures} \ref{f:st0_1} and \ref{f:st0_2} we show that the choice of $t_m$ of the order of $10^{-2}$ is suitable for our purpose. Indeed in {Figures} \ref{f:st0_1} we compare the simulated solution $h_{t_m}$ with the analytic solution for the values $t_m=2\cdot 10^{-1}$,  $t_m=10^{-1}$, $t_m=2\cdot10^{-2}$ . We see in a $3D$ plot and in a $2D$ plot, obtained from the previous one by averaging on the $x_2$ variable, that $h_{t_m}$ becomes closer to $\rho(x_1)$ when  $t_m$ decreases.
So in {Figures} \ref{f:st0_2} we fix the parameter $t_m=10^{-2}$ and we verify that $h_{t_m}$ is close to the function $\rho(x_1)$ showing the relative error  $|h_{t_m}-\rho|/{\rho}$.

\begin{figure}[!ht]
\begin{picture}(200,140)(0,0)
\put(-26,-30){
  \includegraphics[width=0.72\textwidth]{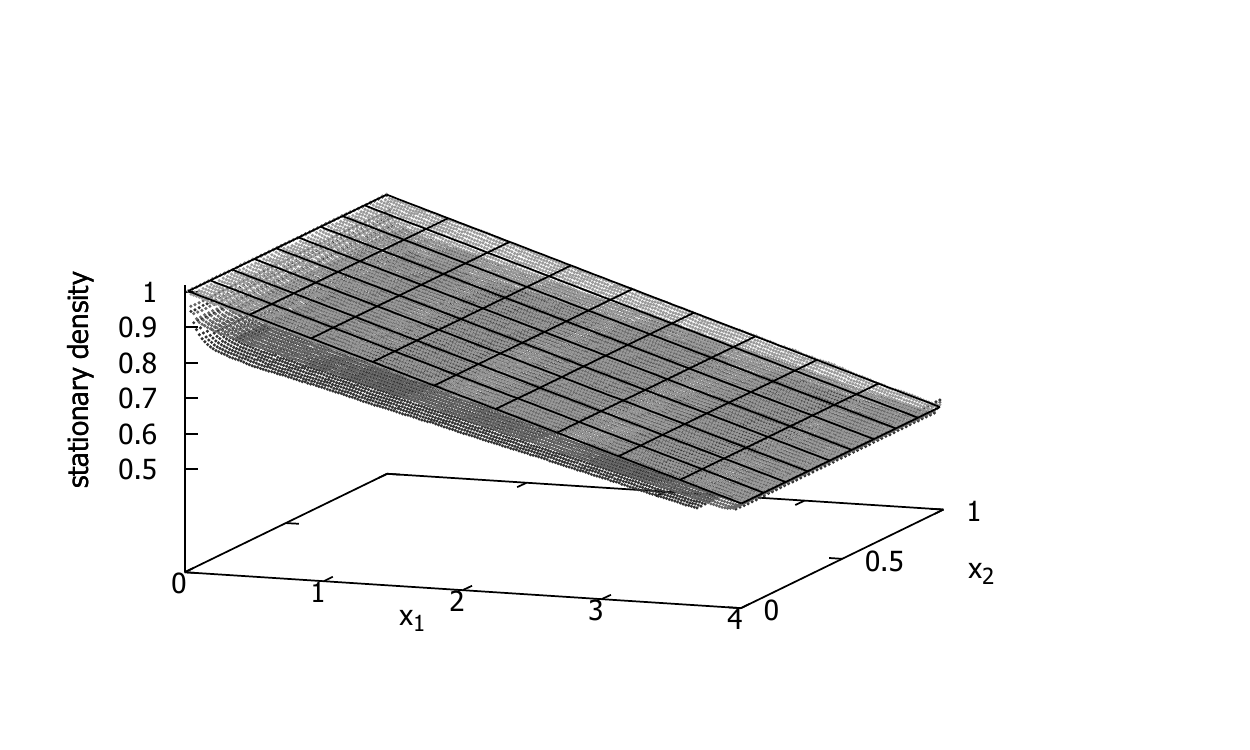}
}
\put(240,-17){
  \includegraphics[width=0.63\textwidth]{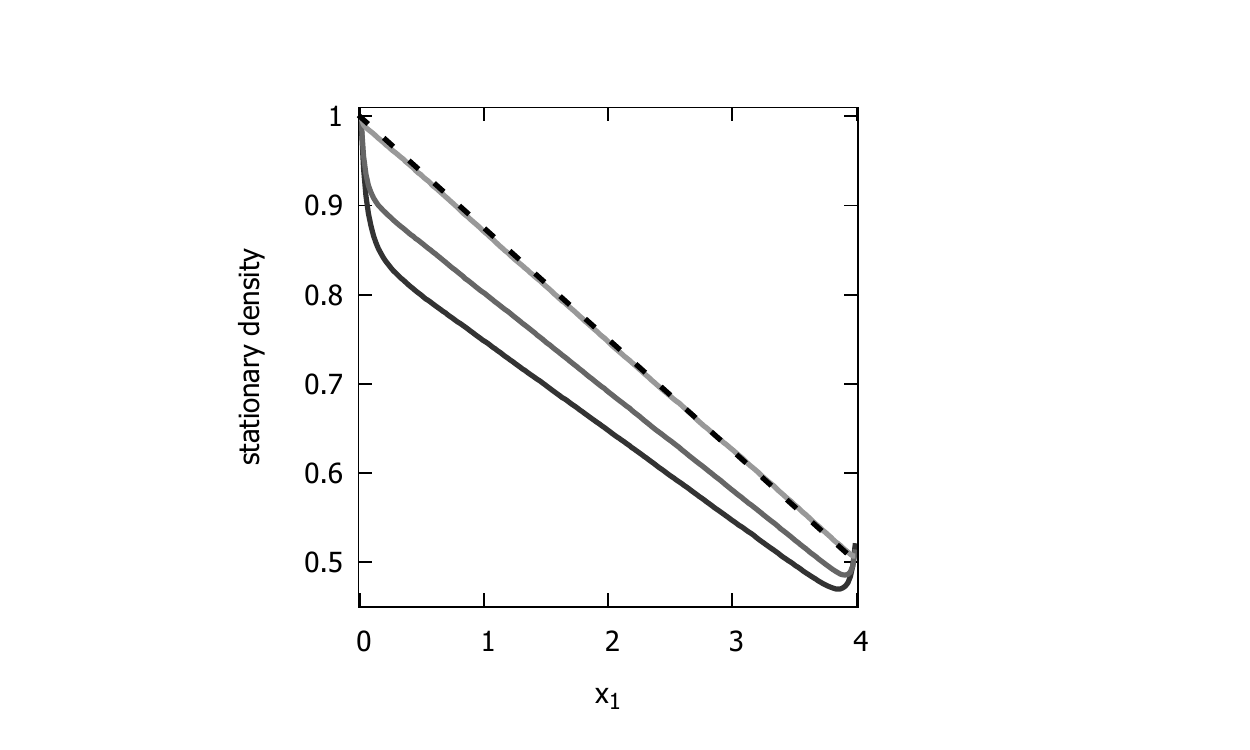}
}
\end{picture}
\caption{Plot of the simulated solutions $h_{t_m}$ in a $3D$ plot and in a $2D$ plot constructed by averaging on the $x_2$ variable: in dark gray $t_m=2\cdot 10^{-1}$, in gray $t_m=10^{-1}$, in light gray $t_m=2\cdot10^{-2}$. In black (grid and dashed line) the analytic solution $\rho$ of the associated Laplace
problem.}
\label{f:st0_1}
\end{figure}

\begin{figure}[!ht]
\begin{picture}(200,110)(0,0)
\centering
\put(100,-18){
  \includegraphics[width=0.63\textwidth]{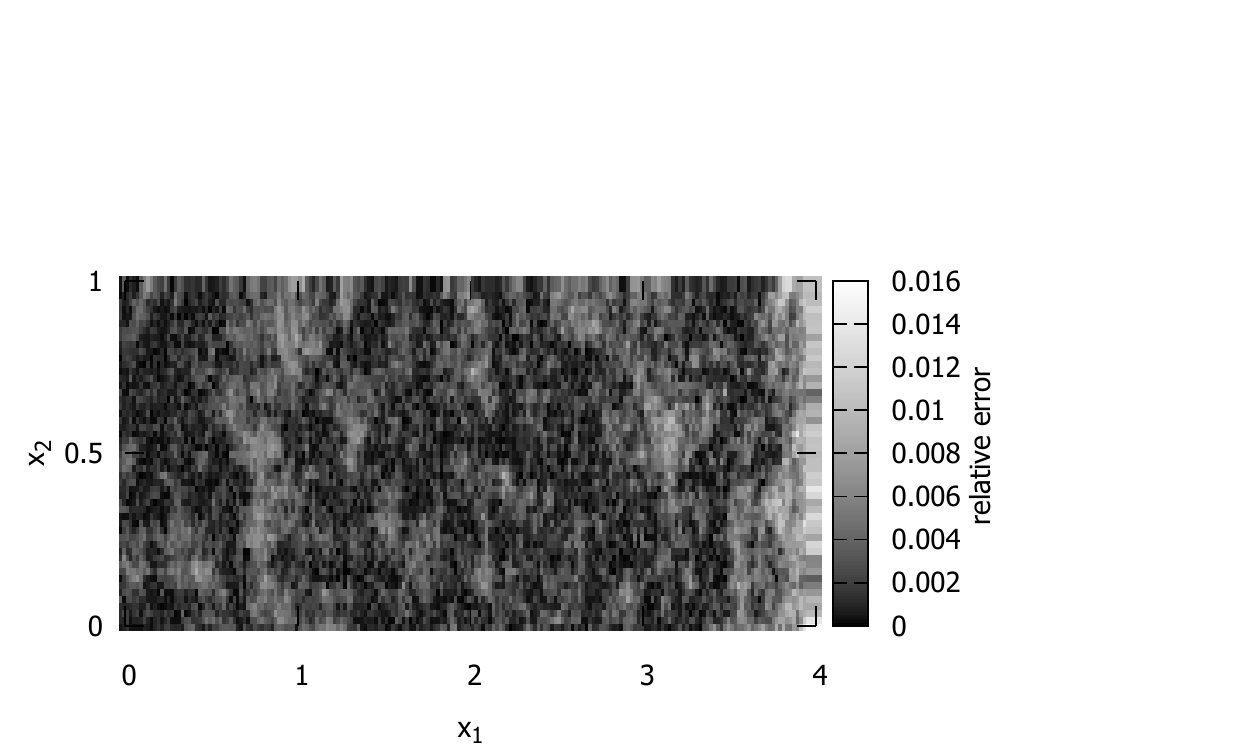}
}
\end{picture}
\caption{Simulation parameter $t_m=10^{-2}$: 
  relative error  ${|h_{t_m}-\rho|}/{\rho}$.}
\label{f:st0_2}
\end{figure}


We consider now the more interesting case with presence of obstacles in the strip. Our domain $\Omega$ is the strip $(0,4)\times(0,1)$ minus the obstacles. We fix again mass densities at the reservoirs $\rho_L=1$ and $\rho_R=0.5$. 
We fix again $t_m=10^{-2}$, since 
we have shown that in the empty case this choice for the exponential clock allows to construct a numerical solution $h_{t_m}$ that is close to  the analytical solution $\rho(x_1,x_2)$ of the associated
Laplace problem \eqref{eq:Lapl_mixed_omega1}.
We propose different situations for  the domain $\Omega$ and we show in {Figures} \ref{f:st1} - \ref{f:st_d} that in each case the simulation algorithm works correctly.
We compare our numerical solution with the solution $\rho(x_1,x_2)$ of the associated Laplace problem \eqref{eq:Lapl_mixed_omega1}. We show the plots of the $h_{t_m}$ and $\rho$ and the map of the relative error ${|h_{t_m}-\rho|}/{\rho}$  as in {Figure} \ref{f:st0_2} .

The first case we consider is the presence of a big squared obstacle with side $8\cdot 10^{-1}$, in different positions into the strip. In Figure \ref{f:st1} the results on two different positions are presented.

\begin{figure}[!htp]
\begin{picture}(200,270)(0,0)
\put(-26,115){
  \includegraphics[width=0.71\textwidth]{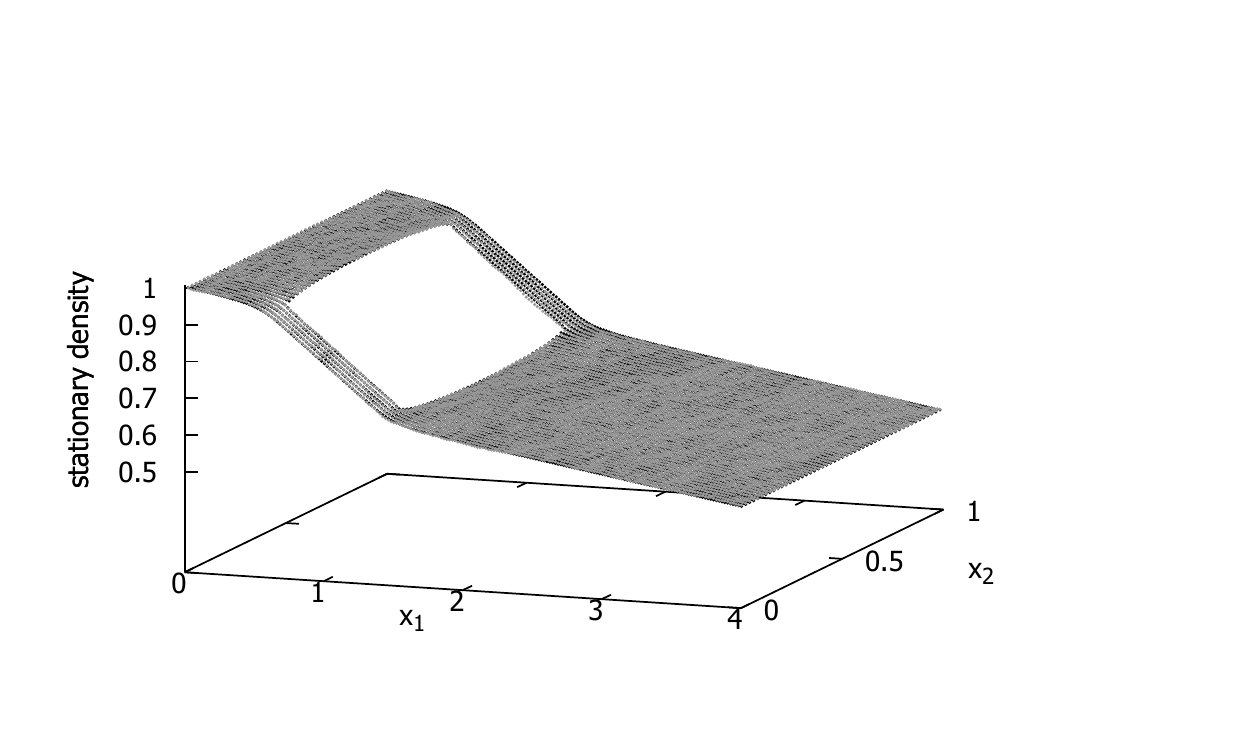}
}
\put(239,130){
  \includegraphics[width=0.63\textwidth]{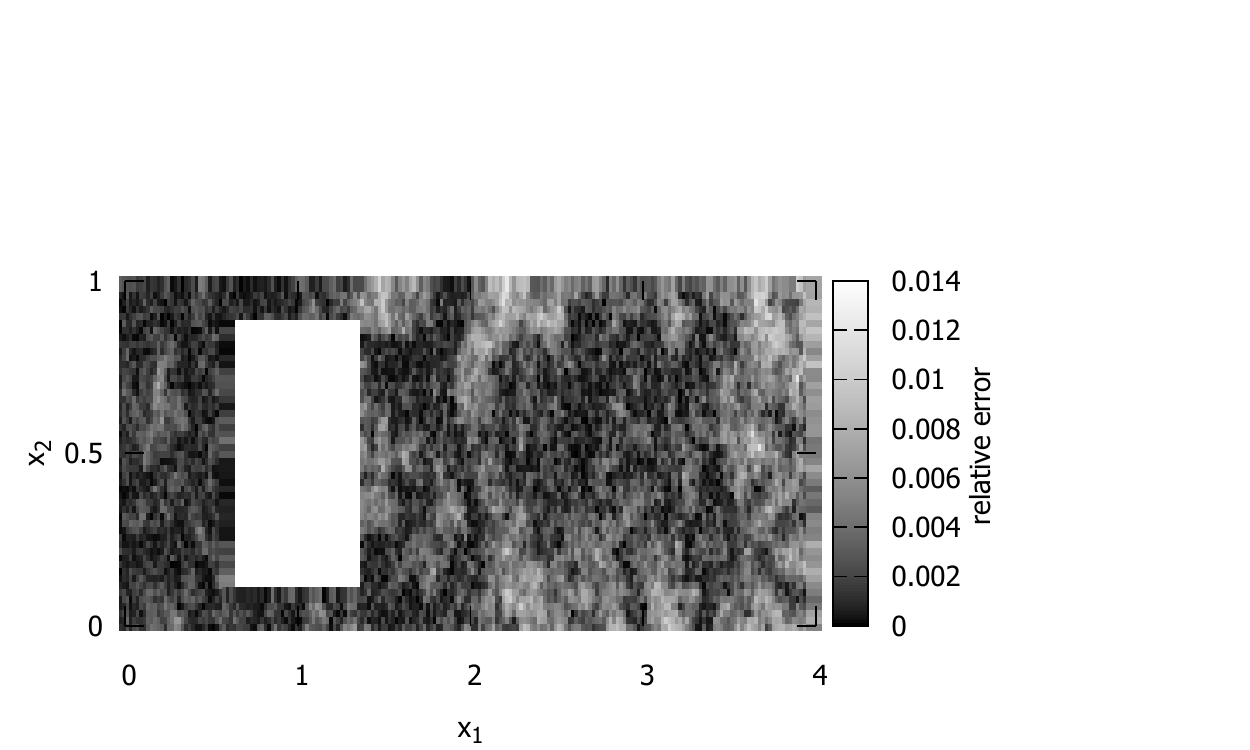}
}
\put(-26,-25){
  \includegraphics[width=0.71\textwidth]{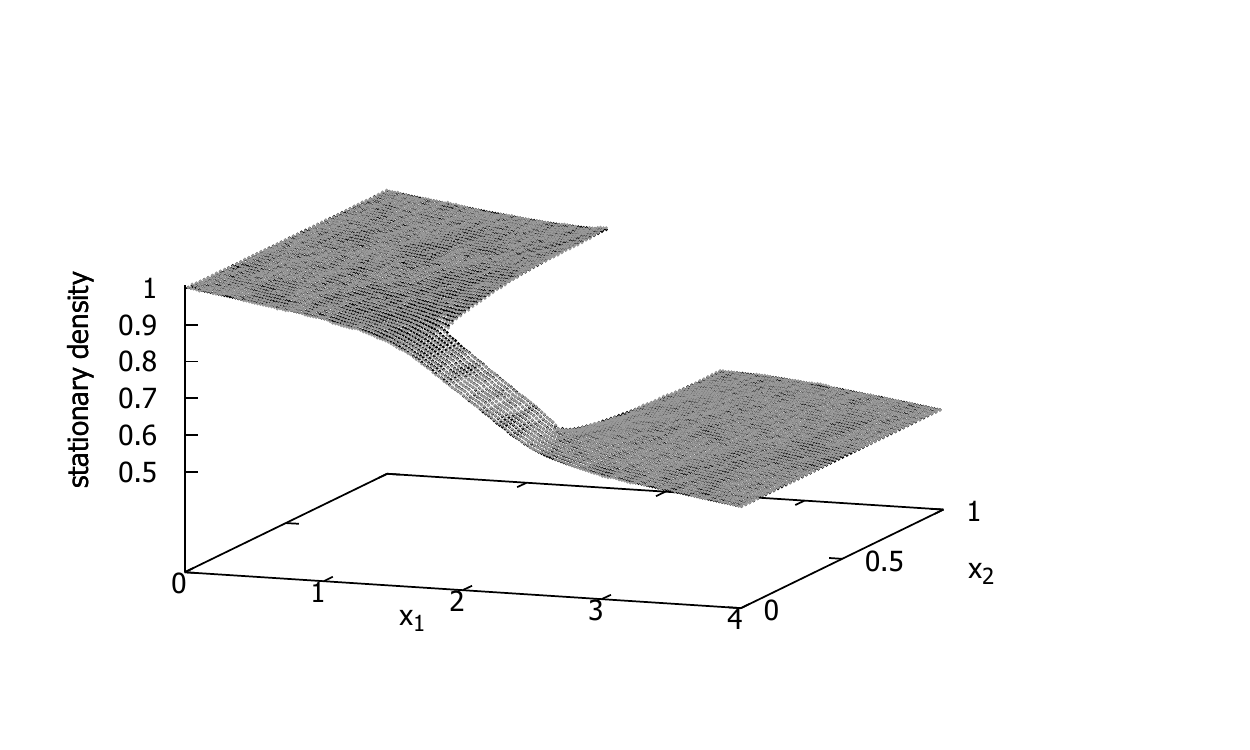}
}
\put(239,-8){
  \includegraphics[width=0.63\textwidth]{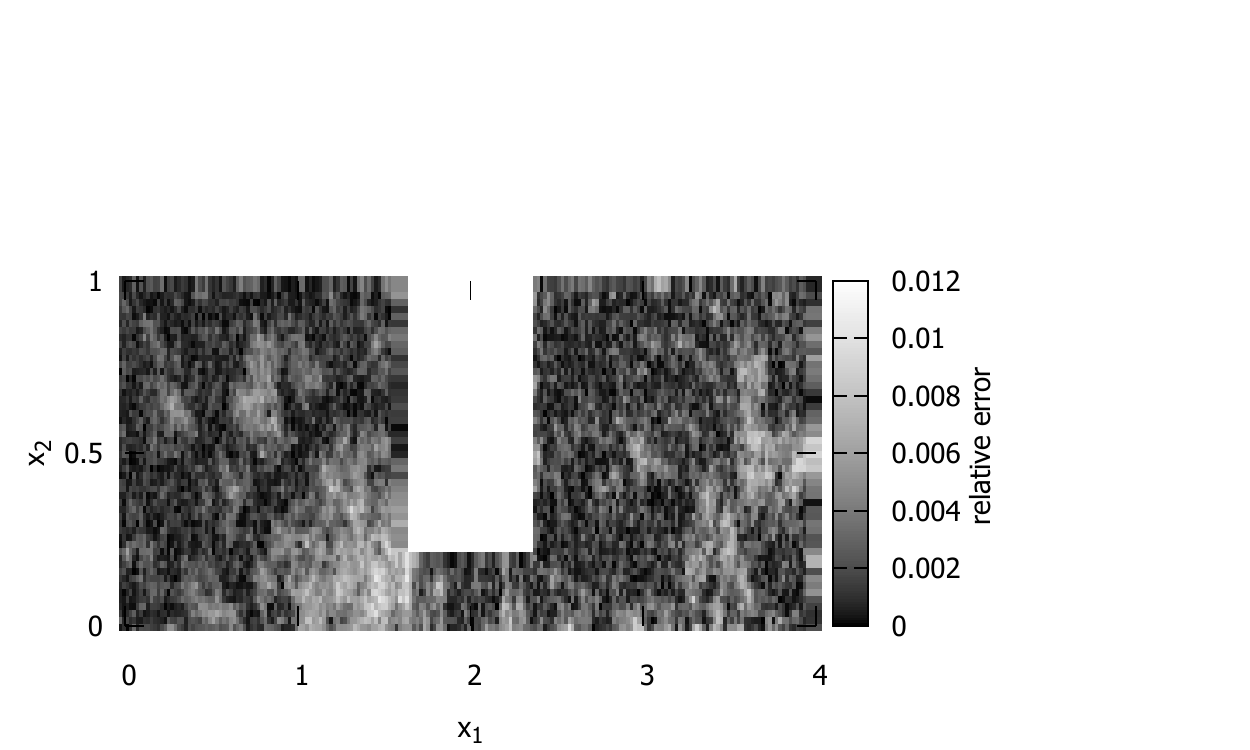}
}
\end{picture}
\caption{Simulation parameter $t_m=10^{-2}$: on the left in gray the
numerical solution $h_{t_m}$ and in black  the solution $\rho$ of the associated Laplace problem; on the right the relative error  ${|h_{t_m}-\rho|}/{\rho}$. Into the strip there is a square obstacle with side $8\cdot10^{-1}$.}
\label{f:st1}
\end{figure}

Another interesting case is the presence of a very thin and tall obstacle 
placed vertically inside the strip. We show it in Figure \ref{f:st5b}, by picking a thin obstacle of height of $8\cdot10^{-1}$.

\begin{figure}[htp!]
\begin{picture}(200,120)(0,0)
\put(-26,-33){
  \includegraphics[width=0.71\textwidth]{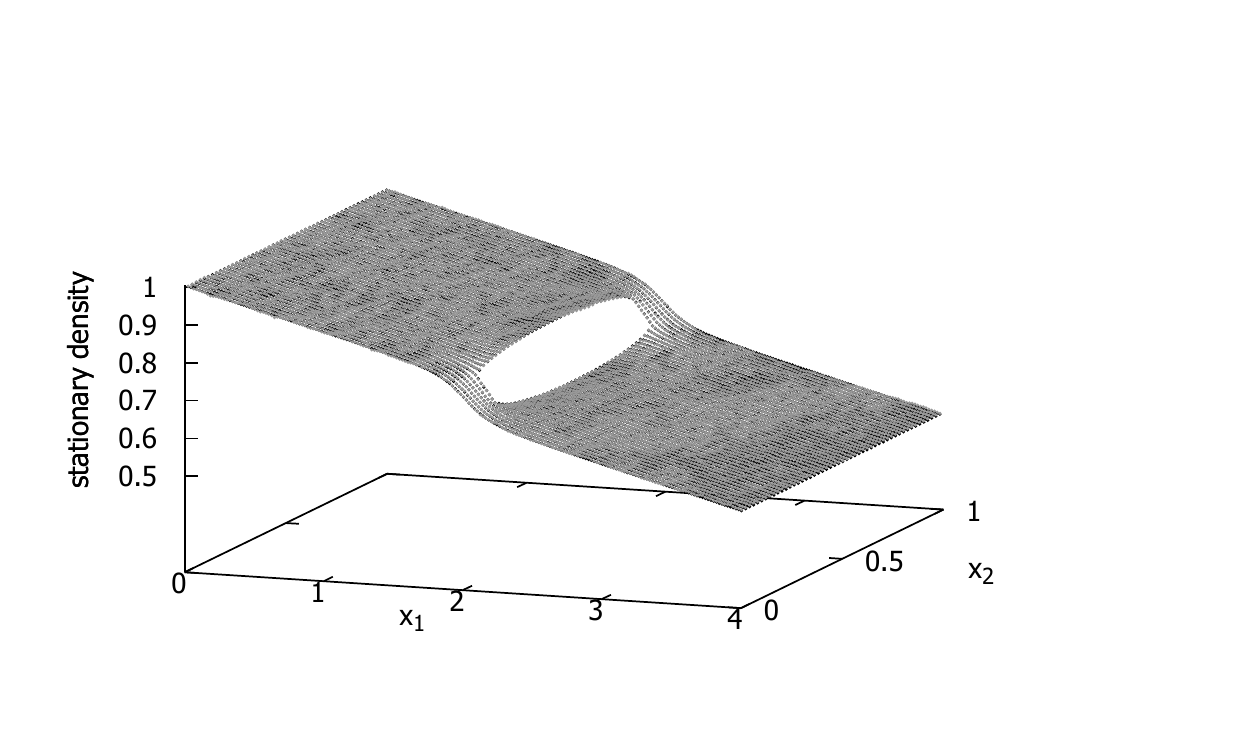}
}
\put(239,-15){
  \includegraphics[width=0.63\textwidth]{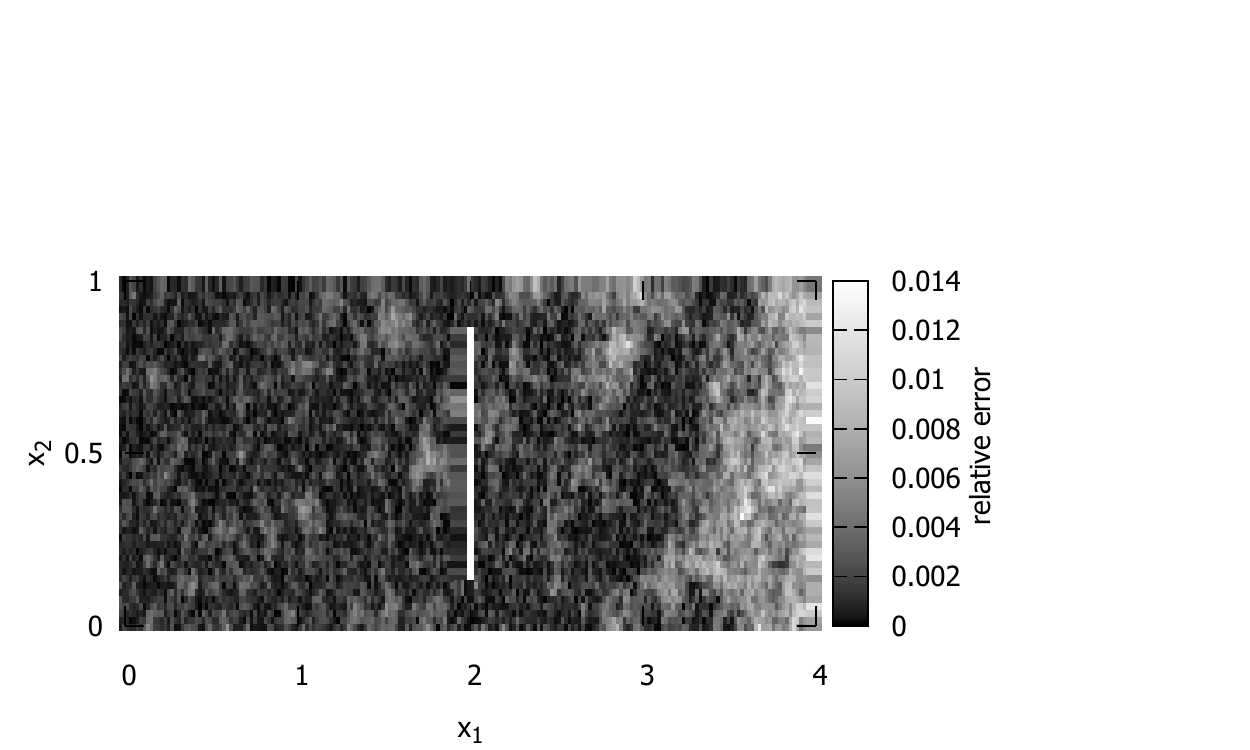}
}
\end{picture}
\caption{Simulation parameter $t_m=10^{-2}$: on the left in gray the
numerical solution $h_{t_m}$ and in black  the solution $\rho$ of the associated Laplace problem; on the right the relative error  ${|h_{t_m}-\rho|}/{\rho}$. In the strip is placed a very thin obstacle with height of $0.8$.}
\label{f:st5b}
\end{figure}

The last case we want to present is the presence in the strip of two obstacles. %
In Figure \ref{f:st_d} we consider two different situations: in the first we set in the strip two squared obstacles with sides $6\cdot 10^{-1}$ long, in the second we place into the strip two rectangular obstacles of sides $4\cdot 10^{-1}$ and $7 \cdot 10 ^{-1}$.

\begin{figure}[ht!]
\begin{picture}(200,270)(0,0)
\put(-26,115){
  \includegraphics[width=0.71\textwidth]{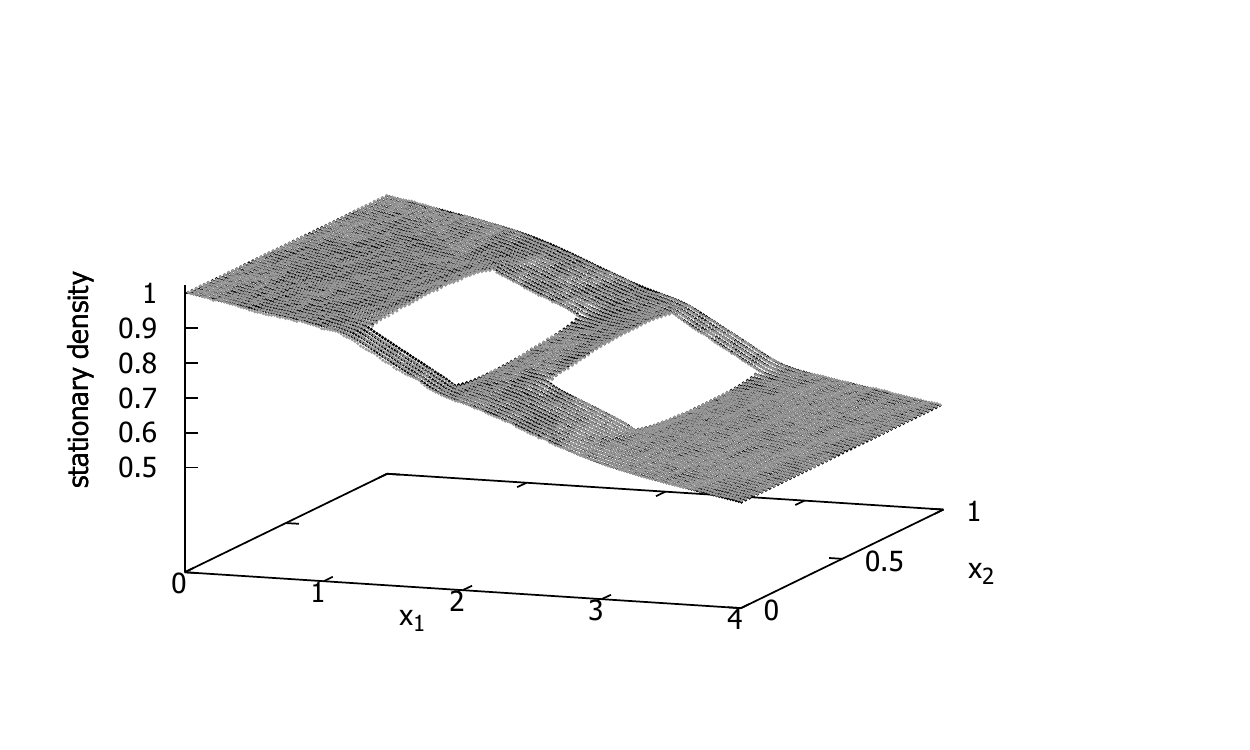}
}
\put(239,130){
  \includegraphics[width=0.63\textwidth]{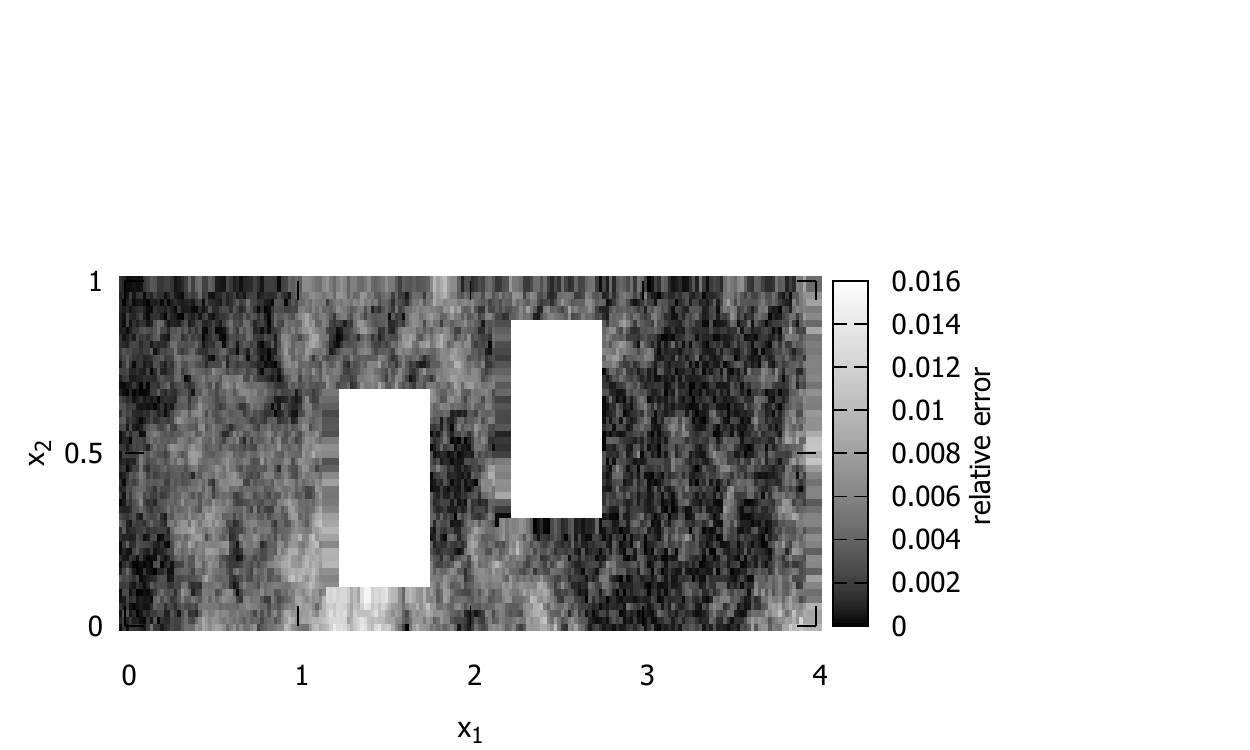}
}
\put(-26,-25){
  \includegraphics[width=0.71\textwidth]{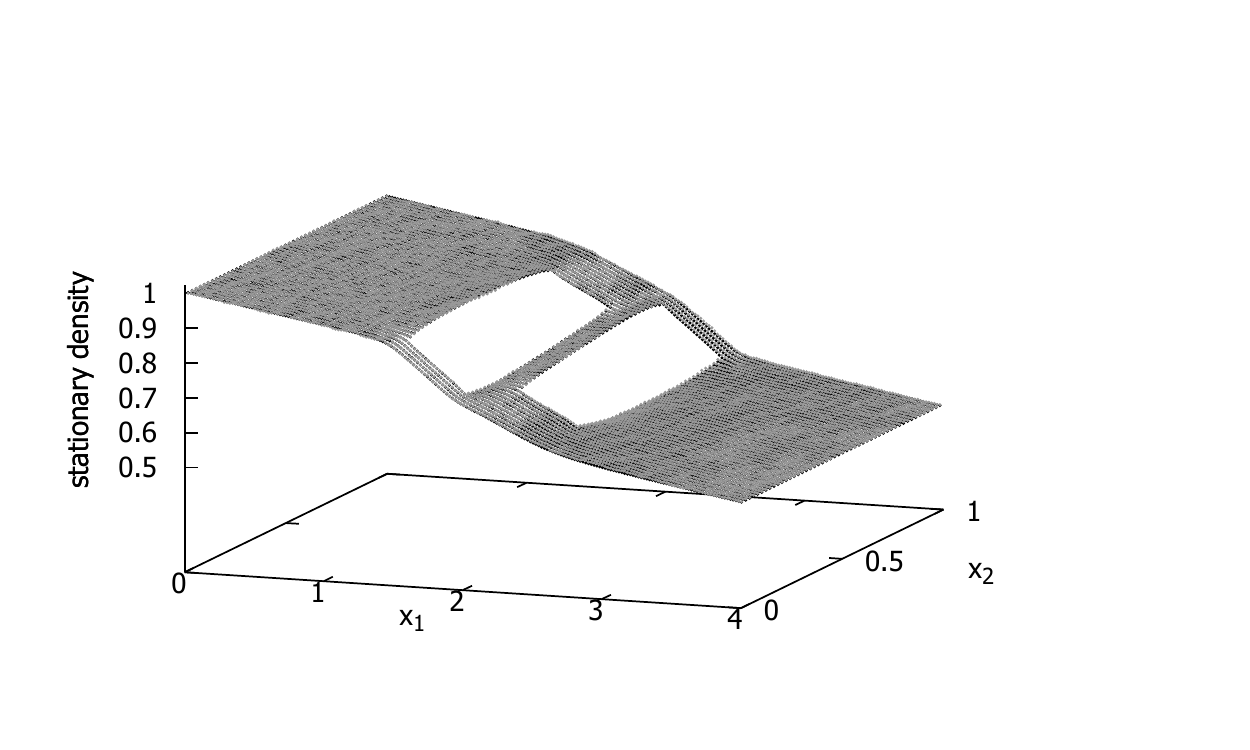}
}
\put(239,-8){
  \includegraphics[width=0.63\textwidth]{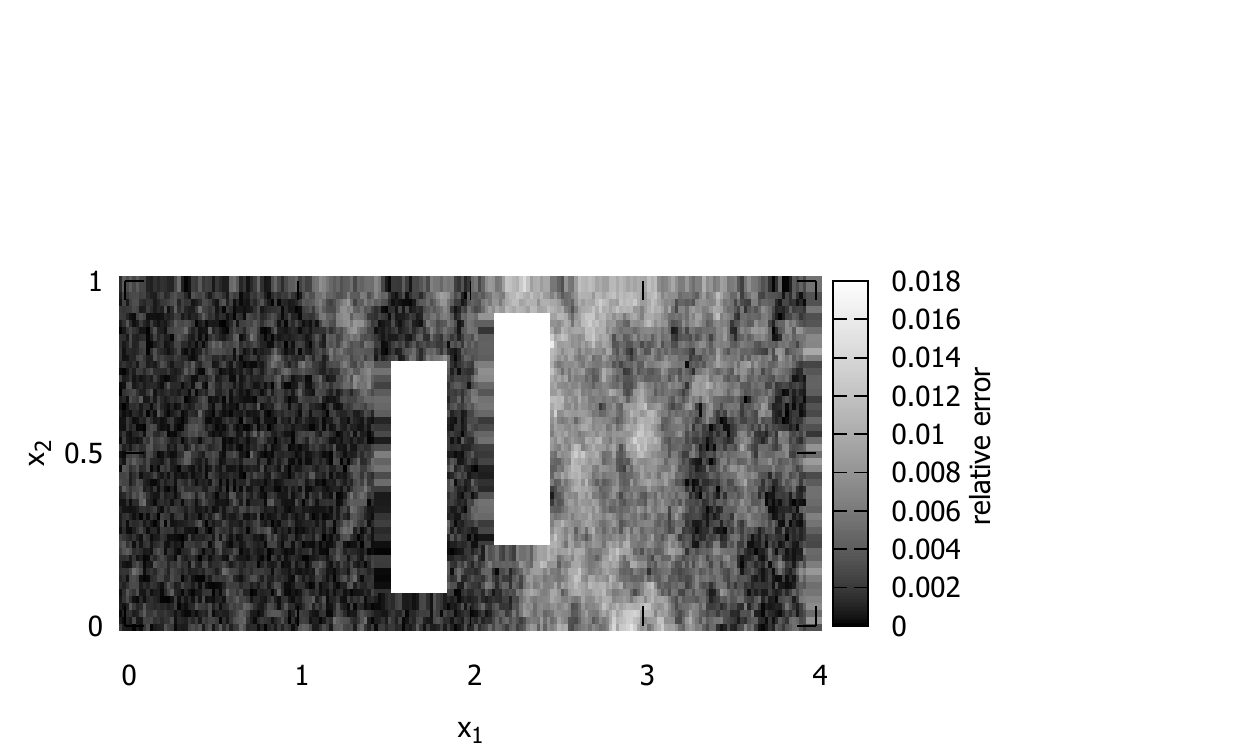}
}
\end{picture}
\caption{Simulation parameter $t_m=10^{-2}$: on the left in gray the
numerical solution $h_{t_m}$ and in black  the solution $\rho$ of the associated Laplace problem; on the right the relative error  ${|h_{t_m}-\rho|}/{\rho}$.
In the first line we show the case of two squared obstacles with side $6\cdot 10^{-1}$, in the second one a couple of rectangular obstacles, taller and thinner than the squares.}
\label{f:st_d}
\end{figure}

Note  that due to the presence of obstacles the solutions are not independent of the vertical coordinate $x_2$ anymore as it was in the empty strip case. However, we can notice that before and beyond the obstacles in the $x_1$ direction  the stationary states are closer to a flat state than in the empty case, with a steeper slope in the tight channels at sides of the obstacles. 
The total stationary mass flux through any vertical line $\{x_1\}\times(0,1) \cap \Omega$ does not depend on $x_1$. 
Indeed, this should follow from the Fick's law, that we expect to be 
valid also in presence of obstacles (in absence of obstacle, being the limiting problem one--dimensional, the Fick's law holds as shown in \cite{BNPP}), together with the 
divergence theorem and the fact that 
the boundary conditions are homogeneous on $\partial \Omega_E$.
The Fick's law would tell us that, in presence of obstacles, 
the total flux on the lines $\{x_1=a\}\cap\Omega$ is smaller than in the 
empty case, as it is possible to see focusing on the vertical lines before 
the obstacles. In this sense, and opposite to what happen in the case of the 
study of the residence time, we find on the flux the intuitive result we 
expected.

\section{Residence time}
\label{s:residence} 
\par\noindent
We consider the domain $(0,L_1)\times(0,L_2)$ 
with the same boundary conditions as 
in Section~\ref{s:modello}, namely, reflecting horizontal boundaries and 
open vertical boundaries. 
As before $\Omega$ denotes a subset of this domain obtained by placing 
large fixed reflecting obstacles.
Particles in $\Omega$ move according to the Markov process 
solving the linear Boltzmann equation and described in detail in 
Section~\ref{s:simul}.

In Section~\ref{s:simul} we investigated the stationary state of the system 
and we demonstrated that, provided the mean flight time 
$t_m$ is sufficiently small, the stationary state is very well 
approximated by the solution of the Laplace problem 
\eqref{eq:Lapl_mixed_omega1} even in presence of obstacles.
We have also noted that, due to the presence of obstacles, the 
total flux crossing the strip is smaller 
with respect to the one measured in absence of obstacles.  
This implies that if we consider a fixed number of particles 
entering the strip throught the left boundary, 
the number of them exiting through the right boundary 
decreses when an obstacle is inserted. 
In our simulations we remark that the ratio 
between the number of particles exiting through the left boundary 
in presence of an obstacle and in the empty strip case does not 
depend very much on the geometry of the obstacle and, in the worst 
case we considered, it is approximatively equal to $1/5$.
Detailed data for the different cases 
we studied are reported in the figure captions of this section. 

In this section, on the other hand,  
we focus on those particles that do the entire 
trip, that is to say they enter through the left boundary and
eventually exit the strip 
through the right one. Limiting our numerical computation to these 
particles, we measure the average time needed to cross the strip, 
also called the residence time and discuss its dependence on the 
size and on the position of a large fixed obstacle placed in the 
strip.  
The surprising result is that the residence time is not monotonic 
with respect to the obstacle parameters, such as position and 
size. 
More precisely, 
we show that obstacles can increase or decrease the residence time with 
respect to the empty strip case depending on their side lengths 
and on their position. Moreover, in some cases, 
by varying only one of these parameters
a transition from the increasing effect to the decreasing effect is 
observed.

In some cases we observe that the residence time measured in presence 
of an obstacle is smaller than the one measured for the empty strip. 
In other words, we find that the obstacle is able to select those 
particles that cross the strip in a smaller time. 
More precisely, particles that succeed to cross the strip do it faster 
than they would in absence of obstacles.

\begin{figure}[ht!]
\begin{picture}(200,237)(0,0)
\put(-55,60){
  \includegraphics[width=0.88\textwidth]{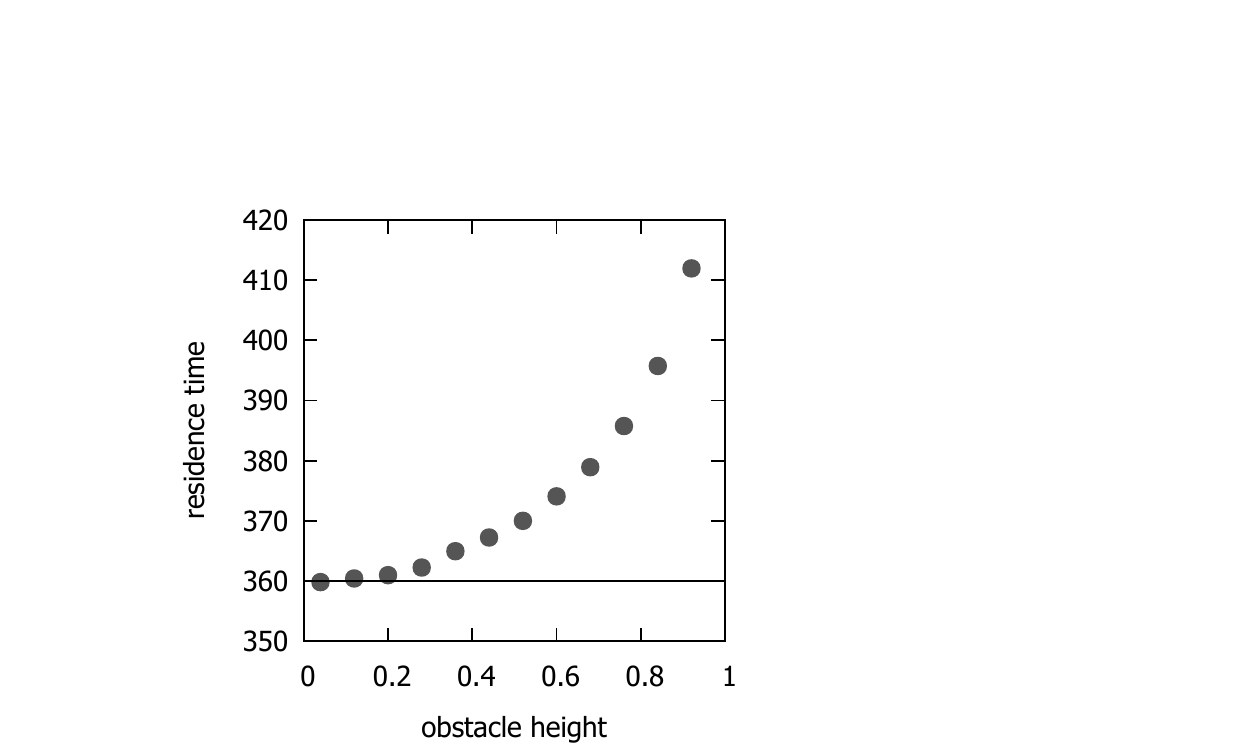}
}
\put(190,60){
  \includegraphics[width=0.88\textwidth]{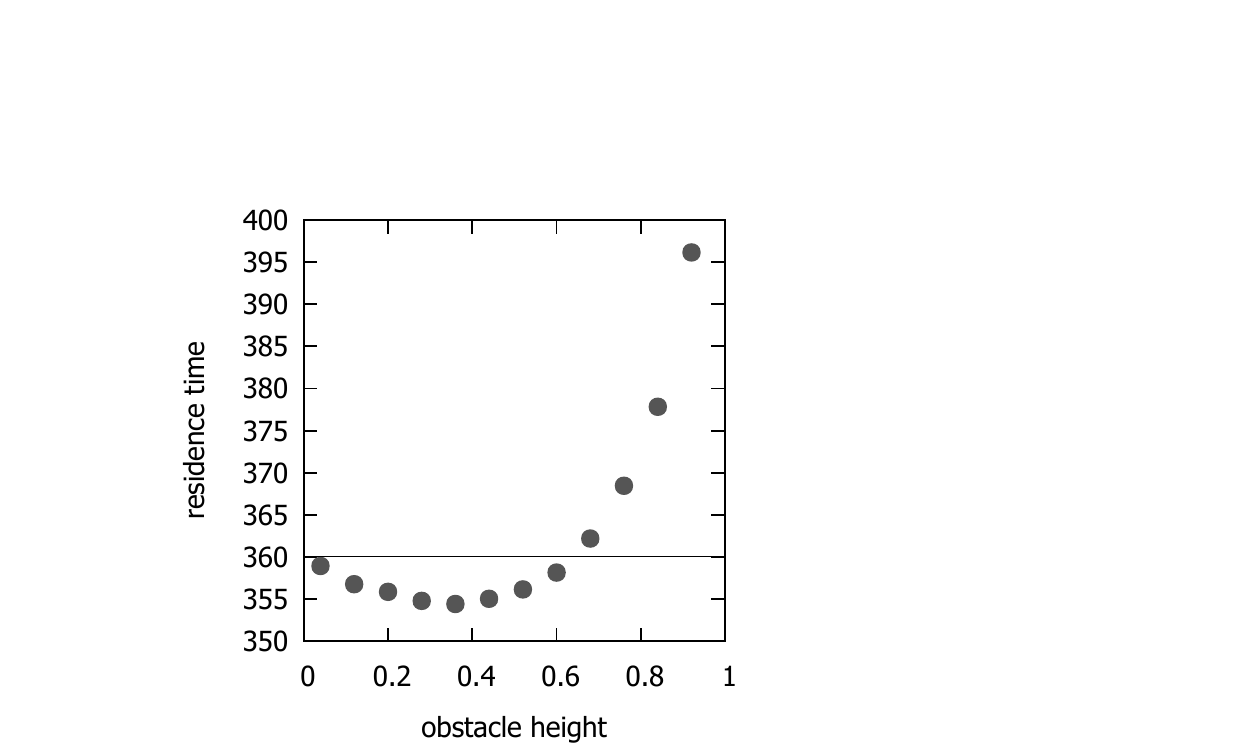}
}
\put(25,-5){
\definecolor{qqqqff}{rgb}{0.,0.,1.}
\definecolor{xdxdff}{rgb}{0.49019607843137253,0.49019607843137253,1.}
\definecolor{uququq}{rgb}{0.25098039215686274,0.25098039215686274,0.25098039215686274}
\definecolor{eqeqeq}{rgb}{0.8784313725490196,0.8784313725490196,0.8784313725490196}
\begin{tikzpicture}[line cap=round,line join=round,>=triangle 45,x=0.5cm,y=0.5cm]
\clip(-0.53,-1.) rectangle (9.53,4.);
\fill[color=eqeqeq,fill=eqeqeq,fill opacity=0.10000000149011612] (0.,0.) -- (0.,3.) -- (9.,3.) -- (9.,0.) -- cycle;
\fill[fill=black,fill opacity=0.10000000149011612] (9.,3.) -- (9.5,3.) -- (9.5,0.) -- (9.,0.) -- cycle;
\fill[fill=black,fill opacity=0.10000000149011612] (0.,3.) -- (-0.5,3.) -- (-0.5,0.) -- (0.,0.) -- cycle;
\fill[color=uququq,fill=uququq,fill opacity=0.10000000149011612] (4.4,2.) -- (4.6,2.) -- (4.6,1.) -- (4.4,1.) -- cycle;
\draw [color=uququq] (0.,0.)-- (0.,3.);
\draw [color=uququq] (0.,3.)-- (9.,3.);
\draw [color=uququq] (9.,3.)-- (9.,0.);
\draw [color=uququq] (9.,0.)-- (0.,0.);
\draw (9.,3.)-- (9.5,3.);
\draw (9.5,3.)-- (9.5,0.);
\draw (9.5,0.)-- (9.,0.);
\draw (9.,0.)-- (9.,3.);
\draw (0.,3.)-- (-0.5,3.);
\draw (-0.5,3.)-- (-0.5,0.);
\draw (-0.5,0.)-- (0.,0.);
\draw (0.,0.)-- (0.,3.);
\draw [color=uququq] (4.4,2.)-- (4.6,2.);
\draw [color=uququq] (4.6,2.)-- (4.6,1.);
\draw [color=uququq] (4.6,1.)-- (4.4,1.);
\draw [color=uququq] (4.4,1.)-- (4.4,2.);
\draw [->] (4.5,2.) -- (4.5,2.85);
\draw [->] (4.5,1.) -- (4.5,0.15);
\begin{scriptsize}
\draw[color=black] (2.546686424710366,2.288597693839765) node {$\Omega$};
\end{scriptsize}
\end{tikzpicture}
}
\put(273,-5){
\definecolor{qqqqff}{rgb}{0.,0.,1.}
\definecolor{xdxdff}{rgb}{0.49019607843137253,0.49019607843137253,1.}
\definecolor{uququq}{rgb}{0.25098039215686274,0.25098039215686274,0.25098039215686274}
\definecolor{eqeqeq}{rgb}{0.8784313725490196,0.8784313725490196,0.8784313725490196}
\begin{tikzpicture}[line cap=round,line join=round,>=triangle 45,x=0.5cm,y=0.5cm]
\clip(-0.5257992564018515,-1.0002443737933087) rectangle (9.513789868443846,3.990645421475449);
\fill[color=eqeqeq,fill=eqeqeq,fill opacity=0.10000000149011612] (0.,0.) -- (0.,3.) -- (9.,3.) -- (9.,0.) -- cycle;
\fill[fill=black,fill opacity=0.10000000149011612] (9.,3.) -- (9.5,3.) -- (9.5,0.) -- (9.,0.) -- cycle;
\fill[fill=black,fill opacity=0.10000000149011612] (0.,3.) -- (-0.5,3.) -- (-0.5,0.) -- (0.,0.) -- cycle;
\fill[color=uququq,fill=uququq,fill opacity=0.10000000149011612] (4.1,2.) -- (4.9,2.) -- (4.9,1.) -- (4.1,1.) -- cycle;
\draw [color=uququq] (0.,0.)-- (0.,3.);
\draw [color=uququq] (0.,3.)-- (9.,3.);
\draw [color=uququq] (9.,3.)-- (9.,0.);
\draw [color=uququq] (9.,0.)-- (0.,0.);
\draw (9.,3.)-- (9.5,3.);
\draw (9.5,3.)-- (9.5,0.);
\draw (9.5,0.)-- (9.,0.);
\draw (9.,0.)-- (9.,3.);
\draw (0.,3.)-- (-0.5,3.);
\draw (-0.5,3.)-- (-0.5,0.);
\draw (-0.5,0.)-- (0.,0.);
\draw (0.,0.)-- (0.,3.);
\draw [color=uququq] (4.1,2.)-- (4.9,2.);
\draw [color=uququq] (4.9,2.)-- (4.9,1.);
\draw [color=uququq] (4.9,1.)-- (4.1,1.);
\draw [color=uququq] (4.1,1.)-- (4.1,2.);
\draw [->] (4.5,2.) -- (4.5,2.85);
\draw [->] (4.5,1.) -- (4.5,0.15);
\begin{scriptsize}
\draw[color=black] (2.546686424710366,2.288597693839765) node {$\Omega$};
\end{scriptsize}
\end{tikzpicture}
}
\end{picture}
\caption{Residence time 
vs.\ height of a centered rectangular obstacle with 
fixed width $4\cdot 10^{-2}$ (on the left) and 
$4 \cdot 10^{-1}$ (on the right). 
Simulation parameters: 
$L_1=4$, 
$L_2=1$,
$t_m=2\cdot 10^{-2}$,
total number of inserted particles $10^8$,
the total number of particles exiting through the right boundary varies 
from $5.3\cdot 10^{5}$ to $3.6\cdot 10^{5}$ (on the left) 
and
from $5.3\cdot 10^{5}$ to $2.1\cdot 10^{5}$  (on the right)
depending on the obstacle height.
The solid lines represent the value of the residence time measured 
for the empty strip (no obstacle).}
\label{f:ct1}
\end{figure}

We now discuss the different cases we considered. All numerical details 
are in the figure captions. 
The statistical error is not represented in the pictures 
since it is negligible and it could not
be appreciated in the graphs.
In each figure we draw a graph reporting the numerical data and 
a schematic picture illustrating the performed experiment. 
We first describe our result and at the end of this section we propose 
a possible interpretation.

In Figure~\ref{f:ct1} we report the residence time as a function of the 
obstacle height. The obstacle is placed at the center of the strip
and its width is very small on the left and larger on the right. 
We notice that in the case of a thin barrier, the residence 
time increases with the height of the obstacle. 
On the other hand, for a wider obstacle, we do not find this 
a priori intuitive result, but we observe a not monotonic dependence 
of the residence time on the obstacle height. In particular, 
it is interesting to remark that if the obstacle height 
is chosen smaller that $0.65$ the residence time 
is smaller than the one measured for the empty strip. 
This effect is even stronger if the width 
of the obstacle is increased (Figure~\ref{f:ct2}).

\begin{figure}[ht!]
\begin{picture}(200,237)(0,0)
\put(-55,60){
  \includegraphics[width=0.88\textwidth]{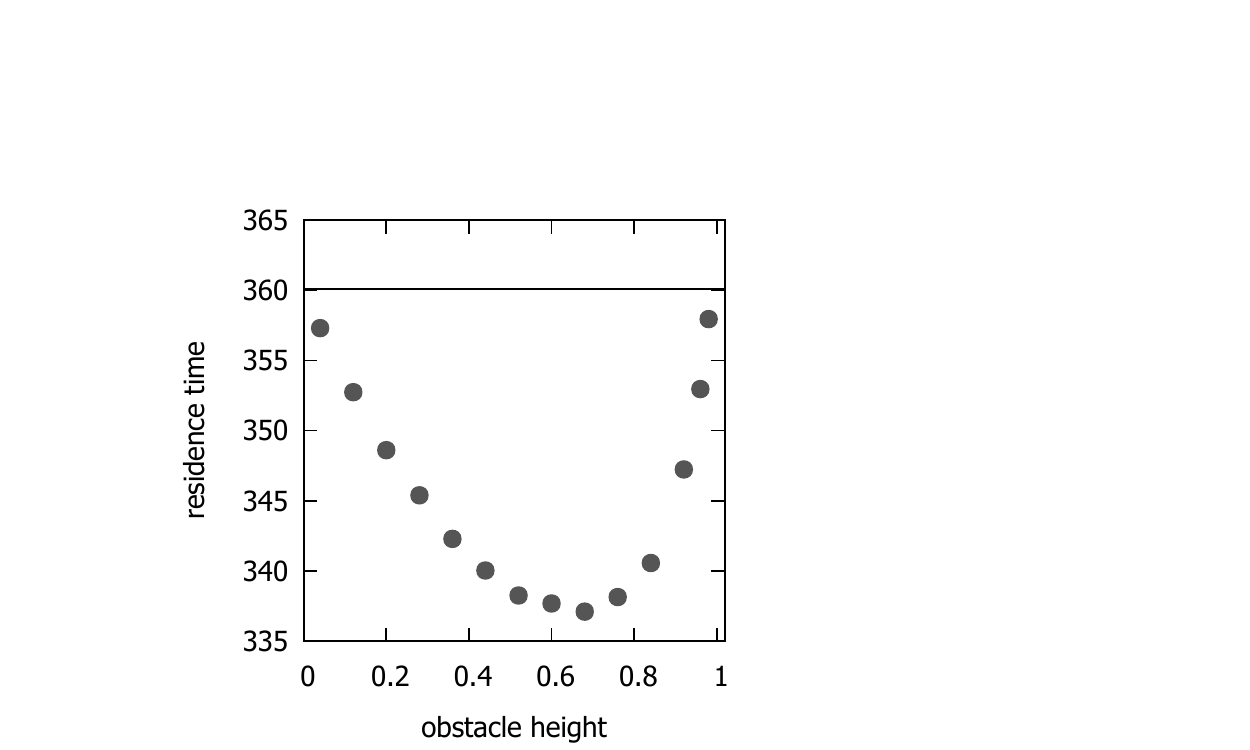}
}
\put(190,60){
  \includegraphics[width=0.88\textwidth]{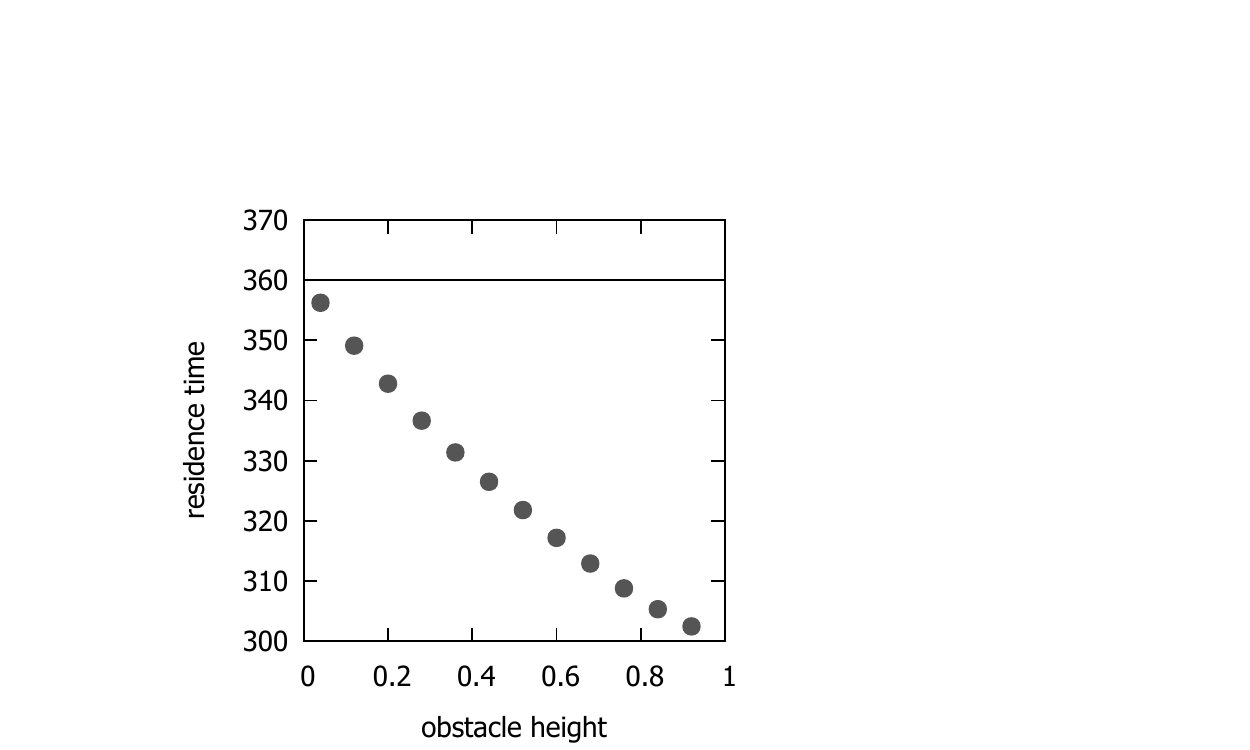}
}
\put(25,-5){
\definecolor{qqqqff}{rgb}{0.,0.,1.}
\definecolor{xdxdff}{rgb}{0.49019607843137253,0.49019607843137253,1.}
\definecolor{uququq}{rgb}{0.25098039215686274,0.25098039215686274,0.25098039215686274}
\definecolor{eqeqeq}{rgb}{0.8784313725490196,0.8784313725490196,0.8784313725490196}
\begin{tikzpicture}[line cap=round,line join=round,>=triangle 45,x=0.5cm,y=0.5cm]
\clip(-0.5257992564018515,-1.0002443737933087) rectangle (9.533059713213223,3.990645421475449);
\fill[color=eqeqeq,fill=eqeqeq,fill opacity=0.10000000149011612] (0.,0.) -- (0.,3.) -- (9.,3.) -- (9.,0.) -- cycle;
\fill[fill=black,fill opacity=0.10000000149011612] (9.,3.) -- (9.5,3.) -- (9.5,0.) -- (9.,0.) -- cycle;
\fill[fill=black,fill opacity=0.10000000149011612] (0.,3.) -- (-0.5,3.) -- (-0.5,0.) -- (0.,0.) -- cycle;
\fill[color=uququq,fill=uququq,fill opacity=0.10000000149011612] (3.6,2.) -- (5.4,2.) -- (5.4,1.) -- (3.6,1.) -- cycle;
\draw [color=uququq] (0.,0.)-- (0.,3.);
\draw [color=uququq] (0.,3.)-- (9.,3.);
\draw [color=uququq] (9.,3.)-- (9.,0.);
\draw [color=uququq] (9.,0.)-- (0.,0.);
\draw (9.,3.)-- (9.5,3.);
\draw (9.5,3.)-- (9.5,0.);
\draw (9.5,0.)-- (9.,0.);
\draw (9.,0.)-- (9.,3.);
\draw (0.,3.)-- (-0.5,3.);
\draw (-0.5,3.)-- (-0.5,0.);
\draw (-0.5,0.)-- (0.,0.);
\draw (0.,0.)-- (0.,3.);
\draw [color=uququq] (3.6,2.)-- (5.4,2.);
\draw [color=uququq] (5.4,2.)-- (5.4,1.);
\draw [color=uququq] (5.4,1.)-- (3.6,1.);
\draw [color=uququq] (3.6,1.)-- (3.6,2.);
\draw [->] (4.5,2.) -- (4.5,2.85);
\draw [->] (4.5,1.) -- (4.5,0.15);
\begin{scriptsize}
\draw[color=black] (2.546686424710366,2.288597693839765) node {$\Omega$};
\end{scriptsize}
\end{tikzpicture}

}
\put(273,-5){
\definecolor{qqqqff}{rgb}{0.,0.,1.}
\definecolor{xdxdff}{rgb}{0.49019607843137253,0.49019607843137253,1.}
\definecolor{uququq}{rgb}{0.25098039215686274,0.25098039215686274,0.25098039215686274}
\definecolor{eqeqeq}{rgb}{0.8784313725490196,0.8784313725490196,0.8784313725490196}
\begin{tikzpicture}[line cap=round,line join=round,>=triangle 45,x=0.5cm,y=0.5cm]
\clip(-0.5257992564018515,-1.0002443737933087) rectangle (9.533059713213223,3.990645421475449);
\fill[color=eqeqeq,fill=eqeqeq,fill opacity=0.10000000149011612] (0.,0.) -- (0.,3.) -- (9.,3.) -- (9.,0.) -- cycle;
\fill[fill=black,fill opacity=0.10000000149011612] (9.,3.) -- (9.5,3.) -- (9.5,0.) -- (9.,0.) -- cycle;
\fill[fill=black,fill opacity=0.10000000149011612] (0.,3.) -- (-0.5,3.) -- (-0.5,0.) -- (0.,0.) -- cycle;
\fill[color=uququq,fill=uququq,fill opacity=0.10000000149011612] (3.2,2.) -- (5.8,2.) -- (5.8,1.) -- (3.2,1.) -- cycle;
\draw [color=uququq] (0.,0.)-- (0.,3.);
\draw [color=uququq] (0.,3.)-- (9.,3.);
\draw [color=uququq] (9.,3.)-- (9.,0.);
\draw [color=uququq] (9.,0.)-- (0.,0.);
\draw (9.,3.)-- (9.5,3.);
\draw (9.5,3.)-- (9.5,0.);
\draw (9.5,0.)-- (9.,0.);
\draw (9.,0.)-- (9.,3.);
\draw (0.,3.)-- (-0.5,3.);
\draw (-0.5,3.)-- (-0.5,0.);
\draw (-0.5,0.)-- (0.,0.);
\draw (0.,0.)-- (0.,3.);
\draw [color=uququq] (3.2,2.)-- (5.8,2.);
\draw [color=uququq] (5.8,2.)-- (5.8,1.);
\draw [color=uququq] (5.8,1.)-- (3.2,1.);
\draw [color=uququq] (3.2,1.)-- (3.2,2.);
\draw [->] (4.5,2.) -- (4.5,2.85);
\draw [->] (4.5,1.) -- (4.5,0.15);
\begin{scriptsize}
\draw[color=black] (2.546686424710366,2.288597693839765) node {$\Omega$};
\end{scriptsize}
\end{tikzpicture}

}
\end{picture}
\caption{Residence time
vs.\ height of a centered rectangular obstacle with 
fixed width $8\cdot 10^{-1}$ (on the left) and 
$12 \cdot 10^{-1}$ (on the right). 
Simulation parameters: 
$L_1=4$, 
$L_2=1$,
$t_m=2\cdot 10^{-2}$,
total number of inserted particles $10^8$,
the total number of particles exiting through the right boundary varies 
from $5.2\cdot 10^{5}$ to $1.4\cdot 10^{5}$  (on the left)
and
from $5.2 \cdot 10^{5}$ to $1.1 \cdot 10^{5}$ (on the right)
depending on the obstacle height.
The solid lines represent the value of the residence time measured 
for the empty strip (no obstacle).
}
\label{f:ct2}
\end{figure}

In the left panel of Figure~\ref{f:ct3} we report the residence time as a 
function of the obstacle width. 
The obstacle is placed at the center of the strip
and its height is fixed to $0.8$. 
When the barrier is thin the residence time is larger than 
the one measured in the empty strip case. But, when the width 
is increased, the residence time decreases and at about $0.7$ it 
becomes smaller than the empty case value. 
The minimum is reached at about $2.3$, then the residence time starts 
to increase and when the width of the obstacles equals that of the strip 
the residence time becomes equal to the empty strip value. 
This last fact is rather obvious, indeed, in this case the strip reduces 
to two independent channels having the same width of the original strip. 

In the right panel of Figure~\ref{f:ct3} a centered square obstacle is 
considered. We note that the residence time happens to be a 
monotonic decreasing function of the obstacle side length. 

\begin{figure}[t!]
\begin{picture}(200,237)(0,0)
\put(-55,60){
  \includegraphics[width=0.88\textwidth]{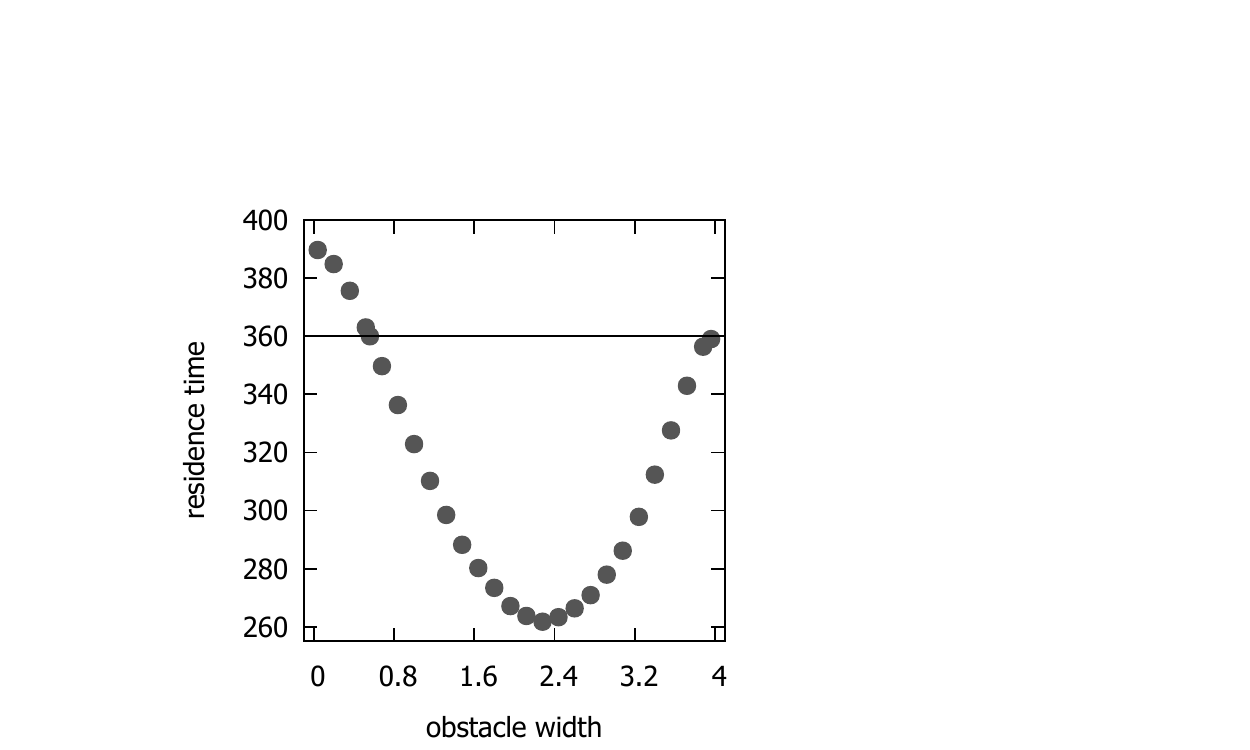}
}
\put(190,60){
  \includegraphics[width=0.88\textwidth]{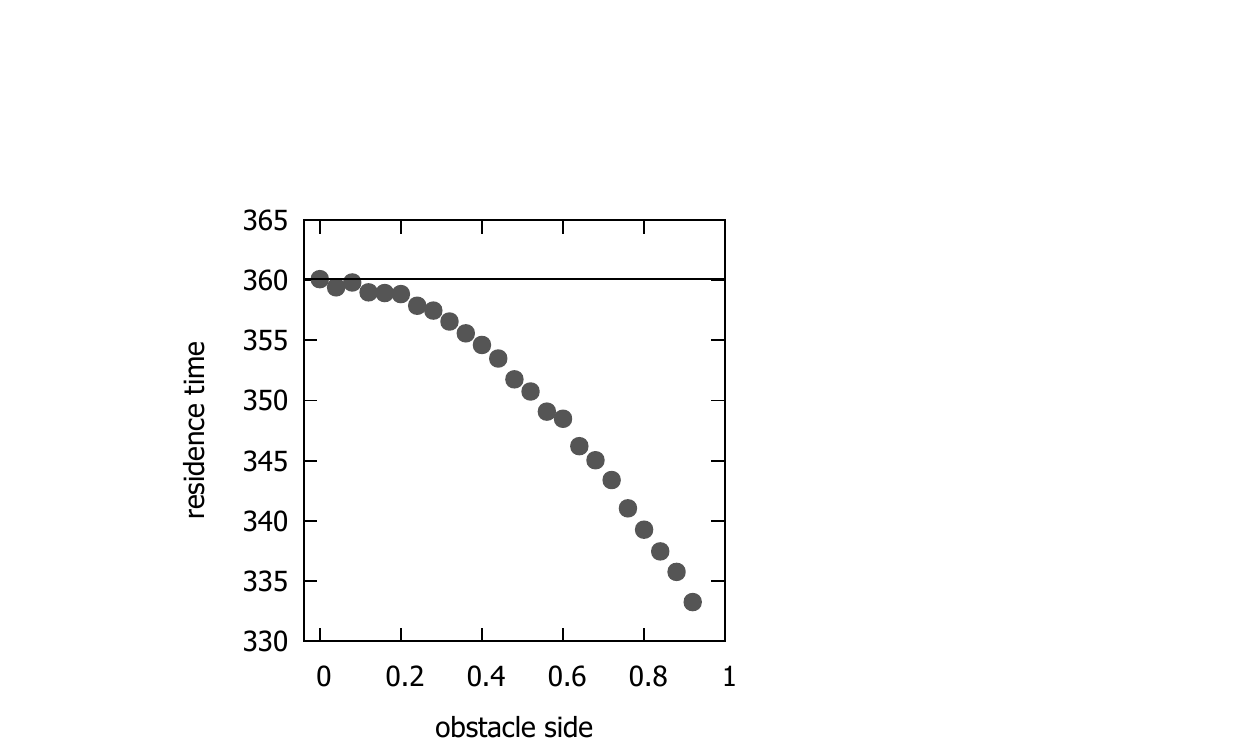}
}
\put(25,-5){
\definecolor{qqqqff}{rgb}{0.,0.,1.}
\definecolor{xdxdff}{rgb}{0.49019607843137253,0.49019607843137253,1.}
\definecolor{uququq}{rgb}{0.25098039215686274,0.25098039215686274,0.25098039215686274}
\definecolor{eqeqeq}{rgb}{0.8784313725490196,0.8784313725490196,0.8784313725490196}
\begin{tikzpicture}[line cap=round,line join=round,>=triangle 45,x=0.5cm,y=0.5cm]
\clip(-0.5176496608172668,-0.5023924992404349) rectangle (9.515351336167397,3.501237469349578);
\fill[color=eqeqeq,fill=eqeqeq,fill opacity=0.10000000149011612] (0.,0.) -- (0.,3.) -- (9.,3.) -- (9.,0.) -- cycle;
\fill[fill=black,fill opacity=0.10000000149011612] (9.,3.) -- (9.5,3.) -- (9.5,0.) -- (9.,0.) -- cycle;
\fill[fill=black,fill opacity=0.10000000149011612] (0.,3.) -- (-0.5,3.) -- (-0.5,0.) -- (0.,0.) -- cycle;
\fill[fill=black,fill opacity=0.10000000149011612] (4.4,2.6) -- (4.6,2.6) -- (4.6,0.4) -- (4.4,0.4) -- cycle;
\draw [color=uququq] (0.,0.)-- (0.,3.);
\draw [color=uququq] (0.,3.)-- (9.,3.);
\draw [color=uququq] (9.,3.)-- (9.,0.);
\draw [color=uququq] (9.,0.)-- (0.,0.);
\draw (9.,3.)-- (9.5,3.);
\draw (9.5,3.)-- (9.5,0.);
\draw (9.5,0.)-- (9.,0.);
\draw (9.,0.)-- (9.,3.);
\draw (0.,3.)-- (-0.5,3.);
\draw (-0.5,3.)-- (-0.5,0.);
\draw (-0.5,0.)-- (0.,0.);
\draw (0.,0.)-- (0.,3.);
\draw (4.4,2.6)-- (4.6,2.6);
\draw (4.6,2.6)-- (4.6,0.4);
\draw (4.6,0.4)-- (4.4,0.4);
\draw (4.4,0.4)-- (4.4,2.6);
\draw [->] (4.4,1.6) -- (3.4,1.6);
\draw [->] (4.6,1.6) -- (5.6,1.6);
\begin{scriptsize}
\draw[color=black] (2.546686424710366,2.288597693839765) node {$\Omega$};
\end{scriptsize}
\end{tikzpicture}

}
\put(273,-5){
\definecolor{qqqqff}{rgb}{0.,0.,1.}
\definecolor{xdxdff}{rgb}{0.49019607843137253,0.49019607843137253,1.}
\definecolor{uququq}{rgb}{0.25098039215686274,0.25098039215686274,0.25098039215686274}
\definecolor{eqeqeq}{rgb}{0.8784313725490196,0.8784313725490196,0.8784313725490196}
\begin{tikzpicture}[line cap=round,line join=round,>=triangle 45,x=0.5cm,y=0.5cm]
\clip(-0.55,-1.) rectangle (9.55,4.);
\fill[color=eqeqeq,fill=eqeqeq,fill opacity=0.10000000149011612] (0.,0.) -- (0.,3.) -- (9.,3.) -- (9.,0.) -- cycle;
\fill[fill=black,fill opacity=0.10000000149011612] (9.,3.) -- (9.5,3.) -- (9.5,0.) -- (9.,0.) -- cycle;
\fill[fill=black,fill opacity=0.10000000149011612] (0.,3.) -- (-0.5,3.) -- (-0.5,0.) -- (0.,0.) -- cycle;
\fill[color=uququq,fill=uququq,fill opacity=0.10000000149011612] (4.,2.) -- (5.,2.) -- (5.,1.) -- (4.,1.) -- cycle;
\draw [color=uququq] (0.,0.)-- (0.,3.);
\draw [color=uququq] (0.,3.)-- (9.,3.);
\draw [color=uququq] (9.,3.)-- (9.,0.);
\draw [color=uququq] (9.,0.)-- (0.,0.);
\draw (9.,3.)-- (9.5,3.);
\draw (9.5,3.)-- (9.5,0.);
\draw (9.5,0.)-- (9.,0.);
\draw (9.,0.)-- (9.,3.);
\draw (0.,3.)-- (-0.5,3.);
\draw (-0.5,3.)-- (-0.5,0.);
\draw (-0.5,0.)-- (0.,0.);
\draw (0.,0.)-- (0.,3.);
\draw [color=uququq] (4.,2.)-- (5.,2.);
\draw [color=uququq] (5.,2.)-- (5.,1.);
\draw [color=uququq] (5.,1.)-- (4.,1.);
\draw [color=uququq] (4.,1.)-- (4.,2.);
\draw [->] (4.,1.5) -- (3.15,1.5);
\draw [->] (4.5,2.) -- (4.5,2.85);
\draw [->] (5.,1.5) -- (5.85,1.5);
\draw [->] (4.5,1.) -- (4.5,0.15);
\begin{scriptsize}
\draw[color=black] (2.546686424710366,2.288597693839765) node {$\Omega$};
\end{scriptsize}
\end{tikzpicture}

}
\end{picture}
\caption{Residence time 
vs.\ width of a centered rectangular obstacle with 
fixed height $0.8$ (on the left) and 
vs.\ the side length of a centered squared obstacle 
(on the right). 
Simulation parameters: 
$L_1=4$, 
$L_2=1$,
$t_m=2\cdot 10^{-2} $,
total number of inserted particles $10^{8}$,
the total number of particles exiting through the right boundary varies 
from $4.2 \cdot 10^{5}$ to $1.1\cdot 10^{5}$ (on the left)
and 
from $5.3\cdot 10^5 $ to $1.3\cdot 10^5$ (on the right)
depending on the obstacle width.
The solid lines represent the value of the residence time measured 
for the empty strip (no obstacle).
}
\label{f:ct3}
\end{figure}

In Figure~\ref{f:ct4} we show that, and this is really surprising,
the residence time is not monotonic even
as a function of the position of the center of the obstacle. 
In the left panel a squared obstacle of side length $0.8$ is considered, 
whereas in the right panel a thin rectangular obstacle $0.04\times0.8$ 
is placed in the strip. 
In both cases the residence time is not monotonic and attains its 
minimum value when the obstacle is placed in the center of the strip. 
It is worth noting, that in the case on the left when the position 
of the center lays between $1.5$ and $2.5$ the residence time in 
presence of the obstacles is smaller than the corresponding value for 
the empty strip. 

\begin{figure}[t!]
\begin{picture}(200,237)(0,0)
\put(-55,60){
  \includegraphics[width=0.88\textwidth]{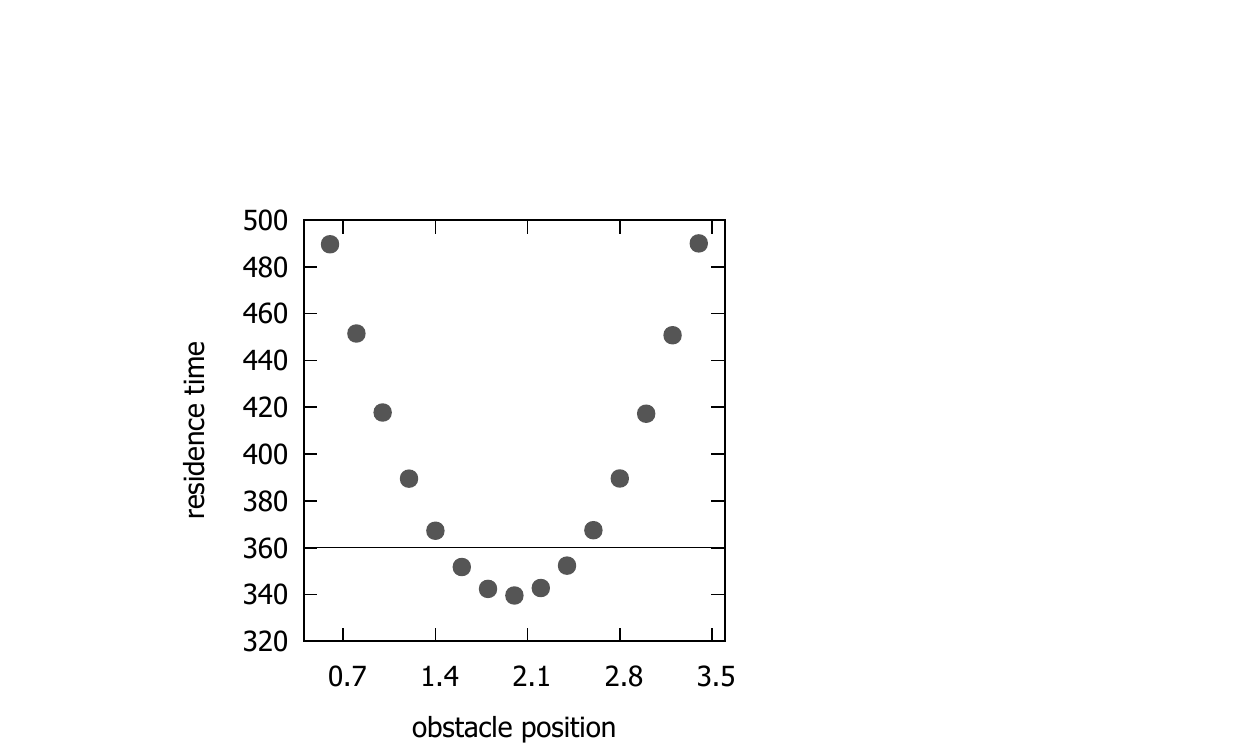}
}
\put(190,60){
  \includegraphics[width=0.88\textwidth]{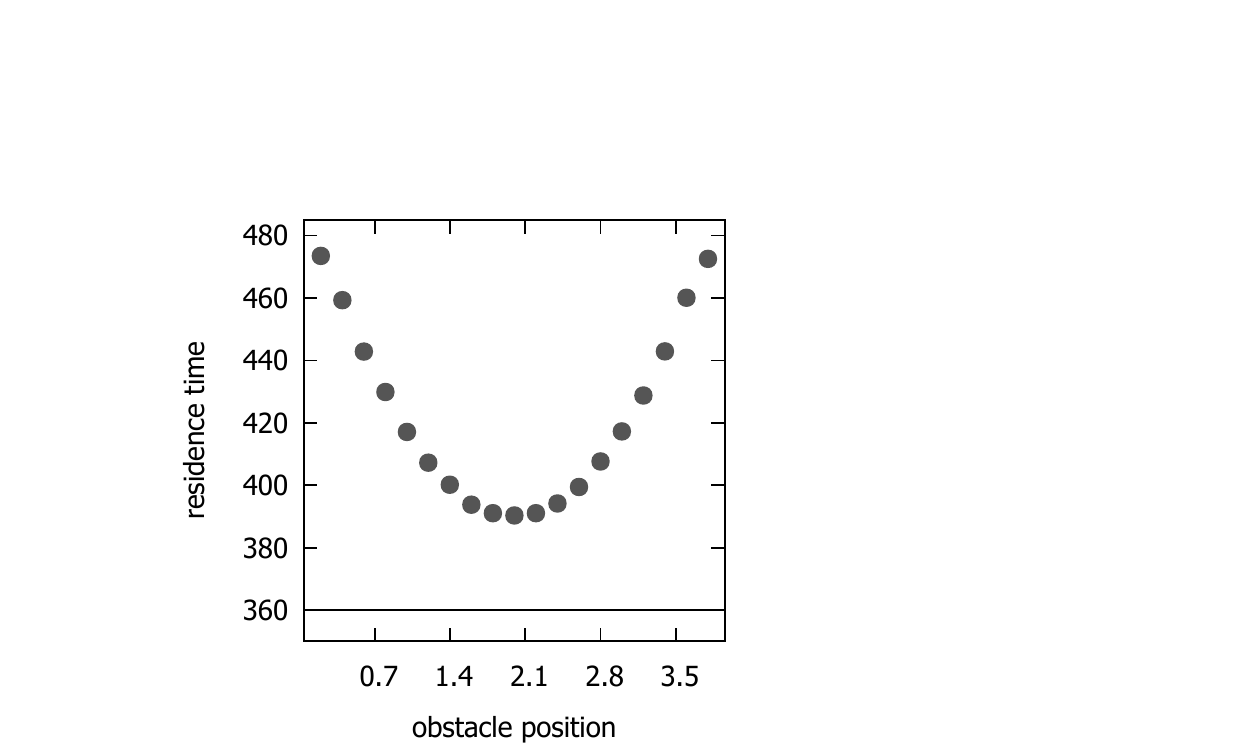}
}
\put(25,-5){
\definecolor{qqqqff}{rgb}{0.,0.,1.}
\definecolor{xdxdff}{rgb}{0.49019607843137253,0.49019607843137253,1.}
\definecolor{uququq}{rgb}{0.25098039215686274,0.25098039215686274,0.25098039215686274}
\definecolor{eqeqeq}{rgb}{0.8784313725490196,0.8784313725490196,0.8784313725490196}
\begin{tikzpicture}[line cap=round,line join=round,>=triangle 45,x=0.5cm,y=0.5cm]
\clip(-0.55,-1.) rectangle (9.55,4.);
\fill[color=eqeqeq,fill=eqeqeq,fill opacity=0.10000000149011612] (0.,0.) -- (0.,3.) -- (9.,3.) -- (9.,0.) -- cycle;
\fill[fill=black,fill opacity=0.10000000149011612] (9.,3.) -- (9.5,3.) -- (9.5,0.) -- (9.,0.) -- cycle;
\fill[fill=black,fill opacity=0.10000000149011612] (0.,3.) -- (-0.5,3.) -- (-0.5,0.) -- (0.,0.) -- cycle;
\fill[color=uququq,fill=uququq,fill opacity=0.10000000149011612] (0.5,2.5) -- (2.5,2.5) -- (2.5,0.5) -- (0.5,0.5) -- cycle;
\draw [color=uququq] (0.,0.)-- (0.,3.);
\draw [color=uququq] (0.,3.)-- (9.,3.);
\draw [color=uququq] (9.,3.)-- (9.,0.);
\draw [color=uququq] (9.,0.)-- (0.,0.);
\draw (9.,3.)-- (9.5,3.);
\draw (9.5,3.)-- (9.5,0.);
\draw (9.5,0.)-- (9.,0.);
\draw (9.,0.)-- (9.,3.);
\draw (0.,3.)-- (-0.5,3.);
\draw (-0.5,3.)-- (-0.5,0.);
\draw (-0.5,0.)-- (0.,0.);
\draw (0.,0.)-- (0.,3.);
\draw [color=uququq] (0.5,2.5)-- (2.5,2.5);
\draw [color=uququq] (2.5,2.5)-- (2.5,0.5);
\draw [color=uququq] (2.5,0.5)-- (0.5,0.5);
\draw [color=uququq] (0.5,0.5)-- (0.5,2.5);
\draw [->] (1.5,1.5) -- (3.5,1.5);
\begin{scriptsize}
\draw [fill=black] (1.5,1.5) circle (2.pt);
\draw[color=black] (1.4090575422785456,1.93586434957070753) node {$C$};
\draw[color=black] (5.546686424710366,2.288597693839765) node {$\Omega$};
\end{scriptsize}
\end{tikzpicture}

}
\put(273,-5){

\definecolor{qqqqff}{rgb}{0.,0.,1.}
\definecolor{xdxdff}{rgb}{0.49019607843137253,0.49019607843137253,1.}
\definecolor{uququq}{rgb}{0.25098039215686274,0.25098039215686274,0.25098039215686274}
\definecolor{eqeqeq}{rgb}{0.8784313725490196,0.8784313725490196,0.8784313725490196}
\begin{tikzpicture}[line cap=round,line join=round,>=triangle 45,x=0.5cm,y=0.5cm]
\clip(-0.55,-1.) rectangle (9.55,4.);
\fill[color=eqeqeq,fill=eqeqeq,fill opacity=0.10000000149011612] (0.,0.) -- (0.,3.) -- (9.,3.) -- (9.,0.) -- cycle;
\fill[fill=black,fill opacity=0.10000000149011612] (9.,3.) -- (9.5,3.) -- (9.5,0.) -- (9.,0.) -- cycle;
\fill[fill=black,fill opacity=0.10000000149011612] (0.,3.) -- (-0.5,3.) -- (-0.5,0.) -- (0.,0.) -- cycle;
\fill[color=uququq,fill=uququq,fill opacity=0.10000000149011612] (0.3,2.6) -- (0.5,2.6) -- (0.5,0.4) -- (0.3,0.4) -- cycle;
\draw [color=uququq] (0.,0.)-- (0.,3.);
\draw [color=uququq] (0.,3.)-- (9.,3.);
\draw [color=uququq] (9.,3.)-- (9.,0.);
\draw [color=uququq] (9.,0.)-- (0.,0.);
\draw (9.,3.)-- (9.5,3.);
\draw (9.5,3.)-- (9.5,0.);
\draw (9.5,0.)-- (9.,0.);
\draw (9.,0.)-- (9.,3.);
\draw (0.,3.)-- (-0.5,3.);
\draw (-0.5,3.)-- (-0.5,0.);
\draw (-0.5,0.)-- (0.,0.);
\draw (0.,0.)-- (0.,3.);
\draw [color=uququq] (0.3,2.6)-- (0.5,2.6);
\draw [color=uququq] (0.5,2.6)-- (0.5,0.4);
\draw [color=uququq] (0.5,0.4)-- (0.3,0.4);
\draw [color=uququq] (0.3,0.4)-- (0.3,2.6);
\draw [->] (0.4,1.5) -- (1.4,1.5);
\begin{scriptsize}
\draw [fill=black] (0.4,1.5) circle (2.pt);
\draw[color=black] (0.4090575422785456,1.93586434957070753) node {$C$};
\draw[color=black] (5.546686424710366,2.288597693839765) node {$\Omega$};
\end{scriptsize}
\end{tikzpicture}

}
\end{picture}
\caption{Residence time 
vs.\ position of the center of the obstacle.
The obstacle is a square of side length $0.8$ on the left and a
rectangle of side lengths $0.04$ and $0.8$ on the right. 
Simulation parameters: 
$L_1=4$, 
$L_2=1$,
$t_m=2\cdot 10^{-2}$,
total number of inserted particles $10^8$,
the total number of particles exiting through the right boundary is stable at the order of $2.6\cdot 10^5$ (on the left)
and 
of $4\cdot 10^5$ (on the right)
not depending on the obstacle position.
The solid lines represent the value of the residence time measured 
for the empty strip (no obstacle).
}
\label{f:ct4}
\end{figure}

Summarizing, the numerical experiments reported in 
Figures~\ref{f:ct1}--\ref{f:ct4} show that the residence time 
strongly depends on the obstacle geometry and position. 
In particular it is seen that large centered obstacles favor the selection 
of particles crossing the strip faster than in the empty strip 
case. 

A possible interpretation of these results can be given.
The strip $(0,L_1)\times(0,L_2)$ is partitioned in the three 
rectangles $L$ (the part on the left of the obstacles), 
$R$ (the part on the right of the obstacles), 
and $C=(0,L_1)\times(0,L_2)\setminus(L\cup R)$.
The phenomenon we reported above can be explained as a consequence 
of two competing effects: the total time spent by a particle in 
the channels
between the obstacle and the horizontal boundaries 
is smaller with respect to the time typically spent in $C$  
in the empty strip case because of the volume reduction 
due to the presence of the obstacle. 
On the other hand the times spent 
in L and in R 
are larger if compared to the times spent there by a particle in the 
empty strip case, due to the fact that it is more difficult to leave 
these regions and enter in the channels flankig the obstacle. 
The increase or the decrease of the residence 
time compared to the empty strip case depends on which of the two effects 
dominates the particle dynamics. 

\begin{figure}[!ht]
\begin{picture}(200,140)(0,0)
\put(-26,-30){
  \includegraphics[width=0.72\textwidth]{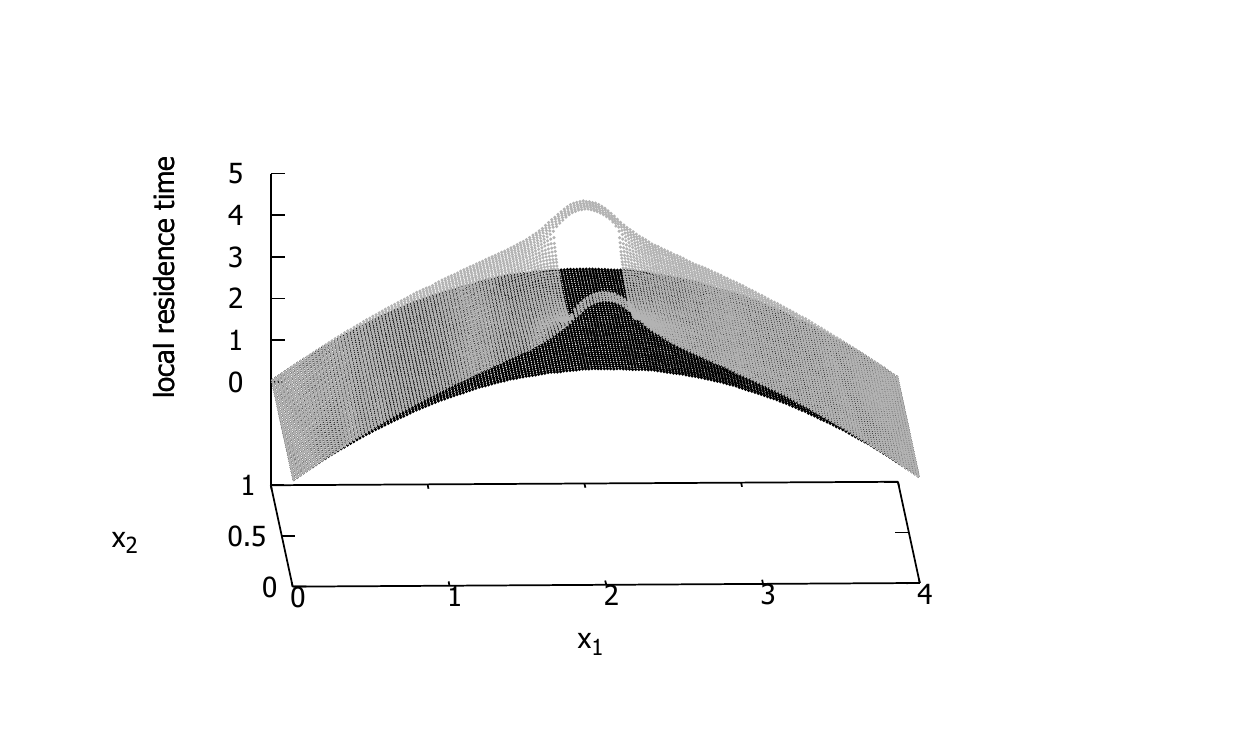}
}
\put(240,-10){
  \includegraphics[width=0.72\textwidth]{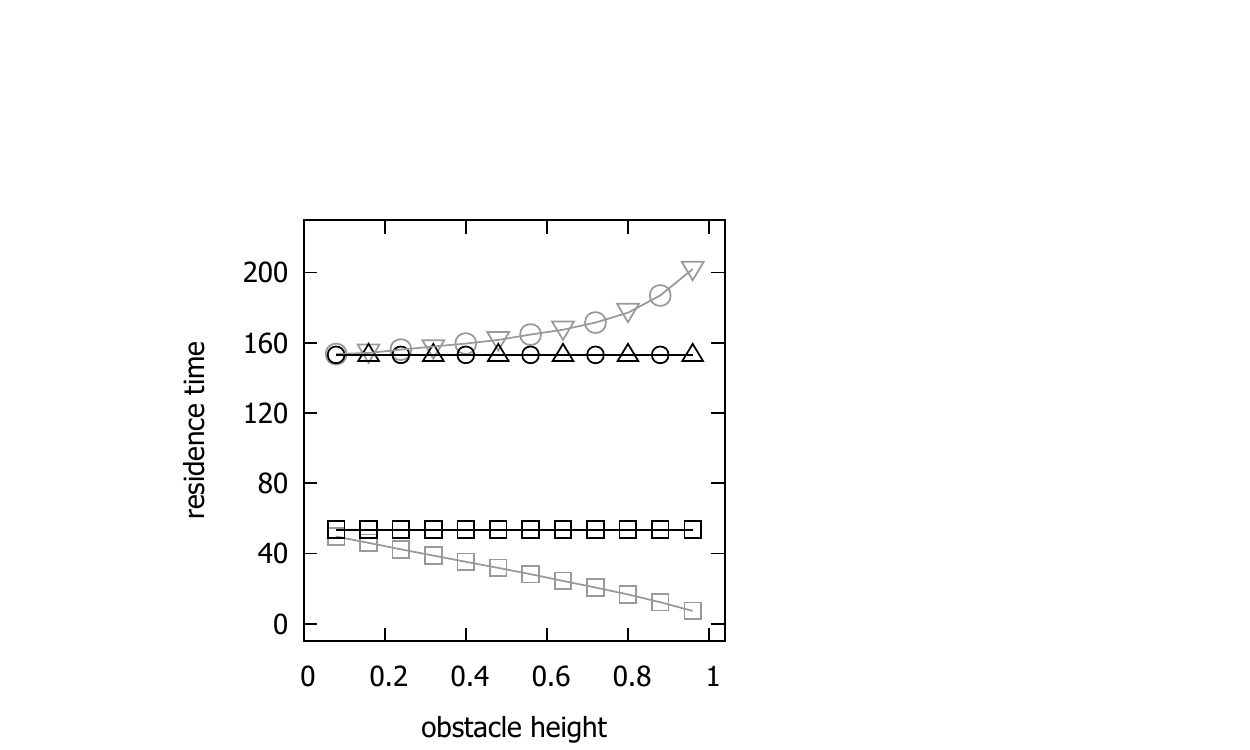}
}
\end{picture}
\caption{As in the right panel in Figure~\ref{f:ct1}. In the left 
panel the height of the obstacle is equal to $0.8$. 
Left panel: the mean time spent by particles crossing the strip 
in each point of the strip 
($0.02\times0.02$ cells have been considered) for the empty strip case 
(black) and in presence of the obstacle (gray).
Right panel: 
residence time in regions L (circles), C (squares), and R (triangles)
in presence of the obstacle (gray) and 
for the empty strip case (black).
}
\label{f:occ2}
\end{figure}

In Figure~\ref{f:occ2} we consider the geometry 
in the right panel of Figure~\ref{f:ct1}. 
We compute the average time spent by particles in small squared cells 
($0.02\times0.02$). This local residence time in presence of the obstacle 
is larger than the one measured in the empty strip, indeed the 
gray surface in the picture is always above the black one. 
But, if the total residence time spent in the regions 
$L,C$, and $R$ is computed, one discovers that the time spent in the 
region $C$ decreases in presence of the obstacle, whereas the time spent in 
$L$ and $R$ increases. 
Note that the local residence time in the cells belonging to the 
channels in $C$ is larger with respect to the empty strip case, 
but the total time in $C$ is smaller due to the fact that the available
volume in $C$ is decreased by the presence of the obstacle.
Hence, the result in the right panel in Figure~\ref{f:ct1}
can be explained as follows: if the height of the obstacle 
is smaller than $0.6$ the effect in $C$ dominates the one in $L$ and $R$
so that the total residence time decreases. On the other hand, 
when the height is larger than $0.6$ the increase of the 
residence time in $L$ and $R$ dominates its decrease 
in $C$, so that the total residence time increases.

\begin{figure}[!ht]
\begin{picture}(200,140)(0,0)
\put(-26,-30){
  \includegraphics[width=0.72\textwidth]{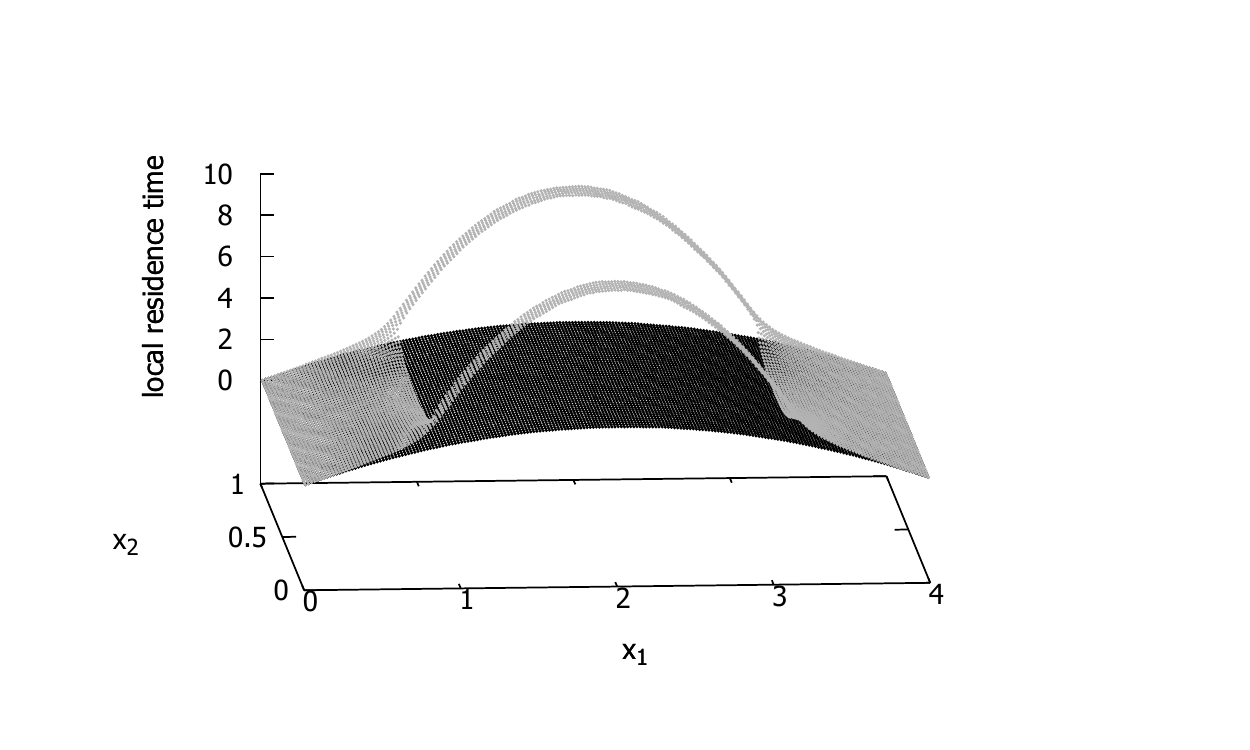}
}
\put(240,-10){
  \includegraphics[width=0.72\textwidth]{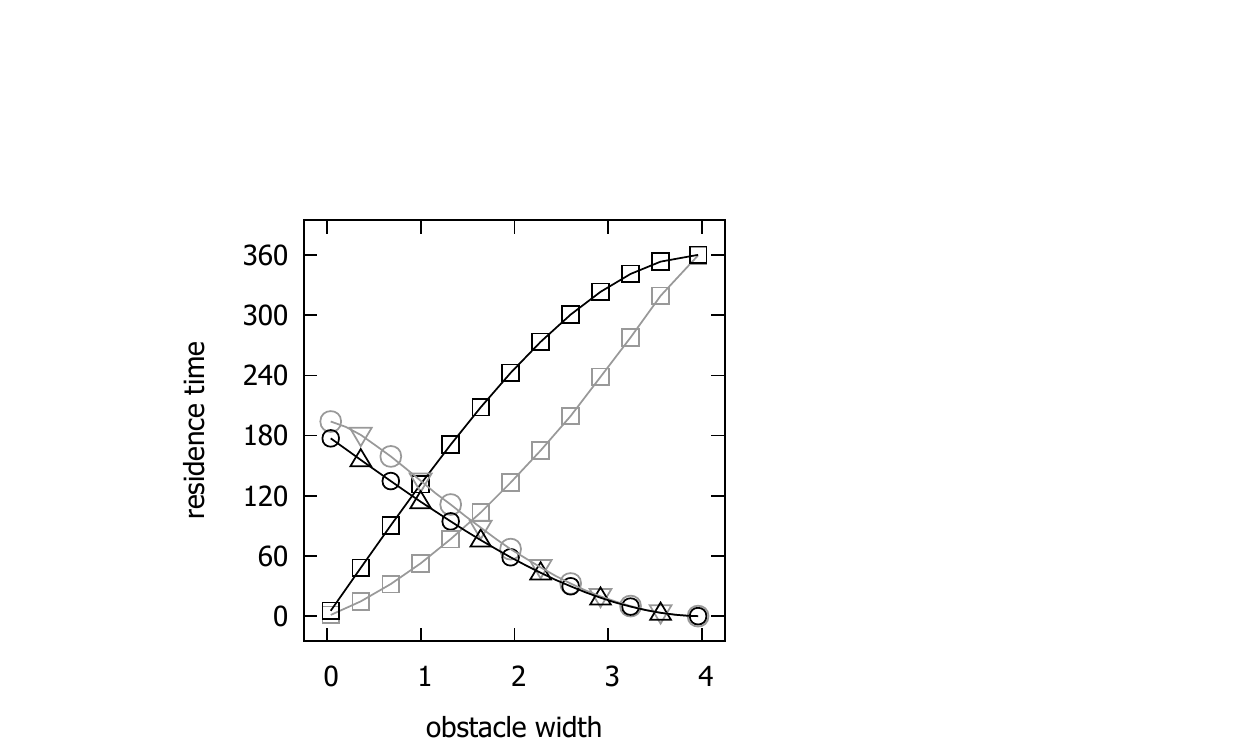}
}
\end{picture}
\caption{As in Figure~\ref{f:occ2} for the geometry 
in the left panel in Figure~\ref{f:ct3}.
In the left panel the width of the obstacle is $2.28$.
}
\label{f:occ5}
\end{figure}

\begin{figure}[!ht]
\begin{picture}(200,140)(0,0)
\put(-26,-30){
  \includegraphics[width=0.72\textwidth]{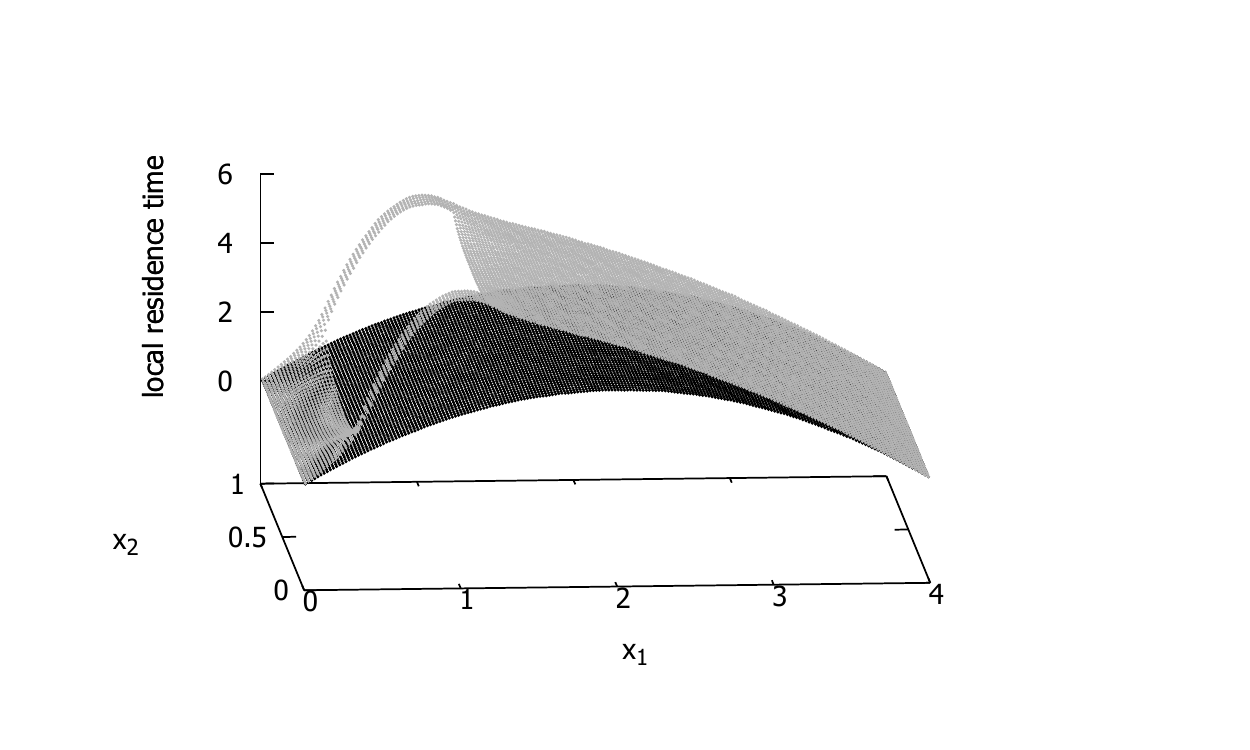}
}
\put(240,-10){
  \includegraphics[width=0.72\textwidth]{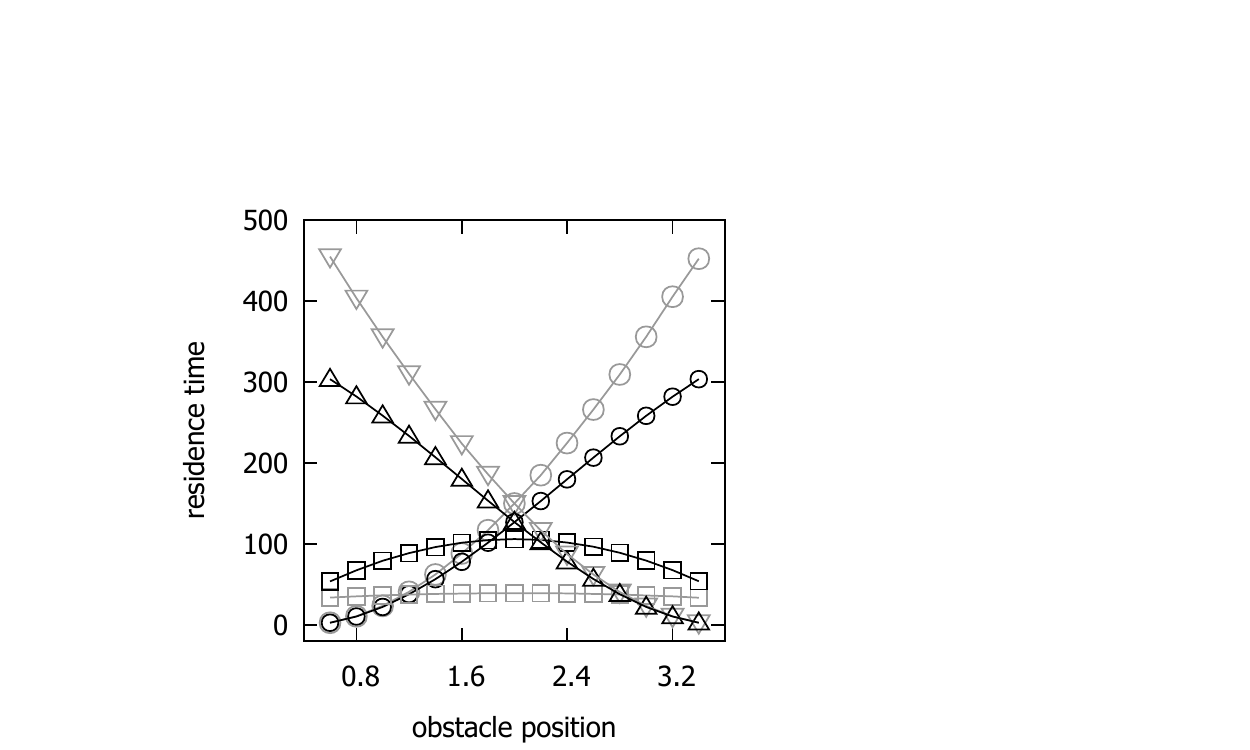}
}
\end{picture}
\caption{As in Figure~\ref{f:occ2} for the geometry 
in the left panel in Figure~\ref{f:ct4}.
In the left panel the position of the center of the obstacle is $0.8$.
}
\label{f:occ7}
\end{figure}

The Figure~\ref{f:occ5}, referring to the geometry 
in the left panel in Figure~\ref{f:ct3}, and 
the Figure~\ref{f:occ7}, referring to the geometry  
in the left panel in Figure~\ref{f:ct4},
can be discussed similarly. 
We just note that in Figures~\ref{f:occ2} and \ref{f:occ5}
the circles and triangles, which correspond to the residence 
time in $L$ and $R$, coincide due to the symmetry of the 
system. Indeed, in both cases the center of the obstacle is 
at the center of the strip.

\vspace{0.4cm}
\section{Proof of results}
\label{s:dimo} 
We prove Theorems \ref{th:ex!g_stat} and \ref{th:g_to_rho}.
We firstly construct the solution of the linear Boltzmann problem in form of a Dyson series.
Then we are able to prove the existence and uniqueness of the associated stationary problem. 
To do this we exploit the diffusive limit of the linear Boltzmann equation in a $L^\infty$ setting and in a bigger domain containing $\Omega$, by means of the Hilbert expansion method (see \cite{BNPP, CIP, EP}). 
The stationary solution is constructed in the form of a Neumann series to avoid the exchange of the limits $t\to \infty$, $\varepsilon\to 0$, following the idea of \cite{BNPP}.
Eventually we prove the convergence of the stationary state to the solution of the mixed Laplace problem. This also requires the Hilbert expansion method.
The auxiliary results stated are proved after the main theorems.

Let us consider the  problem  (\ref{eq:boltz_lin})-(\ref{eq:bound_cond_h_eps}) with the  datum  $g_\varepsilon(x,v,0)=f_0(x,v)\in L^\infty(\Omega\times S^1)$. 
We can express the operator defined  in (\ref{eq:op_L}) as $\cc{L} f(v) = 2 \lambda (\cc{K}-\cc{I})f(v)$ , where 
\begin{equation}\label{eq:def_K}
(\cc{K}f)(v)=\frac{1}{2} \int_{-1}^1 \ud \delta \, f(v')
\end{equation}
and $\cc{I}$ is the identity. Therefore the equation \eqref{eq:boltz_lin} can be written as 
\begin{equation}\label{eq:4.14}
\partial_t f + (v\cdot \nabla_x + 2 \eta_\varepsilon \lambda \cc{I}  )f= 2 \eta_\varepsilon \lambda \cc{K} f. 
\end{equation} 

We want to exploit the Duhamel's principle and express the solution as a series expansion.
We consider the semigroup  generated by $\cc{A}=(v\cdot \nabla_x + 2 \eta_\varepsilon \lambda \cc{I}  )$. We recall that in the whole plane $\bb{R}^2$ this semigroup acts as  $e^{-t\cc{A}} f(x,v)= e^{- 2\lambda\eta_\varepsilon t} f(x-vt,v) $, while the semigroup generated only by the transport term $v\cdot\nabla_x $ would be $e^{-t \,v\cdot \nabla}f(x,v)=f(x-vt,v)$.

We want to consider the semigroup generated by $\cc{A}$ on our domain $\Omega$ initial datum $f_0$ and  boundary conditions \eqref{eq:bound_cond_h_eps}.
Recall that $\partial \Omega_E$ is a specular reflective boundary while on $\partial \Omega_L \cup \partial \Omega_R$ the system is in contact with reservoirs with particle densities $f_B(x,v)$.
Since the equation describes the evolution of a particle moving in the space with velocity of modulus one, having random collisions with impact parameter $\delta$,
for any sequence of collision  times and impact parameters $t_i$, $\delta_i$ we can construct the backward trajectory of a particle as long as it stays in $\Omega$. 
Indeed the backward trajectory for a particle in $(x,v)$ at time $t$ starts by letting the particle move with velocity $-v$. For a time $t-t_1$ it does not hit any scatterers, but if the particle reaches the elastic boundary $\partial\Omega_E$ during this time, the velocity $-v$ becomes $-v'$ following the elastic collision rule $-v'=-(v-2(n\cdot v) n) $, where $n$ is the inward pointing normal to $\Omega$. 
After a time $t-t_1$ the particle performs a collision with impact parameter $\delta_1$ that produces the velocity $-v_1$. Then again the particles travels for a time $t_1-t_2$ elastically colliding if touching the boundary $\partial\Omega_E$ and so on until it reaches a reservoir or it has traveled for a total time $t$. 

In the same way, given the sequence $x$, $v$, $t_1$, $\ldots$, $t_m$, $\delta_1$, $\ldots$, $\delta_m$, we define the flow
$\Phi^{-t}_m (x,v, t_1,\ldots,t_m,\delta_1,\ldots,\delta_m)$ as the backward trajectory starting from $x$ with velocity $v$ and having $m$ transition in velocity obtained after a time $t-t_1$, $\ldots$, $t_i-t_{i+1}$, $\ldots$, $t_m$  ($i=1,\ldots,m-1$) by a scattering with an hard disk with impact parameter respectively $\delta_i$ ($i=1,\ldots,m$). 
We impose that the trajectories described by this flow $\Phi^{-t}(x,v)$ make a change of velocity from $v$ to $v'=v-2(n\cdot v)n$ any time the elastic boundary $\partial{\Omega}_E$ is reached.

We define the function $\tau=\tau(x,v,t,t_1,\ldots,t_m,\delta_1,\ldots,\delta_m)$ that represents the time when the particle that is in $(x,v)$ at time $t$ leaves a reservoirs and it enters into the strip.
So if the backward trajectory having collision times and parameters $t_1,\ldots,t_m,\delta_1,\ldots,\delta_m$ reaches the boundary $\partial\Omega_L \cup \partial\Omega_R$  in the time interval $[0,t]$, then  it happens after a backward time 
$t-\tau$. 
If the trajectory $\Phi^{-s} (x,v)$ never hits the boundary $\partial\Omega_L \cup \partial\Omega_R$ for any time $s\in [0,t]$ we set $\tau=0$.

We are now able to write  the solution $g_\varepsilon(x,v,t)$  using the Duhamel's principle. The semigroup generated by $\cc{A}$ on our domain has a transport term that we can express thanks to the flow $\Phi^{-t}(x,v)$, and the transported datum is $f_B$ or $f_0$ depending on the case the backward trajectory touches a reservoir in the time interval $[0,t]$ on not.
We use  the function $\tau$ to distinguish these two cases. We consider the collision operator $2 \lambda \eta_\varepsilon \cc{K}$ as the source term for the linear problem \eqref{eq:4.14}. So we construct the following expression for $g_\varepsilon$
\begin{equation}
\label{eq:duhamel_prima_iter}
\begin{split}
g_\varepsilon(x,v,t)= &e^{-2\lambda \eta_\varepsilon t} f_0(\Phi_0^{-t}(x,v))\chi(\tau=0) +
	e^{-2\lambda \eta_\varepsilon (t-\tau)} f_B(\Phi_0^{-(t-\tau)}(x,v)) \chi(\tau>0)	+
\\&	 +  \lambda\eta_\varepsilon \int_0^t e^{-2\lambda\eta_\varepsilon(t-t_1)} 2\cc{K} g_\varepsilon(\Phi_0^{-(t-t_1)}(x,v),t_1)  \chi(\tau < t_1) \,\mathrm{d} t_1.
\end{split}
\end{equation}  
The notation $\chi$ represents the characteristic function.

The meaning of \eqref{eq:duhamel_prima_iter} is clear: we separate the contribution given to $g_\varepsilon$ from trajectories transporting the initial datum $f_0$, having no collisions with  scatterers and never hitting a reservoirs in the time interval $[0,t]$; the contribution from trajectories transporting the initial datum $f_B$ exiting from a reservoirs at time $\tau$ and than moving in $\Omega$ until the time $t$ without colliding any scatterers; finally the last term is the contribution due to trajectories having the last collision with a scatterers at time $t_1$ and never touching the reservoirs in the time interval $[t_1,t]$.

We iterate the procedure by using \eqref{eq:duhamel_prima_iter} again for the $g_\varepsilon$ in the last integral and from \eqref{eq:def_K} we find:
\begin{equation}
\label{eq:duhamel_second_iter}
\begin{split}
g_\varepsilon(x,v,t)= &\;e^{-2\lambda \eta_\varepsilon t} f_0(\Phi^{-t}_0(x,v))\chi(\tau=0)+
	e^{-2\lambda \eta_\varepsilon (t-\tau)} f_B(\Phi^{-(t-\tau)}_0(x,v)) \chi(\tau>0)	
\\	& +  \lambda\eta_\varepsilon \int_0^t \,\mathrm{d} t_1 e^{-2\lambda\eta_\varepsilon(t-t_1)} \int_{-1}^1 \,\mathrm{d}\delta_1 
 \bigg[e^{-2\lambda \eta_\varepsilon t_1} f_0(\Phi_1^{-t}(x,v,t_1,\delta_1))\chi(\tau=0)
 \\&
	+e^{-2\lambda \eta_\varepsilon (t_1-\tau)} f_B(\Phi_1^{-(t-\tau)}(x,v,t_1,\delta_1)) \chi(\tau>0)\chi(\tau<t_1)	 
 \\&	 +  \lambda\eta_\varepsilon \int_0^{t_1}  \,\mathrm{d} t_2 \,  e^{-2\lambda\eta_\varepsilon(t_1-t_2)} \int_{-1}^1 \,\mathrm{d}\delta_2
 g_\varepsilon(\Phi_1^{-(t-t_2)}(x,v,t_1,\delta_1),t_2)  \chi(\tau < t_2)
\bigg]
.
\end{split}
\end{equation}

\noindent
By successive iterations we write the series expansion for the density of particles $g_\varepsilon(x,v,t)$ as
\begin{equation}
\label{eq:sol_h_eps_series}
\begin{split}
g_\varepsilon(x,v,t)=& e^{-2\lambda \eta_\varepsilon t} f_0(\Phi^{-t}_0(x,v))\chi(\tau=0)	+  \sum_{m\geq 1}e^{-2\lambda \eta_\varepsilon t}  (\lambda \eta_\varepsilon)^m \int_0^{t} \!\!\mathrm{d}t_1
\ldots \int_0^{t_{m-1}} \!\! \mathrm{d} t_m 
\\& 
 \int_{-1}^1 \!\!\mathrm{d} \delta_1 \ldots \int_{-1}^1 \!\! \mathrm{d} \delta_m \chi(\tau =0)  f_0(\Phi_m^{-t}(x,v,t_1,\ldots,t_m,\delta_1,\ldots,\delta_m)) 
\\& \!
+ e^{-2\lambda \eta_\varepsilon (t-\tau)} f_B(\Phi^{-(t-\tau)}_0(x,v)) \chi(\tau>0) +  \sum_{m\geq 1}\! (\lambda \eta_\varepsilon)^m \!\!\int_0^{t}\! \!\!\mathrm{d}t_1
\ldots \int_0^{t_{m-1}}\! \!\! \mathrm{d} t_m \!\int_{-1}^1 \!\!\mathrm{d} \delta_1 \ldots 
\\&
\int_{-1}^1 \!\! \mathrm{d} \delta_m \chi(\tau <t_m) \chi(\tau>0) e^{-2\lambda \eta_\varepsilon(t-\tau)} f_B(\Phi_m^{-(t-\tau)}(x,v,t_1,\ldots,t_m,\delta_1,\ldots,\delta_m))=
\\ =& g_\varepsilon^{in} (x,v,t) + g_\varepsilon^{out}(x,v,t).
\end{split}
\end{equation}
Note that the series are clearly convergent in $L^{+\infty}$.


In (\ref{eq:sol_h_eps_series}) the terms with $\chi(\tau=0)$ define $g_\varepsilon^{in}$ that represents the contributions to $g_\varepsilon$ due to trajectories that stay in $\Omega$ for every time in $[0,t]$ while the terms with $\chi(\tau>0)$ define $g_\varepsilon^{out}$ that collects the contributions due to trajectories leaving a mass reservoir at time $\tau>0$.

Note that $g_\varepsilon^{out}$ solves the problem (\ref{eq:boltz_lin})-(\ref{eq:bound_cond_h_eps}) with initial datum $f_0=0$.

We will use  the shorthand notation $\Phi^{-s}(x,v)$  instead of $\Phi_m^{-s}(x,v,t_1,\ldots,t_m,\delta_1,\ldots,\delta_m)$  where it is clear by the context to which sequence of collisions we refer.
Moreover, the terms with zero collision will be included in the series as the $m=0$ terms.

We denote by $S_\varepsilon$ acting on any  $h\in L^\infty (\Omega\times S^1)$ the Markov semigroup  associated to the $g_\varepsilon^{in}$ term in (\ref{eq:sol_h_eps_series}) for an initial datum $h$, namely
\begin{equation}\label{eq:def_semigr_S}
(S_\varepsilon(t)h)(x,v)=\! \sum_{m\geq 0}e^{-2\lambda \eta_\varepsilon t}  (\lambda \eta_\varepsilon)^m \! \int_0^{t} \!\!\mathrm{d}t_1
\ldots \int_0^{t_{m-1}} \!\! \mathrm{d} t_m  \!
 \int_{-1}^1 \!\!\mathrm{d} \delta_1 \ldots \int_{-1}^1 \!\! \mathrm{d} \delta_m \chi(\tau =0)  h(\Phi^{-t}(x,v)),
\end{equation}
so that in (\ref{eq:sol_h_eps_series}) $g_\varepsilon^{in}(t)=S_\varepsilon(t) f_0$.

\begin{proposition}
\label{pr:est_op_S_eps}
There exists $\varepsilon_0>0$ such that for any $\varepsilon<\varepsilon_0$ and for any $h \in L^\infty(\Omega \times S^1)$ it holds
\begin{equation}\label{eq:S_eps_est}
\|S_\varepsilon (\eta_\varepsilon) h \|_\infty \leq \alpha \| h \|_\infty,\qquad \alpha < 1.
\end{equation}
\end{proposition}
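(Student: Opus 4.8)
The plan is to reduce the operator bound to a statement about \emph{survival probabilities} and then close it through the diffusive limit invoked in the outline of this section. Since every integration weight in the definition \eqref{eq:def_semigr_S} of $S_\varepsilon$ is nonnegative, $S_\varepsilon(t)$ is positivity preserving and $|S_\varepsilon(t)h|\le \|h\|_\infty\,S_\varepsilon(t)1$ pointwise, where $1$ is the constant function. Hence
\begin{equation}
\|S_\varepsilon(\eta_\varepsilon)h\|_\infty \le \|h\|_\infty\,\|S_\varepsilon(\eta_\varepsilon)1\|_\infty ,
\end{equation}
and it suffices to produce $\varepsilon_0>0$ and $\alpha<1$ with $w_\varepsilon:=S_\varepsilon(\eta_\varepsilon)1\le\alpha$ on $\Omega\times S^1$ for all $\varepsilon<\varepsilon_0$. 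By the construction of $S_\varepsilon$ through the backward flow $\Phi^{-t}$ and the constraint $\chi(\tau=0)$, the quantity $w_\varepsilon(x,v)$ is exactly the probability that the unit-speed random-flight trajectory issued from $(x,v)$, scattering at rate proportional to $\eta_\varepsilon$ and reflecting specularly on $\partial\Omega_E$, does not reach the reservoir boundary $\partial\Omega_L\cup\partial\Omega_R$ within time $\eta_\varepsilon$. The claim is therefore a uniform bound, strictly below $1$, on this survival probability.

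The core step is the diffusive limit. Running the process for the macroscopic time $\eta_\varepsilon$ corresponds, after the parabolic rescaling, to a fixed diffusive time: the total path length is of order $\eta_\varepsilon$ while the net displacement is of order $1$, comparable to the width $L_1$ of the strip. I would show, by the Hilbert expansion method in the $L^\infty$ setting (as for the convergence theorem, following \cite{BNPP, CIP, EP}), that
\begin{equation}
w_\varepsilon \longrightarrow u(\cdot,1) \quad\text{uniformly on $\Omega\times S^1$, as $\varepsilon\to 0$,}
\end{equation}
where $u$ solves the heat equation $\partial_s u = D\Delta u$ with $u(\cdot,0)=1$, homogeneous Dirichlet (absorbing) condition on $\partial\Omega_L\cup\partial\Omega_R$ and homogeneous Neumann (reflecting) condition on $\partial\Omega_E$, evaluated at time $s=1$, and $D$ is the diffusion coefficient associated with $\cc{L}$. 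Concretely, $v_\varepsilon(\cdot,t):=S_\varepsilon(t)1$ solves the linear Boltzmann equation with vanishing incoming data on the reservoir sides and initial datum $1$, so $w_\varepsilon=v_\varepsilon(\cdot,\eta_\varepsilon)$ and its limit is the stated diffusion. Performing the expansion on a domain slightly larger than $\Omega$ gives the room needed to insert the boundary-layer correctors and to control the remainder in $L^\infty$ uniformly up to $\partial\Omega$.

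It then remains to observe that the limit profile is a strict contraction factor. The function $u(\cdot,1)$ is continuous on the compact set $\overline{\Omega}$ and satisfies $0\le u\le 1$, and it cannot attain the value $1$ at $s=1$: since $u=0$ on $\partial\Omega_L\cup\partial\Omega_R$ we have $u\not\equiv 1$, so the strong parabolic maximum principle gives $u(\cdot,1)<1$ in the interior, while Hopf's lemma excludes the value $1$ on $\partial\Omega_E$ (where $\partial_n u=0$); equivalently, from any point the absorbing lateral boundary, which lies within horizontal distance $L_1$, is reached before time $1$ with positive probability. Hence $\bar\alpha:=\max_{\overline{\Omega}}u(\cdot,1)<1$, and choosing $\alpha:=(1+\bar\alpha)/2<1$ the uniform convergence yields $w_\varepsilon\le\alpha$ on $\Omega\times S^1$ for all $\varepsilon<\varepsilon_0$, which is \eqref{eq:S_eps_est}.

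The main difficulty is precisely the diffusive-limit estimate: proving the uniform-up-to-the-boundary convergence of $w_\varepsilon$ to $u(\cdot,1)$ in $L^\infty$ in the presence of the mixed absorbing/reflecting conditions and of the curved obstacle boundaries. This is where the boundary-layer analysis and the remainder control of the Hilbert expansion are genuinely needed, and it is the part I would develop in detail in the auxiliary lemmas; by contrast, the positivity reduction and the maximum-principle argument above are routine.
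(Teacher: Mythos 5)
Your reduction to the survival probability $w_\varepsilon = S_\varepsilon(\eta_\varepsilon)1$ and the final maximum-principle step match the paper's high-level skeleton, but the core of your argument --- uniform $L^\infty$ convergence of $w_\varepsilon$ to the solution $u(\cdot,1)$ of the heat equation with \emph{mixed} Dirichlet/Neumann conditions --- is exactly the step you defer to ``auxiliary lemmas,'' and it is precisely the step the paper is careful never to need. The absorbing sides $\partial\Omega_L\cup\partial\Omega_R$ produce kinetic (Knudsen) boundary layers: the Hilbert corrector $\eta_\varepsilon^{-1}\mathcal{L}^{-1}(v\cdot\nabla_x u)$ is velocity-dependent and cannot satisfy the velocity-independent Dirichlet condition, so the remainder equation acquires $O(\eta_\varepsilon^{-1})$ boundary sources subject to absorbing boundary conditions; the only tool in the paper for controlling such remainders is Theorem \ref{th:ex!g_stat} (the Neumann series bound), which itself rests on Proposition \ref{pr:est_op_S_eps}. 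So, within the architecture of this paper, your route is circular unless you supply an independent proof of the mixed-boundary diffusive limit, which is a genuinely harder statement than anything proved here (Proposition \ref{pr:rescal_bolt_to_heat} is stated and proved only for purely reflecting boundaries).

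The paper's proof avoids the mixed problem altogether by a majorization that your proposal is missing: since $\chi(\tau=0)\,\chi_\Omega(\Phi^{-t}(x)) \le \chi^\delta_\Omega(\Phi^{-t}(x))$ for a smooth $\chi^\delta_\Omega$ with $\chi_\Omega \le \chi^\delta_\Omega \le 1$ supported slightly beyond $\Omega$, one gets $S_\varepsilon(t)f \le \|f\|_\infty\, F(\cdot,\cdot,t)$, where $F$ solves the linear Boltzmann equation in the \emph{infinite} strip $\Lambda$ with purely specular boundaries and initial datum $\chi^\delta_\Omega$: absorption is discarded as an upper bound rather than analyzed. The diffusive limit then required involves only reflecting boundaries, where the Hilbert expansion closes without boundary layers, and the strict bound $\alpha<1$ follows from the strong maximum principle for the heat equation in $\Lambda$ with a compactly supported initial datum bounded by $1$ --- no Hopf lemma and no mixed problem are needed. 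Your remark that ``performing the expansion on a domain slightly larger than $\Omega$ gives the room needed'' does not fill the gap: enlarging the domain while keeping the absorbing condition merely relocates the Dirichlet boundary, whereas the paper's point is to replace absorption by a reflective majorant. Until the mixed-boundary convergence you invoke is actually proved, your proposal has a genuine gap at its central step.
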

 
 Note that in the estimate (\ref{eq:S_eps_est}) we are considering $t=\eta_\varepsilon$ and the estimate is saying that there is a strictly positive probability for a backward trajectory to exit from $\Omega$ in a time of the order of $\eta_\varepsilon$.

\begin{proof}[Proof of Theorem \ref{th:ex!g_stat}]
From \eqref{eq:sol_h_eps_series}
the stationary solution $g_\varepsilon^S$ of the problem (\ref{eq:stationary problem}) verifies
\begin{equation*}
g^S_\varepsilon= g_\varepsilon^{out}(t_0) + S_\varepsilon(t_0)g_\varepsilon^S,
\end{equation*}
for every  $t_0>0$. We can formally write it by iterating the previous one in the form of the Neumann series 
\begin{equation}
\label{eq:neumann_ser}
g_\varepsilon^S= \sum_{N\geq 0} (S_\varepsilon(t_0))^N g_\varepsilon^{out}(t_0).
\end{equation}

In order to verify the existence and uniqueness of $g^S_\varepsilon$ we  show that the \eqref{eq:neumann_ser} converges. 
Indeed from Proposition \ref{pr:est_op_S_eps} 
 and \eqref{eq:neumann_ser}, chosen $t_0=\eta_\varepsilon$
\begin{displaymath}
\| g_\varepsilon^S \|_{\infty} \leq \sum_{N\geq 0} \| (S_\varepsilon(\eta_\varepsilon))^N g_\varepsilon^{out} \|_\infty
\leq \sum_{N>0} \alpha^{N}\|g_\varepsilon^{out}(\eta_\varepsilon)\|_\infty
 \leq \frac{1}{1-\alpha} \| g_\varepsilon^{out} \|_\infty
\leq \frac{1}{1-\alpha} \mathrm{max}\{\rho_L,\rho_R\}. 
\end{displaymath}

As a consequence the Neumann series \eqref{eq:neumann_ser} converges in $L^\infty$ and identifies a single element in $L^\infty$. Choosing an arbitrary $t_0$ bigger than $\eta_\varepsilon$ of the same order of $\eta_\varepsilon$ and thanks to the semigroup property of $S_\varepsilon$ it follows that $g^S_\varepsilon$ does not depend on the time $t_0$.
So there exists a unique stationary solution $g^S_\varepsilon \in L^\infty ({\Omega} \times S^1)$ satisfying (\ref{eq:stationary problem}).
\end{proof}

In order to prove Theorem \ref{th:g_to_rho} we need some properties of the linear Boltzmann operator $\cc{L}$ defined in \eqref{eq:op_L}. We summarize them in the next lemma.
 
\begin{lemma}
\label{le:L^-1}
Let $\mathcal{L}$ be the  operator  defined in \eqref{eq:op_L}, then $\cc{L}$ is a selfadjoint operator on  $L^2(S^1)$
 and has the form $\cc{L}=2\lambda (\cc{K} - \cc{I})$ where $\cc{K}$ is a selfadjoint and compact operator (in $L^2(S^1)$). Moreover, $\cc{K}$ is positive and its spectrum is contained in $[0,1]$. The value $0$ is the only accumulation point for the spectrum and $1$ is a simple eigenvalue.  So it holds that
 $\{\mathrm{Ker} \cc{L}\}^{\perp}=\{ h\in L^2 (S^1) :\int_{S^1} \mathrm{d} v\,\, h(v)=0 \}$ and there exists  $C>0$ such that for any $h \in L^\infty(S^1)$ that verifies $\int_{S^1} \mathrm{d} v\,\, h(v)=0$ we have
\begin{equation}
\|\mathcal{L}^{-1} h \|_\infty \leq C \| h \|_\infty.
\end{equation}
\end{lemma}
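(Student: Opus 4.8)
The plan is to reduce everything to a spectral analysis of the gain operator $\cc{K}$ of \eqref{eq:def_K}, since $\cc{L}=2\lambda(\cc{K}-\cc{I})$. First I would make the collision geometry explicit: inserting the reflection rule $v'=v-2(n\cdot v)n$ and changing variables from the impact parameter $\delta=\sin\alpha$ to the deflection angle (with the hard--disk relation $\delta=\cos(\theta/2)$, whose Jacobian produces the weight $\tfrac12\sin(\theta/2)$), the average $\tfrac12\int_{-1}^{1}f(v')\,\ud\delta$ becomes a convolution on the circle $S^1\cong\bb{R}/2\pi\bb{Z}$,
\begin{equation*}
(\cc{K}f)(v)=\int_{S^1}\kappa(\angle(v,w))\,f(w)\,\ud w ,
\end{equation*}
with an even, continuous, nonnegative kernel $\kappa$ of unit total mass. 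From this representation three facts are immediate: $\cc{K}$ is selfadjoint on $L^2(S^1)$ because the kernel is symmetric, hence so is $\cc{L}$; $\cc{K}$ is Hilbert--Schmidt, hence compact, because its kernel is continuous on the compact set $S^1\times S^1$; and $\cc{K}$ is stochastic, i.e.\ $\cc{K}\mathbf{1}=\mathbf{1}$ and $\cc{K}$ preserves nonnegativity, so it is a contraction on both $L^2(S^1)$ and $L^\infty(S^1)$ and its spectrum is confined to $[-1,1]$ with $1$ among its eigenvalues.

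Next I would locate the spectrum. Since $\cc{K}$ is a circle convolution it is diagonalized by the Fourier basis $\{e^{in\theta}\}_{n\in\bb{Z}}$, its eigenvalues being the Fourier coefficients $\mu_n$ of $\kappa$; one checks that $\mu_0=1$ is the eigenvalue of the constants and that $\mu_n\to 0$, so $0$ is the only possible accumulation point of the spectrum, consistently with compactness. Simplicity of the eigenvalue $1$ is a Perron--Frobenius statement: as $\kappa$ is strictly positive off the diagonal, any $u$ with $\cc{K}u=u$ must be constant (equivalently $\mu_n\neq1$ for $n\neq0$). Consequently $\mathrm{Ker}\,\cc{L}=\mathrm{Ker}(\cc{K}-\cc{I})$ is exactly the line of constants, and its $L^2$--orthogonal complement is $\{h:\int_{S^1}h\,\ud v=0\}$, which proves the stated characterisation. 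Isolation of the eigenvalue $1$, again from compactness, yields a spectral gap $\gamma:=\sup\big(\mathrm{spec}(\cc{K})\setminus\{1\}\big)<1$.

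Finally, the resolvent bound. On the mean--zero subspace $L^2_0(S^1)$ the operator $\cc{L}=2\lambda(\cc{K}-\cc{I})$ has spectrum contained in $(-\infty,-2\lambda(1-\gamma)]$, so it is invertible there with $\|\cc{L}^{-1}\|_{L^2_0\to L^2_0}\le\big(2\lambda(1-\gamma)\big)^{-1}$, which already gives an $L^2$ estimate. The genuine obstacle is upgrading this to the $L^\infty$ bound claimed in the lemma, and here I would bootstrap using the smoothing of $\cc{K}$. Given a mean--zero $h\in L^\infty(S^1)$, set $u=\cc{L}^{-1}h\in L^2_0$; then $u$ solves $u=\cc{K}u-\tfrac{1}{2\lambda}h$. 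Since $\cc{K}$ has a bounded kernel it maps $L^2(S^1)$ into $L^\infty(S^1)$ continuously, whence
\begin{equation*}
\|u\|_\infty\le\|\cc{K}u\|_\infty+\tfrac{1}{2\lambda}\|h\|_\infty\le C_1\|u\|_{L^2}+\tfrac{1}{2\lambda}\|h\|_\infty\le C_2\|h\|_{L^2}+\tfrac{1}{2\lambda}\|h\|_\infty ,
\end{equation*}
and bounding $\|h\|_{L^2}\le\sqrt{2\pi}\,\|h\|_\infty$ gives $\|\cc{L}^{-1}h\|_\infty\le C\|h\|_\infty$, as required.

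The two points demanding the most care are the Perron--Frobenius simplicity of the eigenvalue $1$ and the verification that the change of variables produces a genuinely bounded (indeed continuous) convolution kernel, since both the compactness and the concluding $L^\infty$ bootstrap rest on it; the $L^\infty$ upgrade itself is the step I expect to be the real crux, the earlier spectral facts being essentially the standard theory of compact selfadjoint stochastic operators.
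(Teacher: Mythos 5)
Your proposal is correct, but it follows a genuinely different route from the paper's, because the paper does not actually prove this lemma: its proof consists of two citations, deferring the existence and the $L^\infty$ estimate for $\cc{L}^{-1}$ to Lemma 4.1 of \cite{BNPP}, and the compactness of $\cc{K}$ and the spectral properties of $\cc{L}$ to \cite{EP}. Your argument reconstructs that material in a self-contained way: the change of variables $\delta=\cos(\theta/2)$ does turn \eqref{eq:def_K} into a convolution on $S^1$ with kernel $\kappa(\theta)=\tfrac14\sin(\theta/2)$ (your Jacobian weight $\tfrac12\sin(\theta/2)$ times the prefactor $\tfrac12$ in the definition of $\cc{K}$), and from this the self-adjointness, the Hilbert--Schmidt compactness, the Fourier diagonalization, the simplicity of the eigenvalue $1$, the characterization of $(\mathrm{Ker}\,\cc{L})^{\perp}$, and the $L^2$ invertibility with a spectral gap all follow exactly as you say; your closing bootstrap $u=\cc{K}u-\tfrac{1}{2\lambda}h$, exploiting that an operator with bounded kernel maps $L^2$ into $L^\infty$, is in substance the same mechanism by which the cited reference obtains the $L^\infty$ bound. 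What your route buys is a complete, checkable proof with explicit constants; what the paper's route buys is brevity.

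One clause needs flagging: you establish $\mathrm{spec}(\cc{K})\subset[-1,1]$, whereas the lemma claims that $\cc{K}$ is positive with spectrum in $[0,1]$, and your proof does not (and cannot) deliver that, because for this kernel the claim is false. Your own Fourier diagonalization makes this explicit: the eigenvalues are $\mu_n=\int_0^{2\pi}\tfrac14\sin(\theta/2)\,e^{-in\theta}\,\ud\theta=\tfrac{1}{1-4n^2}$, so $\mu_1=-\tfrac13$ and all $\mu_n$ with $n\neq 0$ are negative; $\cc{K}$ preserves nonnegative functions but is not a positive operator in the quadratic-form sense. This costs nothing downstream: every consequence the paper actually uses --- simplicity of $1$, accumulation of the spectrum only at $0$, $\mathrm{Ker}\,\cc{L}$ equal to the constants, the $L^\infty$ bound on $\cc{L}^{-1}$, and the positivity of $D$ (which holds because $-\cc{L}$ restricted to mean-zero functions has spectrum contained in $[2\lambda,\tfrac83\lambda]$) --- is proved by your argument. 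You should simply state the corrected inclusion $\mathrm{spec}(\cc{K})\subset[-\tfrac13,1]$ explicitly rather than leave the $[0,1]$ clause silently unproved.
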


\begin{proof}[Proof of Lemma \ref{le:L^-1}]
The existence and the estimate of norm of $\cc{L}^{-1}$ are discussed in Lemma 4.1 from {Section 4.1} of \cite{BNPP}.
The compactness of the operator $\cc{K}$ and the spectral property of $\cc{L}$ are discussed in \cite{EP}.
\end{proof}

\begin{proof}[Proof of Theorem \ref{th:g_to_rho}]
The proof makes use of the Hilbert expansion (see e.g. \cite{BNPP,CIP,EP}). Assume that $g^S_\varepsilon$ has the following form
\begin{equation*}
g^S_\varepsilon(x,v)= g^{(0)}(x) + \sum_{k=1}^{+\infty} \bigg( \frac{1}{\eta_\varepsilon}  \bigg)^k  g^{(k)} (x,v),
\end{equation*}
where $g^{(k)}$ are not depending on $\eta_\varepsilon$.
We require $g^{(0)}$ to satisfy the same Dirichlet boundary conditions as the whole solution $g^S_\varepsilon$ on $\partial\Omega_L\cup\partial {\Omega}_R$:
\begin{equation}\label{eq:BD_g^0}
\begin{cases}
g^{(0)}(x)  = \rho_L  \qquad \quad & x\in \partial\Omega_L  \\
g^{(0)}(x)=  \rho_R  \qquad &  x\in \partial\Omega_R . 
\end{cases}
\end{equation}

By imposing that $g^S_\varepsilon$ solves \eqref{eq:stationary problem} and by comparing terms of the same order we get the following chain of equations:
\begin{equation*}
v\cdot \nabla_x g^{(k)} = \mathcal{L} g^{(k+1)},\qquad k\geq 0,
\end{equation*}
where we used that $\mathcal{L} g^{(0)}(x)=0$ since $g^{(0)}$ is independent of $v$.
The first two equations read
\begin{itemize}
\item[(i)]$v \cdot \nabla_x g^{(0)} (x) = \mathcal{L} g^{(1)}(x,v)$,
\item[(ii)] $v \cdot \nabla_x g^{(1)} (x,v)= \mathcal{L} g^{(2)}(x,v)$.
\end{itemize}
Let us consider the first one. By the Fredholm alternative, this equation has a solution if and only if the left hand side belongs to $(\rr{Ker}\cc{L})^\perp$. We recall that the null space of $\cc{L}$ is constituted by the constant functions (with respect to $v$), so we can solve equation $(i)$
 if and only if the left hand side belongs to 
 $(\rr{Ker} \mathcal{L})^{\perp} = \{ h\in L^2(S^1)\text{ such that }\int_{S^1} \mathrm{d} v \,\, h(v) = 0 \}$ (see Lemma \ref{le:L^-1}). 
Since $v\cdot \nabla_x g^{(0)}(x)$ is an odd function of $v$, it belongs to $(\rr{Ker} \mathcal{L})^{\perp}$. So we can invert the operator $\mathcal{L}$ and set
\begin{equation}
\label{eq:invert_L_g1}
 g^{(1)} (x,v) = \mathcal{L}^{-1} ( v \cdot \nabla_x g^{(0)}(x)) + \zeta^{(1)}(x), 
\end{equation}
where $\zeta^{(1)}(x)\in \rr{Ker}\, \mathcal{L}$ and $\cc{L}^{-1} (v\cdot \nabla_x g^0)$ is an odd function of $v$ since $\cc{L}^{-1}$ preserves the parity, namely it maps odd (even) function of $v$ in odd (even) functions (see \cite{EP}).

We integrate equation $(ii)$ with respect to the uniform measure on $S^1$. 
We can notice that $\int_{S^1} \mathrm{d} v \, v \cdot \nabla_x \zeta^{(1)}(x)=0$ ($\zeta^{(1)}$ depends only on $x$, so the function in the integral is odd in the velocity) and $\int_{S^1} \mathrm{d} v \, \mathcal{L} g^{(2)}=0$ (since operator $\cc{L}$ preserves mass), so by \eqref{eq:invert_L_g1} we obtain
\begin{equation}\label{eq:integr_ii}
\frac{1}{2\pi} \bigg(  \int_{S^1} 
\ud v \,\, v \cdot \nabla_x 
(\mathcal{L}^{-1} ( v \cdot \nabla_x g^{(0)}(x))) \bigg)  = 0.
\end{equation}
By expanding the scalar product and using the linearity of $\cc{L}^{-1}$ we get
\begin{equation}
 -\sum_{i,j=1}^2 D_{i,j}\partial_{x_i} \partial_{x_j} g^{(0)} (x)=0.
\end{equation}
We define the $2\times 2$ matrix $D_{i,j}=\frac{1}{2\pi} \int_{S^1} \rr{d} v \,v_i (-\cc{L}^{-1})v_j$  and we observe that $D_{ij}=0$ if $i\neq j$ as follows by the change $v_i \to -v_i$ while $D_{11}=D_{22}=D>0$ thanks to the isotropy and the spectral property of the operator (see \cite{EP}). Hence $D$ is given by the formula \eqref{eq:greenKubo}
\begin{equation*} 
D=\frac{1}{4\pi}\int_{S^1} \mathrm{d} v \,\, v \cdot (- \mathcal{L})^{-1} v,
\end{equation*} 
 and the integrated equation $(ii)$ becomes
\begin{equation}  \label{eq:Laplace_g^0}
\frac{1}{2\pi} \bigg(  \int_{S^1} 
\ud v \,\, v \cdot \nabla_x 
(\mathcal{L}^{-1} ( v \cdot \nabla_x g^{(0)}(x))) \bigg)  =0
\Leftrightarrow - D \Delta_x g^{(0)}(x)=0.
\end{equation}

We require $g^S_\varepsilon(x,v)$ to satisfy the reflective boundary condition $g^S_\varepsilon(x,v') = g^S_\varepsilon(x,v)$ on $\partial {\Omega}_E$. By imposing it on the first term $g^{(1)}(x,v) = g^{(1)}(x,v')$ for every $x\in \partial\Omega_E$, $v\cdot n <0$, from \eqref{eq:invert_L_g1} we obtain
\begin{equation}\label{eq:imp_neum_cond}
\mathcal{L}^{(-1)} (v \cdot \nabla_x g^{(0)}) + \zeta^{(1)}(x) = \mathcal{L}^{(-1)} (v' \cdot \nabla_x g^{(0)}) + \zeta^{(1)}(x).
\end{equation}
By means of the elastic collision rule $v'= v- 2(v\cdot n) n$, the linearity of $\cc{L}^{-1}$ allow us to write
\begin{equation*}
\mathcal{L}^{(-1)} ((v- 2(v\cdot n) n) \cdot \nabla_x g^{(0)}) =\mathcal{L}^{(-1)} (v \cdot \nabla_x g^{(0)}) - 2 (n \cdot \nabla_x g^{(0)}) \mathcal{L}^{(-1)}(v\cdot n)  .
\end{equation*}
Left and  right members in \eqref{eq:imp_neum_cond} are the same if and only if
$(n \cdot \nabla_x g^{(0)}) \mathcal{L}^{(-1)}(v\cdot n)=0$. 
Since $\int_{S^1} \ud v \, v\cdot n = 0$ we get $\mathcal{L}^{(-1)}(v\cdot n)\neq 0$, so the only possibility is  $(n \cdot \nabla_x g^{(0)}) = 0$.
 Therefore $g^{(0)}(x)$ has to satisfy the Neumann boundary conditions $\partial_n g^{(0)}(x)=0$, for all $x \in \partial{\Omega}_E$.

From the previous one, \eqref{eq:Laplace_g^0} and \eqref{eq:BD_g^0} we have shown that  the term $g^{(0)}(x)$ solves the problem
\begin{equation}\label{eq:pb_lap_h0}
\begin{cases}
\Delta_x g^{(0)}(x)=0\qquad \quad & x\in  {\Omega}   \\
g^{(0)}(x)  = \rho_L  \qquad & x\in \partial\Omega_L  \\
g^{(0)}(x)=  \rho_R  \qquad &  x\in \partial\Omega_R
\\ \partial_n g^{(0)}(x)=0\qquad & x \in \partial{\Omega}_E.
\end{cases}
\end{equation}
We can deal with this mixed problem following the method of \cite{Lad}, Chapt. II. Furthermore, regularity results guarantee $g^0\in C^\infty{(\overline{\Omega})}$ (see \cite{Eva}, Chapt. 6).

Since \eqref{eq:integr_ii} shows that $\int_{S^1} \ud v \,\, v\cdot \nabla_x  g^{(1)}=0$, we can invert $\mathcal{L}$ in equation (ii) to obtain
\begin{equation} \label{eq:g^2}
g ^{(2)}(x,v)= \mathcal{L}^{-1} (v \cdot \nabla_x \mathcal{L}^{-1} ( v \cdot \nabla_x g^{(0)}(x))) + \mathcal{L}^{-1} (v \cdot \nabla_x \zeta^{(1)}(x))  + \zeta^{(2)}(x),
\end{equation}
where $\zeta^{(2)}$ belongs to the kernel of $\mathcal{L}$.

Now, integrating the third equation  $v\cdot \nabla_x g^{(2)}(x)=\mathcal{L} g^{(3)}(x,v)$  with respect to the uniform measure on $S^1$, we find thanks to \eqref{eq:g^2}
\begin{equation}\label{eq:integr_iii}
\begin{split}
&\int_{S^1} \ud v \,\, v \cdot \nabla_x ( \mathcal{L}^{-1} (v \cdot \nabla_x \mathcal{L}^{-1} ( v \cdot \nabla_x g^{(0)}(x))) ) +\\ &+ \int_{S^1} \ud v \,\, v \cdot \nabla_x( \mathcal{L}^{-1} (v \cdot \nabla_x \zeta^{(1)}(x)))
 + \int_{S^1} \ud v \,\, v \cdot \nabla_x (\zeta^{(2)}(x)) =0.
\end{split}
\end{equation}
The last integral is null because of the independence of $\zeta^{(2)}(x)$ from $v$.
The first integral is null because the function in the integral is an odd function of the velocity thanks to the fact that the operator $\cc{L}^{-1}$ preserves the parity.
The  \eqref{eq:integr_iii} becomes 
\begin{equation}\label{eq:Laplace_zeta1}
 \int_{S^1} \ud v \,\, v \cdot \nabla_x( \mathcal{L}^{-1} (v \cdot \nabla_x \zeta^{(1)}(x)))= 
 -D \Delta_x \zeta^{(1)}(x) =0
\end{equation}
Since there are no restriction on the choice of the boundary condition, we impose the Dirichlet data $\zeta^{(1)}(x)=0$ on the boundary $\partial\Omega_L\cup\partial{\Omega}_R$. So that by the previous and \eqref{eq:Laplace_zeta1} we find
 $\zeta^{(1)}(x)\equiv 0$ and hence $g^{(1)} (x,v) = \mathcal{L}^{-1} ( v \cdot \nabla_x g^{(0)}(x))$.

Because of the \eqref{eq:Laplace_g^0} the first term of the right hand side of equation \eqref{eq:g^2} is null too. So \eqref{eq:g^2} reduces to  $g^{(2)}(x,v)= \zeta^{(2)}(x)$.

Moreover from the third equation we get, by inverting $\mathcal{L}$, 
\begin{center}
$g^{(3)}(x,v)=
 \mathcal{L}^{-1} ( v\cdot \nabla_x g^{(2)}(x,v) ) + \zeta^{(3)}(x) 
= \mathcal{L}^{-1} ( v\cdot \nabla_x \zeta^{(2)}(x) ) + \zeta^{(3)}(x),
$\end{center}
with $\zeta^{(3)}(x)$ belonging to $\rr{Ker} \mathcal{L}$.

By integrating on $S^1$ the fourth equation  $v\cdot \nabla_x g^{(3)} = \mathcal{L} g^{(4)}$  and by exploiting that $\int_{S^1} \ud v \,\, \cc{L}g^{4}(x,v)=0$ and that $\int_{S^1} \ud v \,\, v \cdot \nabla_x \zeta^{(3)}(x)=0$ 
 we find
\begin{equation}
\int_{S^1} \ud v \,\, v\cdot \nabla_x (\mathcal{L}^{-1} ( v\cdot \nabla_x \zeta^{(2)}(x) ) ) =
-D  \Delta_x \zeta^{(2)}(x)  =
0.
\end{equation}
We choose zero boundary condition at the reservoirs, namely $\zeta^{(2)}(x)=0$ on $\partial\Omega_L\cup \partial {\Omega}_R$, so we find $\zeta^{(2)}(x)\equiv 0$. Then $ g^{(2)}(x,v)\equiv 0 $.

We can now write the expansion  for $g^S_\varepsilon$ as
\begin{equation}\label{eq:resto+g^i}
g^S_\varepsilon = g^{(0)} + \frac{1}{\eta_\varepsilon} g^{(1)} + \frac{1}{\eta_\varepsilon} R_{\eta_\varepsilon}.
\end{equation}
The remainder $R_{\eta_\varepsilon}$ satisfies 
\begin{equation} \label{eq:bol_st_resto}
v \cdot \nabla_x R_{\eta_\varepsilon} =  \eta_\varepsilon \mathcal{L} R_{\eta_\varepsilon} .
\end{equation}

We required $g^{(0)}$ to satisfy the same boundary conditions as the whole solution at contact with the reservoirs, namely on $\partial\Omega_L\cup\partial{\Omega}_R$, so the boundary conditions for $R_{\eta_\varepsilon}$ read
\begin{equation}
\begin{cases} \label{eq:resto_bordo}
R_{\eta_\varepsilon}(x,v)= - \mathcal{L}^{-1}(v \cdot \nabla_x g^{(0)}(x))
\qquad\quad & x \in \partial\Omega_L\cup\partial\Omega_R \,,\, v\cdot n(x)>0 ,
\\ R_{\eta_\varepsilon}(x,v') = R_{\eta_\varepsilon}(x,v)\qquad &  x\in \partial \tilde{\Omega}_E \,,\, v\cdot n(x) < 0 .
\end{cases}
\end{equation}

Note that the problem \eqref{eq:bol_st_resto}-\eqref{eq:resto_bordo} has the form of \eqref{eq:stationary problem}. From Theorem \ref{th:ex!g_stat} we know that it admits a unique solution in $L^\infty$.

From the \eqref{eq:resto+g^i}, thanks to the the fact that both $g^{(1)}$ and $R_{\eta_\varepsilon}$ are bounded in $L^\infty$ norm, we conclude that $g^S_\varepsilon\to g^{(0)}$.
\end{proof}

In order to prove Proposition \ref{pr:est_op_S_eps} we follow the strategy of the proof of {Proposition 3.1} in \cite{BNPP}. Here we have the additional difficulty of the specular reflective boundaries of horizontal sides of the strip and the presence of the obstacles in $\Omega$. 
 In the proof are exploited the diffusive limit of the linear Boltzmann equation in a $L^\infty$ setting and in a bigger domain containing $\Omega$ as stated in Proposition \ref{pr:rescal_bolt_to_heat} below and the properties of $\cc{L}$ summarized in Lemma \ref{le:L^-1}. 

We construct the extended domain $\Lambda$ as the infinite strip constructed by removing the left and right sides of $\Omega$ and keeping the upper and lower elastic boundaries at $x_2=0$ and $x_2=L_2$ and the obstacles into $\Omega$ (see Figure \ref{fig:lambda_domain}). We call $\partial\Lambda_E$ the union of upper and lower sides of $\Lambda$ with the obstacles boundaries.

\begin{figure}[h!]
\centering
\definecolor{wqwqwq}{rgb}{0.3764705882352941,0.3764705882352941,0.3764705882352941}
\definecolor{qqqqff}{rgb}{0.,0.,1.}
\definecolor{ffqqqq}{rgb}{1.,0.,0.}
\definecolor{eqeqeq}{rgb}{0.8784313725490196,0.8784313725490196,0.8784313725490196}
\begin{tikzpicture}[line cap=round,line join=round,>=triangle 45,x=1.cm,y=1.cm]
\clip(-1.2,-2.5) rectangle (10.2,3.75);
\fill[line width=1.2pt,color=eqeqeq,fill=eqeqeq,fill opacity=0.10000000149011612] (-2.,0.) -- (-2.,3.) -- (11.,3.) -- (11.,0.) -- cycle;
\draw  (0.,3.)-- (9.,3.);
\draw  (9.,0.)-- (0.,0.);
\draw (9.,3.)-- (11.,3.);
\draw (11.,0.)-- (9.,0.);
\draw (0.,3.)-- (-2.,3.);
\draw (-2.,0.)-- (0.,0.);
\draw [rotate around={73.90918365114764:(7.541710094234184,1.8497761157344577)},line width=1.pt,color=black,fill=black,fill opacity=0.10000000149011612] (7.541710094234184,1.8497761157344577) ellipse (0.5cm and 0.4cm);

\draw [rotate around={15:(3.3,1.2)},line width=1.pt,color=black,fill=black,fill opacity=0.10000000149011612] (3.3,1.2) ellipse (0.3cm and 0.5cm);



\begin{scriptsize}
\draw[color=black] (4.5373540388846925,.358072978331086) node {$\partial \Lambda_E$};
\draw[color=black] (5.810517845286029,2.307804558281278) node {$\Lambda$};
\end{scriptsize}
\end{tikzpicture}
\vspace{-2.3cm}
\caption{Domain $\Lambda$: infinite strip with big fixed obstacles: the whole boundaries of $\Lambda$ is a specular reflective boundary.}\label{fig:lambda_domain}
	\end{figure}
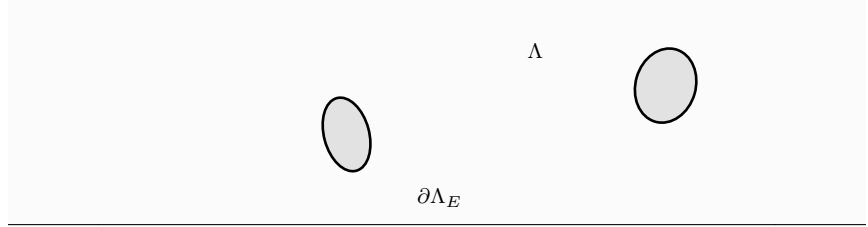

We introduce $h_\varepsilon: \Lambda\times S^1\times[0,T]\to \bb{R}^+$ the solution of the following rescaled linear Boltzmann equation
\begin{equation} \label{eq:rescaled_bolt}
\begin{cases}
(\partial_t + \eta_\varepsilon v\cdot \nabla_x) h_\varepsilon = \eta_\varepsilon^2 \cc{L} h_\varepsilon \qquad\quad& x\in \Lambda
\\ h_\varepsilon(x,v',t)=h_\varepsilon(x,v,t)\qquad& x\in \partial\Lambda_E,\,v\cdot n <0,\, t\geq 0
\\ h_\varepsilon(x,v,0)=\rho_0(x) \qquad\quad &x\in\Lambda ,
\end{cases}
\end{equation}
where $\rho_0(x)$ is a smooth function of the only variable $x$ (local equilibrium).

\begin{proposition}\label{pr:rescal_bolt_to_heat}  
Let  $h_\varepsilon$ be the solution of \eqref{eq:rescaled_bolt}, with an initial datum $\rho_0\in C^\infty(\overline{\Lambda})$ such that there exists $M>0$ with $\rho_0(x)=0$ if $|x|>M$ and $\partial_n \rho_0 (x)=0$ for $x\in\partial \Lambda_E$. Then, as $\varepsilon\to 0$, $h_\varepsilon$ converges to the solution of the heat equation
\begin{equation}\label{eq:heat_rescal_lambda}
\begin{cases}
\partial_t \rho - D\Delta\rho=0 \qquad\quad &x\in \Lambda
\\ \rho(x,0)=\rho_0(x) \qquad\quad &x \in \Lambda
\\ \partial_n \rho(x,t)=0 \qquad\quad& x\in\partial\Lambda_E,\, t\geq 0,
\end{cases}
\end{equation}
where the diffusion coefficient $D$ is given by the formula 
\begin{equation} \label{eq:greenKubo}
D=\frac{1}{4\pi}\int_{S^1} \mathrm{d} v \,\, v \cdot (- \mathcal{L})^{-1} v.
\end{equation}
The convergence is in $L^\infty([0,T];L^\infty(\Lambda\times S^1))$.
\end{proposition}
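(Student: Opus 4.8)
The plan is to follow the Hilbert expansion strategy already exploited in the proof of Theorem~\ref{th:g_to_rho}, but now in the time-dependent, rescaled setting of \eqref{eq:rescaled_bolt}, adapting \textit{Proposition 3.1} of \cite{BNPP} so as to accommodate the specular boundary $\partial\Lambda_E$ and the obstacles. First I would divide \eqref{eq:rescaled_bolt} by $\eta_\varepsilon^2$ and postulate the ansatz
\[
h_\varepsilon(x,v,t) = \sum_{k=0}^{n} \frac{1}{\eta_\varepsilon^{k}}\, h^{(k)}(x,v,t) + R_{\eta_\varepsilon}(x,v,t),
\]
insert it into the equation, and match powers of $1/\eta_\varepsilon$. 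The leading order gives $\cc{L} h^{(0)}=0$, so by Lemma~\ref{le:L^-1} the term $h^{(0)}=\rho(x,t)$ is independent of $v$; the order $1/\eta_\varepsilon$ equation is $v\cdot\nabla_x h^{(0)}=\cc{L} h^{(1)}$, whose right-hand side is odd in $v$ and hence lies in $(\mathrm{Ker}\,\cc{L})^{\perp}$, so that $h^{(1)}=\cc{L}^{-1}(v\cdot\nabla_x\rho)+\zeta^{(1)}$ with $\zeta^{(1)}\in\mathrm{Ker}\,\cc{L}$.

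Next I would extract the equation of $\rho$ from the order $1/\eta_\varepsilon^2$ relation $\partial_t h^{(0)}+v\cdot\nabla_x h^{(1)}=\cc{L} h^{(2)}$. Integrating over $S^1$ and using that $\cc{L}$ preserves mass and that $\cc{L}^{-1}$ preserves parity, the Fredholm solvability condition becomes exactly $\partial_t\rho-D\Delta_x\rho=0$, with $D$ as in \eqref{eq:greenKubo}, by the same computation that led to \eqref{eq:Laplace_g^0}. The Neumann condition on $\partial\Lambda_E$ is produced as in the proof of Theorem~\ref{th:g_to_rho}: imposing the specular reflection $h^{(1)}(x,v')=h^{(1)}(x,v)$ and using $v'=v-2(v\cdot n)n$ together with $\cc{L}^{-1}(v\cdot n)\neq 0$ forces $\partial_n\rho=0$ on $\partial\Lambda_E$. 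The initial datum is fixed by $\rho(\cdot,0)=\rho_0$, and the higher correctors $\zeta^{(k)}$ are chosen with homogeneous initial data; the compatibility hypothesis $\partial_n\rho_0=0$ makes this construction consistent and, by parabolic regularity for the mixed problem (\cite{Lad,Eva}), guarantees $\rho\in C^\infty$ up to the boundary, so that all $h^{(k)}$ and their space-time derivatives are bounded on $[0,T]$.

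Having fixed finitely many correctors, the remainder $R_{\eta_\varepsilon}$ solves a rescaled linear Boltzmann equation of the same type as \eqref{eq:rescaled_bolt}, with specular reflection on $\partial\Lambda_E$ and a source term collecting the leftover contributions $\partial_t h^{(k)}$ and $v\cdot\nabla_x h^{(k)}$ from the highest retained orders; by the regularity just obtained these are $O(1/\eta_\varepsilon)$ in $L^\infty$. I would then represent $R_{\eta_\varepsilon}$ through the Duhamel formula and the Markov semigroup $S_\varepsilon$ of \eqref{eq:def_semigr_S} built from the backward flow $\Phi$, and estimate its $L^\infty([0,T];L^\infty(\Lambda\times S^1))$ norm by iterating as in the proof of Theorem~\ref{th:ex!g_stat}, using Proposition~\ref{pr:est_op_S_eps}. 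Since both the source and the boundary data of $R_{\eta_\varepsilon}$ are of order $1/\eta_\varepsilon$, this yields $\|R_{\eta_\varepsilon}\|\to 0$ and hence $h_\varepsilon\to h^{(0)}=\rho$.

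The hard part, and the place where the argument genuinely departs from \cite{BNPP}, is the uniform control of the remainder in the presence of the specular boundary $\partial\Lambda_E$ and of the obstacles: the backward trajectories entering the semigroup representation now undergo reflections, and one must ensure that the approximate solution stays accurate along these reflected paths and that no boundary layer near $\partial\Lambda_E$ spoils the $L^\infty$ estimate. Controlling the geometry of these reflections, and checking at each order that the Neumann condition produced by the expansion is exactly compatible with the specular boundary condition of the kinetic equation, is the technically delicate step; the remaining work is the routine bookkeeping of the Hilbert expansion together with the estimates already provided by Lemma~\ref{le:L^-1} and Theorem~\ref{th:ex!g_stat}.
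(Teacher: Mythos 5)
Your Hilbert expansion is essentially the paper's: the same chain of equations, the same use of Lemma \ref{le:L^-1} and parity to solve order by order, the same derivation of the heat equation with the Green--Kubo coefficient \eqref{eq:greenKubo}, the same extraction of the Neumann condition from specular reflection, and the same choice of homogeneous data for the correctors $\zeta^{(k)}$. The gap is in your treatment of the remainder. You propose to close the estimate ``by iterating as in the proof of Theorem~\ref{th:ex!g_stat}, using Proposition~\ref{pr:est_op_S_eps}.'' This fails for two distinct reasons. First, it is circular: in the paper, Proposition \ref{pr:est_op_S_eps} is itself proved \emph{from} Proposition \ref{pr:rescal_bolt_to_heat} (one mollifies $\chi_\Omega$, rescales time, applies the present proposition on $\Lambda$, and invokes the strong maximum principle to get $\alpha<1$); so the present proposition must be established without it. Second, even setting circularity aside, Proposition \ref{pr:est_op_S_eps} does not apply here: it concerns the \emph{killed} semigroup on $\Omega$, whose strict contraction $\alpha<1$ comes precisely from backward trajectories escaping through the open boundaries $\partial\Omega_L\cup\partial\Omega_R$ (the factor $\chi(\tau=0)$). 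The remainder equation \eqref{eq:eq_remainder_heat} lives on $\Lambda$, whose boundary $\partial\Lambda_E$ is entirely reflective; the associated semigroup $S_{\eta_\varepsilon}(t)$ is conservative (constants are fixed points), so its $L^\infty$ operator norm is exactly $1$, no bound with $\alpha<1$ can hold, and the Neumann-series iteration of Theorem \ref{th:ex!g_stat} is unavailable. Relatedly, there are no ``boundary data of order $1/\eta_\varepsilon$'' for $R_{\eta_\varepsilon}$: the specular condition $R_{\eta_\varepsilon}(x,v')=R_{\eta_\varepsilon}(x,v)$ is homogeneous.

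The fix is simpler than what you propose, and is what the paper does: one only needs that $S_{\eta_\varepsilon}(t)$ is an $L^\infty$ \emph{contraction}, which is immediate from its Dyson series representation (the exponential weights and the collision integrals sum to at most $1$, reflections being harmlessly absorbed into the definition of the flow $\Phi^{-t}$). Duhamel then gives
\begin{equation*}
\sup_{t\in[0,T]} \| R_{\eta_\varepsilon}(t) \|_\infty \;\leq\; \| R_{\eta_\varepsilon}(0) \|_\infty + T \sup_{t\in[0,T]} \| T_{\eta_\varepsilon}(t) \|_\infty ,
\end{equation*}
which is uniformly bounded by the regularity of $h^{(0)},h^{(1)},h^{(2)}$; since in the truncated expansion \eqref{eq:tronc_hilb_exp_heat} the remainder carries a prefactor $1/\eta_\varepsilon$ (equivalently, in your normalization, data and source are $O(1/\eta_\varepsilon)$), the convergence $h_\varepsilon\to h^{(0)}$ follows. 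In particular, the step you single out as ``technically delicate'' --- controlling the geometry of reflected backward trajectories --- requires no work at all in this scheme: the contraction property is insensitive to the reflections, and no boundary layer arises because the Neumann condition was already built into $h^{(0)}$ and the compatibility hypothesis $\partial_n\rho_0=0$ holds at $t=0$.
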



\begin{proof}[Proof of Proposition \ref{pr:est_op_S_eps}]

The semigroup $S_\varepsilon$ defined in \eqref{eq:def_semigr_S} can be equivalently written as extended to functions belonging to $L^{\infty}(\Lambda\times S^1)$, namely
\begin{equation}\label{eq:def_S_ext}
\begin{split}
(S_\varepsilon(t)f)(x,v)= & \chi_\Omega(x) \sum_{m\geq 0}e^{-2\lambda \eta_\varepsilon t}  (\lambda \eta_\varepsilon)^m \int_0^{t} \!\!\mathrm{d}t_1
\ldots \int_0^{m-1} \!\! \mathrm{d} t_m 
\\& 
 \int_{-1}^1 \!\!\mathrm{d} \delta_1 \ldots \int_{-1}^1 \!\! \mathrm{d} \delta_m \chi(\tau =0)  f(\Phi^{-t}(x,v)) \chi_\Omega( \Phi^{-t}(x)),
 \end{split}
\end{equation}
for any $f\in L^\infty(\Lambda \times S^1)$, where $\chi_\Omega$ is the characteristic function of $\Omega$ and $\Phi^{-t}(x)$ is the first component (the position) of $\Phi^{-t}(x,v)$, the backward flux individuated by $x$, $v$, $t_1$, $\ldots$, $t_m$, $\delta_1$, $\ldots$, $\delta_m$.
The addition of $\chi_\Omega( \Phi^{-t}(x))$ guarantees together with $\chi(\tau=0)$ that the dynamics stay internal to $\Omega$. Moreover, the following  estimate holds
\begin{equation*}
\begin{split}
S_\varepsilon(t) f\leq & \| f \|_\infty \sum_{m\geq 0}e^{-2\lambda \eta_\varepsilon t}  (\lambda \eta_\varepsilon)^m \int_0^{t} \!\!\mathrm{d}t_1
\ldots \int_0^{m-1} \!\! \mathrm{d} t_m 
 \int_{-1}^1 \!\!\mathrm{d} \delta_1 \ldots \int_{-1}^1 \!\! \mathrm{d} \delta_m   \chi_\Omega( \Phi^{-t}(x)).
 \end{split}
\end{equation*}
We construct  $\chi_\Omega^\delta\in C^\infty(\overline{\Lambda})$, a mollified version of $\chi_\Omega$, $\chi_\Omega^\delta\geq\chi_\Omega$, $\chi_\Omega^\delta \leq 1$ and $\Omega\subset supp(\chi_\Omega^\delta)\subset(-\delta,L_1+\delta)\times[0,L_2]$.
So we can write
\begin{equation}\label{eq:S_con_chi^delta}
\begin{split}
S_\varepsilon(t) f\leq & \| f \|_\infty \sum_{m\geq 0}e^{-2\lambda \eta_\varepsilon t}  (\lambda \eta_\varepsilon)^m \int_0^{t} \!\!\mathrm{d}t_1
\ldots \int_0^{m-1} \!\! \mathrm{d} t_m 
 \int_{-1}^1 \!\!\mathrm{d} \delta_1 \ldots \int_{-1}^1 \!\! \mathrm{d} \delta_m   \chi^\delta_\Omega( \Phi^{-t}(x)).
 \end{split}
\end{equation}
Note that the series in \eqref{eq:S_con_chi^delta} defines a function $F$ which solves
\begin{equation} \label{eq:bolt_lambda}
\begin{cases}
(\partial_t + v\cdot \nabla_x) F(x,v,t) = \eta_\varepsilon \cc{L} F(x,v,t) \qquad\quad& x\in \Lambda
\\ F(x,v',t)=F(x,v,t)\qquad& x\in \partial\Lambda_E,\,v\cdot n <0,\, t\geq 0
\\ F(x,v,0)=\chi_\Omega^\delta(x) \qquad\quad &x\in\Lambda .
\end{cases}
\end{equation}
Defining $G_\varepsilon(x,v,t)$ as $F(x,v,\eta_\varepsilon t)$,
 $G_\varepsilon$ solves \eqref{eq:rescaled_bolt} with initial datum $\rho_0=\chi_\Omega^\delta$. 
 Thanks to {Proposition} \ref{pr:rescal_bolt_to_heat} we know that at time $t=1$ 
 \begin{displaymath}
 \|G_\varepsilon(1)- \rho^\delta(1)\|_\infty\leq \omega(\varepsilon)
 \end{displaymath}
 where $\rho^\delta$ solves \eqref{eq:heat_rescal_lambda} with initial datum $\chi_\Omega^\delta$ and $\omega(\varepsilon)$ denotes a positive function vanishing with $\varepsilon$.
Moreover, we can notice that the function $\rho^\delta$ is the solution of  a diffusion equation with initial datum $0\leq \chi_\Omega^\delta \leq 1$ with support in a bounded subset of the infinite strip $\Lambda$.
By the strong maximum principle we know that for the positive time $t=1$, it holds that $\rho^\delta(x,1)<1$.
Therefore for $\varepsilon$ small enough 
\begin{equation}
\begin{split}
\|S_\varepsilon(\eta_\varepsilon) f \|_\infty \leq& \| f \|_\infty \| S_\varepsilon(\eta_\varepsilon) \chi_\Omega^\delta  \|_\infty \leq \| f \|_\infty ( \| G_\varepsilon(1) - \rho^\delta(1)\|_\infty +\| \rho^\delta(1) \|_\infty)
\\ \leq & \|f \|_\infty (\omega(\varepsilon) + \| \rho^\delta (1) \|_\infty) < \alpha \|f\|_\infty, \,\,\, \alpha<1,
\end{split}
\end{equation}
where we have used \eqref{eq:S_con_chi^delta} for $t=\eta_\varepsilon$.
\end{proof}

\begin{proof}[Proof of Proposition \ref{pr:rescal_bolt_to_heat}]
Let $h_\varepsilon:\Lambda\times S^1 \times [0,T]$ the solution of \eqref{eq:rescaled_bolt}. We use the Hilbert expansion technique 
to prove that $h_\varepsilon$ converges to the solution of the heat equation \eqref{eq:heat_rescal_lambda}. We search $h_\varepsilon$ of the form
\begin{equation*}
h_\varepsilon(x,v,t)= h^{(0)}(x,t) + \sum_{k=1}^{+\infty} \left(\frac{1}{\eta_\varepsilon}\right)^k h^{(k)} (x,v,t),
\end{equation*}
with coefficient $h^{(k)}$ not depending on $\eta_\varepsilon$.
By imposing that $h_\varepsilon$ solves \eqref{eq:rescaled_bolt} and comparing terms of the same order we find the identity $\cc{L}h^{(0)}(x,t)=0$ and the chain of equations
\begin{displaymath}
\begin{split}
v\cdot \nabla_x h^{(0)}=&\cc{L} h^{(1)}
\\
\partial_t h^{(k)} + v \cdot h^{(k+1)} =& \cc{L} h^{(k+2)} 
\quad\quad \textrm{ for } k\geq 0.
\end{split}
\end{displaymath}
We impose that $h^{(0)}$  satisfy the same initial condition of the whole solution $h_\varepsilon$, namely
\begin{displaymath}
h^{(0)}(x,0) = \rho_0(x).
\end{displaymath}
Let us start from the first equation $(i)$ $v\cdot\nabla_x h^{(0)}=\cc{L} h^{(1)}$. Thanks to the Fredholm alternative and by proceeding as in the proof of Theorem \ref{th:g_to_rho}, 
we can solve equation $(i)$
 if and only if the left hand side belongs to 
 $(\rr{Ker} \mathcal{L})^{\perp} = \{ h\in L^2(S^1)\text{ such that }\int_{S^1} \mathrm{d} v \,\, h(v) = 0 \}$. 
Since $v\cdot \nabla_x h^{(0)}(x)$ is an odd function of $v$, it belongs to $(\rr{Ker} \mathcal{L})^{\perp}$. So we can invert the operator $\mathcal{L}$ finding
\begin{equation}\label{eq:dim1_inv_i}
h^{(1)}(x,v,t)= \cc{L}^{-1} (v\cdot\nabla_x h^{(0)}(x,t)) + \zeta^{(1)}(x,t).
\end{equation}
where $\zeta^{(1)}(x,t)$ is a function to be determined in the kernel of $\cc{L}$.
Recall that $\cc{L}^{-1}$ preserves the parity.

We integrate the second equation $(ii)$ $\partial_t h^{(0)}+ v\cdot \nabla_x h^{(1)}= \cc{L}h^{(2)}$ with respect to the uniform measure on the sphere $S^1$.
Thanks to the equation \eqref{eq:dim1_inv_i}  and the observations that $\int_{S^1} \ud v\, \cc{L} h^{(2)}=0$ and  $\int_{S^1} \ud v\, v\cdot \nabla_x \zeta^{(1)}(x,t)=0$,
 it holds
\begin{equation}
\frac{1}{2\pi}\int_{S^1} \ud v\,  \partial_t h^{(0)}(x,t) + v\cdot\nabla_x (\cc{L}^{-1} v \cdot \nabla_x h^{(0)}(x,t))=0.
\end{equation}
As in the proof of Theorem \ref{th:g_to_rho} defining $D_{i,j}=\frac{1}{2\pi} \int_{S^1} \rr{d} v \,v_i (-\cc{L}^{-1})v_j$, we find that the diffusion coefficient $D$ is given by the formula \eqref{eq:greenKubo}
\begin{equation*} 
D=\frac{1}{4\pi}\int_{S^1} \mathrm{d} v \,\, v \cdot (- \mathcal{L})^{-1} v
\end{equation*} 
so that
the heat equation for $h^{(0)}$ is
\begin{equation}\label{eq:heat_eq_h0}
\partial_t h^{(0)} - D \Delta_x h^{(0)}=0.
\end{equation}

$h_\varepsilon(x,v)$ has to satisfy the reflective boundary condition $h_\varepsilon(x,v',t) = h_\varepsilon(x,v,t)$ on $\partial {\Lambda}_E$. By imposing it on the first term $h^{(1)}(x,v,t) = h^{(1)}(x,v',t)$ for every $x\in \partial\Omega_E$, $v\cdot n <0$, we obtain
proceeding in the same way of the proof of Theorem \ref{th:g_to_rho} that $h^{(0)}(x,t)$ has to satisfy the Neumann boundary conditions $\partial_n h^{(0)}(x,t)=0$, for all $x \in \partial{\Lambda}_E$.

We have so shown that  the term $h^{(0)}(x,t)$ solves the problem
\begin{equation}\label{eq:pb_heat_h0}
\begin{cases}
\partial_t h^{(0)} - \Delta_x h^{(0)}=0\qquad \quad & x\in  {\Lambda}   \\
h^{(0)}(x,0)  = \rho_0(x)  \qquad & x\in \Lambda  
\\ \partial_n h^{(0)}(x,t)=0\qquad & x \in \partial{\Lambda}_E.
\end{cases}
\end{equation}
In particular $h^{(0)}(t)\in L^{\infty}(\Lambda\times S^1)$ for any $t\geq0$.

The equation \eqref{eq:heat_eq_h0} allow us to verify that when integrating the equation $(ii)$ the left hand side vanishes. It implies that we can invert operator $\cc{L}$ finding
\begin{equation}\label{eq:h2_invert_ii}
h^{(2)}(x,v,t)=\cc{L}^{-1} (\partial_t h^{(0)}(x,t) + v\cdot \nabla_x (\cc{L}^{-1} (v\cdot\nabla_x h^{(0)}(x,t)) )  + v\cdot\nabla_x \zeta^{(1)}(x,t) ) + \zeta^{(2)}(x,t),
\end{equation}
where $\zeta^{(2)}(x,t)$ is a function in $\rr{Ker} \,\cc{L}$.

Next equation is $(iii)$ $\partial_t h^{(1)}+v \cdot\nabla_x h^{(2)}=\cc{L}h^{(3)}$. When integrating it with respect to the uniform measure on $S^1$, we exploit the fact that the operator $\cc{L}^{-1}$ preserves the parity. So, substituting $h^{(1)}$ and $h^{(2)}$  with their expressions given by \eqref{eq:dim1_inv_i} and \eqref{eq:h2_invert_ii}, the only terms surviving give the equation for $\zeta^{(1)}$
\begin{equation}\label{eq:zeta^1_heat}
\partial_t \zeta^{(1)}(x,t) - D \Delta_x \zeta^{(1)}(x,t)=0 . 
\end{equation}
Since there are no restrictions on the choice of the initial condition for $\zeta^{(1)}$, we fix $\zeta^{(1)}(x,0)=0$. So $\zeta^{(1)}(x,t)\equiv 0$ for any $(x,t)$ and the expression for $h^{(1)}$
 reduces to \begin{displaymath}
 h^{(1)}(x,v,t)= \cc{L}^{-1}(v\cdot \nabla_x h^{(0)} (x,t)).
 \end{displaymath}
By the {Lemma \ref{le:L^-1}}  and the smoothness of $h^{(0)}$ we have
\begin{displaymath}
\sup_{t\in[0,T]} \| h^{(1)}(t) \|_\infty \leq C \sup_{t\in[0,T]} \| \nabla_x h^{(0)} (t)\|_\infty <+\infty.
\end{displaymath}

In the same way, by Lemma \ref{le:L^-1} and smoothness of $h^{(0)}$ it follows that the first term in the expression of $h^{(2)}$, i.e. $h_1^{(2)}=\cc{L}^{-1} (\partial_t h^{(0)}(x,t) + v\cdot \nabla_x (\cc{L}^{-1} (v\cdot\nabla_x h^{(0)}(x,t)) ))$, is in $L^\infty([0,T];L^\infty(\Lambda\times S^1))$, as well as its spatial derivatives. 

Observe now that the left hand side of equation $(iii)$ has null integral on $S^1$ due to \eqref{eq:zeta^1_heat}. By inverting $\cc{L}$ we obtain the formula for $h^{(3)}$ 
\begin{equation}
\begin{split}
h^{(3)}(x,v,t)=& \cc{L}^{-1}(\partial_t h^{(1)} + v\cdot \nabla_x h^{(2)}(x,v,t)) + \zeta^{(3)}(x,t)
\\=& \cc{L}^{-1} (\partial_t \cc{L}^{-1} (v\cdot\nabla_x h^{(0)}(x,t))   + v\cdot\nabla_x (h_1^{(2)} (x,v,t) + \zeta^{(2)}(x,t))) + \zeta^{(3)}(x,t),
\end{split}
\end{equation}
where $\zeta^{(3)} \in \rr{Ker}\,\cc{L}$.
We integrate now the equation $(iv)$  $\partial_t h^{(2)}+v\cdot\nabla_x h^{(3)}=\cc{L}h^{(4)}$ with respect to the uniform measure on $S^1$. We find the equation for $\zeta^{(2)}(x,t)$
\begin{equation}
\partial_t \zeta^{(2)} - D\Delta_x \zeta^{(2)}= S(x,t),
\end{equation}
where the source $S(x,t)$ is given by
\begin{equation*}
\begin{split}
S(x,t) =&-\frac{1}{2\pi} \int_{S^1} \ud v\, v\cdot \nabla_x \cc{L}^{-1}(\partial_t \cc{L}^{-1}(v\cdot\nabla_x h^{(0)}(x,t)))
\\ & -\frac{1}{2\pi}\int_{S^1} \ud v \, v\cdot\nabla_x\cc{L}^{-1}(v\cdot \nabla_x h_1^{(2)}(x,v,t))).
\end{split}
\end{equation*}
We consider as initial datum $\zeta^{(2)}(x,0)=0$, so we have $\zeta^{(2)}\in L^{\infty}([0,T];L^{\infty}(\Lambda))$ and its spatial derivative as well, since $S\in L^\infty([0,T];L^\infty(\Lambda))$.

We write the the expansion truncated at order $\eta_\varepsilon^{-2}$ for the solution:
\begin{equation}\label{eq:tronc_hilb_exp_heat}
h_\varepsilon(x,v,t)= h^{(0)}(x,t) + \frac{1}{\eta_\varepsilon}h^{(1)}(x,v,t) + \frac{1}{\eta_\varepsilon^2 } h^{(2)}(x,v,t) +\frac{1}{\eta_\varepsilon} R_{\eta_\varepsilon}(x,v,t).
\end{equation}
We have shown that $h^{(i)}(t)\in L^\infty(\Lambda\times S^1)$ for $i=0,1,2$. Now we have to prove that even,the remainder $R_{\eta_\varepsilon}$ is in $L^\infty$.

The remainder  $R_{\eta_\varepsilon}$ satisfies the equation
\begin{equation}\label{eq:eq_remainder_heat}
(\partial_t + \eta_\varepsilon v\cdot\nabla_x) R_{\eta_\varepsilon} = \eta_\varepsilon^2  \cc{L} R_{\eta_\varepsilon} - T_{\eta_\varepsilon}
\end{equation}
with initial condition
\begin{equation*}
R_{\eta_\varepsilon}(x,v,0)= -h^{(1)}(x,v,0) - \frac{1}{\eta_\varepsilon} h^{(2)}(x,v,0)
\end{equation*}
and boundary conditions
\begin{equation*}
R_{\eta_\varepsilon} (x,v',t)=R_{\eta_\varepsilon}(x,v,t)\qquad x\in\partial\Lambda_E , \, v\cdot n <0.
\end{equation*}
The term $T_{\eta_\varepsilon}$ on the left hand side of \eqref{eq:eq_remainder_heat} is $T_{\eta_\varepsilon}  = \partial_t h^{(1)} + \frac{1}{\eta_\varepsilon} \partial_t h^{(2)} + v\cdot\nabla_x h^{(2)}$.
So $T_{\eta_\varepsilon} \in L^\infty ([0,T] ; L^{\infty}(\Lambda\times S^1))$ and thanks to the smoothness hypothesis  on $\rho_0$ also the initial datum $R_{\eta_\varepsilon}(x,v,0)$ belongs to $L^\infty$.

By denoting by $S_{\eta_\varepsilon}(t)$ the semigroup associated to the generator $\eta_\varepsilon (v\cdot \nabla_x - \eta_\varepsilon \cc{L})$ with reflective boundary conditions on $\partial\Lambda_E$, the equation \eqref{eq:eq_remainder_heat} becomes 
\begin{equation*}
R_{\eta_\varepsilon} (t) =  S_{\eta_\varepsilon} (t) R_{\eta_\varepsilon}(0) + \int_0^t \ud s\, S_{\eta_\varepsilon} (t-s) T_{\eta_\varepsilon} (s).
\end{equation*}
By means of the series expansion found in \eqref{eq:S_con_chi^delta},  the solution 
can be written in the following way:
\begin{equation*}
\begin{split}
&R_{\eta_\varepsilon}(x,v,t) =  \sum_{m\geq 0} e^{-2\lambda {\eta_\varepsilon^2} t}  (\lambda \eta_\varepsilon)^m \int_0^{{\eta_\varepsilon}t} \!\!\mathrm{d}t_1
\ldots \int_0^{m-1} \!\! \mathrm{d} t_m 
 \int_{-1}^1 \!\!\mathrm{d} \delta_1 \ldots \int_{-1}^1 \!\! \mathrm{d} \delta_m   R_{{\eta_\varepsilon}}(0) ( \Phi^{-{\eta_\varepsilon}t}(x,v))
\\ +& \int_0^t \!\ud s  \sum_{m\geq 0} e^{-2\lambda {\eta_\varepsilon^2} (t-s)}  (\lambda \eta_\varepsilon)^m \!\int_0^{{\eta_\varepsilon}(t-s)}\!\!\! \!\!\mathrm{d}t_1
\ldots \int_0^{m-1} \!\! \mathrm{d} t_m 
 \int_{-1}^1 \!\!\mathrm{d} \delta_1 \ldots \int_{-1}^1 \!\! \mathrm{d} \delta_m   T_{{\eta_\varepsilon}}(0) ( \Phi^{-{\eta_\varepsilon}(t-s)}(x,v),s).
 \end{split}
\end{equation*}
Therefore we can estimate
\begin{equation*}
\sup_{t\in[0,T]} \| R_{{\eta_\varepsilon}}(t) \|_\infty \leq \| R_{\eta_\varepsilon}(0) \|_\infty + T \sup_{t\in[0,T]} \| T_{\eta_\varepsilon}(t) \|_\infty \leq C <+\infty.
\end{equation*}
So the remainder is uniformly bounded too. Hence from the estimates and \eqref{eq:tronc_hilb_exp_heat} it follows that $h_\varepsilon$ converges to $h^{(0)}$ in $L^\infty$ for $\eta_\varepsilon\to \infty$.

\end{proof}


\begin{acknowledgments}
We thank Daniele Andreucci for many useful discussions
on mixed boundary problems and for having suggested the reference \cite{Lad}. We are grateful to
Mario Pulvirenti for many illuminating discussions on the topic
of the paper and for having suggested the idea of studying the
residence time problem in the framework of Kinetic Theory.
\end{acknowledgments}


\begin{thebibliography}{100}
\bibitem{ABCKsjap2016}
G. Albi, M. Bongini, E. Cristiani,  D. Kalise.
\newblock Invisible control of self-organizing agents leaving unknown
  environments.
\newblock {\em SIAM Journal on Applied Mathematics}, 76(4):1683--1710, 2016.

\bibitem{AMAGTOKpre2012}
F.~Alonso-Marroquin, S.~I. Azeezullah, S.~A. Galindo-Torres, and L.~M.
  Olsen-Kettle.
\newblock Bottlenecks in granular flow: When does an obstacle increase the flow
  rate in an hourglass?
\newblock {\em Phys. Rev. E}, 85:020301, Feb 2012.

\bibitem{BNPP}
G. Basile, A. Nota, F. Pezzotti, M. Pulvirenti. 
\newblock{Derivation of the Fick's law for the Lorentz model in a low density regime.} 
\newblock {\em Commun. Math. Phys.}, 336, Issue 3, 1607-1636, 2015.


\bibitem{BDsiamr2011}
N. Bellomo, C. Dogbe.
\newblock On the modeling of traffic and crowds: A survey of models,
  speculations, and perspectives.
\newblock {\em SIAM Review}, 53(3):409--463, 2011.


\bibitem{BNWts2005}
D. Braess, A. Nagurney,  T. Wakolbinger.
\newblock On a paradox of traffic planning.
\newblock {\em Transportation Science}, 39(4):446--450, 2005.


\bibitem{CIP}
C. Cercignani, R. Illner, M. Pulvirenti.
\newblock {\em The Mathematical Theory of Dilute
Gases}.
\newblock { Applied Mathematical Sciences} 106, Springer-Verlag, New York, 1994.

\bibitem{CKMSpre2016}
E. N.~M. Cirillo, O. Krehel, A. Muntean, R. van Santen.
\newblock Lattice model of reduced jamming by a barrier.
\newblock {\em Phys. Rev. E}, 94:042115, Oct 2016.

\bibitem{CKMSSpA2016}
E.~N. M. Cirillo, O. Krehel, A. Muntean, R. van Santen, A.  Sengar.
\newblock Residence time estimates for asymmetric simple exclusion dynamics on
  strips.
\newblock {\em Physica A: Statistical Mechanics and its Applications}, 442 : 436--457, 2016.

\bibitem{CMpA2013}
E.~N. M. Cirillo, A. Muntean.
\newblock Dynamics of pedestrians in regions with no visibility— a lattice
  model without exclusion.
\newblock {\em Physica A: Statistical Mechanics and its Applications},
  392(17): 3578--3588, 2013.

\bibitem{CPamm2017}
E. Cristiani, D. Peri.
\newblock Handling obstacles in pedestrian simulations: Models and
  optimization.
\newblock {\em Applied Mathematical Modelling}, 45: 285--302, 2017.







\bibitem{ESMCBjcp2014}
A.~J. Ellery, M.~J. Simpson, S.~W. McCue,   R.~E. Baker.
\newblock Characterizing transport through a crowded environment with different
  obstacle sizes.
\newblock {\em The Journal of Chemical Physics}, 140(5):054108, 2014.

\bibitem{EDLR2003}
R.~Escobar and A.~De~La~Rosa.
\newblock {Architectural Design for the Survival Optimization of Panicking
  Fleeing Victims}.
\newblock In W.~Banzhaf, J.~Ziegler, T.~Christaller, P.~Dittrich, and J.T. Kim,
  editors, {\em Advances in Artificial Life, Proceedings of the 7th European
  Conference, ECAL, 2003, Dortmund, Germany, September 14--17, 2003,
  Proceedings. Lecture Notes in Computer Science, vol.\ 2801.}, pages 97--106,
  Berlin, 2003. Springer.



\bibitem{EP} 
R. Esposito, M. Pulvirenti. 
\newblock{From Particles to Fluids.} 
\newblock {\em Hand-Book of Mathematical Fluid Dynamics}, vol.III, pp.1-82. North-Holland, Amsterdam (2004).

\bibitem{Eva}
L. C. Evans.
\newblock {\em Partial Differential Equations}, 
\newblock Graduate Studies in Mathematics, Vol. 19
(Amer. Math. Soc., 1998)


\bibitem{FPvSpre2017}
B.~W. Fitzgerald, J.~T. Padding,  R. van Santen.
\newblock Simple diffusion hopping model with convection.
\newblock {\em Phys. Rev. E}, 95:013307, Jan 2017.





\bibitem{HFMV2011}
D.~Helbing, I.~Farkas, P.~Moln\`{a}r,  T.~Vicsek.
\newblock {Simulation of pedestrian crowds in normal and evacuation
  situations}.
\newblock In M.~Schreckenberg and S.~D. Sharma, editors, {\em Pedestrian and
  Evacuation Dynamics}, pages 21--58, Berlin, 2002. Springer.

\bibitem{Hrmp2001}
D. Helbing.
\newblock Traffic and related self-driven many-particle systems.
\newblock {\em Rev. Mod. Phys.}, 73:1067--1141, Dec 2001.

\bibitem{HBJWts2005}
D. Helbing, L. Buzna, A.  Johansson, T. Werner.
\newblock Self-organized pedestrian crowd dynamics: Experiments, simulations,
  and design solutions.
\newblock {\em Transportation Science}, 39(1):1--24, 2005.

\bibitem{HFVn2000}
D. Helbing, I. Farkas, T. Vicsek.
\newblock Simulating dynamical features of escape panic.
\newblock {\em Nature}, 407:487--490, Sep 2000.

\bibitem{HMFBepBpd2001}
D. Helbing, P. Moln\'ar, I.~J. Farkas, K. Bolay.
\newblock Self-organizing pedestrian movement.
\newblock {\em Environment and Planning B: Planning and Design},
  28(3):361--383, 2001.

\bibitem{HFrpp2013}
F.~H\"ofling and T.~Franosch.
\newblock Anomalous transport in the crowded world of biological cells.
\newblock {\em Reports on Progress in Physics}, 76(4), 2013.

\bibitem{Harfm2003}
R.~L. Hughes.
\newblock The flow of human crowds.
\newblock {\em Annual Review of Fluid Mechanics}, 35:169--182, 2003.



\bibitem{Lad}
O. A. Ladyzhenskaya.  
\newblock \emph{The Boundary Value Problems of Mathematical Physics}.
\newblock Springer-Verlag, Berlin, 1985.

  \bibitem{MN-Mer} 
  M. Matsumoto, T. Nishimura.
  \newblock{ Mersenne Twister: A 623-dimensionally equidistributed uniform pseudorandom number generator}. 
 \newblock{\em  ACM Trans. on Modeling and Computer Simulation} Vol. 8, No. 1, January pp. 3--30, 1998.
  
  \bibitem{MN-Test} 
  M. Matsumoto, T. Nishimura.
  \newblock {A Nonempirical Test on the Weight of Pseudorandom Number Generators}.
   in:\newblock{\em Monte Carlo and Quasi-Monte Carlo methods} 2000, pp. 381--395,  Ed. K.T. Fang, F.J.Hickernel, and H. Niederreiter, Springer-Verlag (2002).

\bibitem{MHSbj2017}
M.~A. Mour\~{a}o, J.~B. Hakim, S. Schnell.
\newblock Connecting the dots: The effects of macromolecular crowding on cell
  physiology.
\newblock {\em Biophysical Journal}, 107:2761--2766, Jun 2017.


\bibitem{Sbj1994}
M. J. Saxton.
\newblock Anomalous diffusion due to obstacles: a Monte Carlo study.
\newblock {\em Biophysical Journal}, 66: 394--401, Feb 1994.

\bibitem{TLPprl2001}
K. To, P.-Y. Lai,   H.~K. Pak.
\newblock Jamming of granular flow in a two-dimensional hopper.
\newblock {\em Phys. Rev. Lett.}, 86: 71--74, Jan 2001.

\bibitem{ZGMPPpre2005}
I. Zuriguel, A. Garcimart\'{\i}n, D. Maza, L.~A. Pugnaloni,   J.~M.   Pastor.
\newblock Jamming during the discharge of granular matter from a silo.
\newblock {\em Phys. Rev. E}, 71:051303, May 2005.

\bibitem{ZJGLAMprl2011}
I. Zuriguel, A. Janda, A. Garcimart\'{\i}n, C. Lozano, R.  Ar\'evalo,  D. Maza.
\newblock Silo clogging reduction by the presence of an obstacle.
\newblock {\em Phys. Rev. Lett.}, 107:278001, Dec 2011.



\end{thebibliography}


\end{document}